\newcommand{\rnote}[1]{\footnote{{\bf \color{red}Rocco}: {#1}}}
\newcommand{\snote}[1]{\footnote{{\bf \color{blue}Sandip}: {#1}}}
\newcommand{\xnote}[1]{\footnote{{\bf \color{red}Xi}: {#1}}}
\newtheorem*{rep@theorem}{\rep@title}
\newcommand{\newreptheorem}[2]{
\newenvironment{rep#1}[1]{
 \def\rep@title{#2 \ref{##1}}
 \begin{rep@theorem}\itshape}
 {\end{rep@theorem}}}
\theoremstyle{plain}
\newtheorem{property}[theorem]{Property}
\newcommand{\ignore}[1]{}
\def\colorful{1}
\newcommand{\blue}[1]{{{\color{blue}#1}}}
\newcommand{\red}[1]{{\color{red} {#1}}}
\newcommand{\gray}[1]{{\color{gray}{#1}}}
\newcommand{\blue}[1]{{{#1}}}
\newcommand{\red}[1]{{{#1}}}
\newcommand{\gray}[1]{{{#1}}}
\newtheorem*{theorem*}{Theorem}
\newtheorem*{noclaim*}{Claim}
\newcommand{\Bin}{\mathrm{Bin}}
\newcommand{\proj}{\mathrm{restrict}}
\newcommand{\Mix}{\mathrm{Mix}}
\newcommand{\Del}{\mathrm{Del}}
\newcommand{\B}{\mathrm{B}}
\def\SD{\mathsf{D}}
\def\SumD{\mathsf{SD}}
\renewcommand{\N}{\mathds{N}}
\def\deck{\mathsf{D}}
\def\sdeck{\mathsf{SD}}
\begin{document}

\title{Beyond trace reconstruction:\\
	Population recovery from the deletion channel}

\author{
Frank Ban\thanks{Supported by NSF award CCF-1819935}\\
UC Berkeley\\
fban@berkeley.edu
\and
Xi Chen\thanks{Supported by NSF awards CCF-1703925 and IIS-1838154.} \\
Columbia University\\
xichen@cs.columbia.edu
\and
Adam Freilich\thanks{Supported by NSF award CCF-1563155.} \\
Columbia University\\
freilich.am@gmail.com
\and
Rocco A. Servedio\thanks{Supported by NSF awards CCF-1563155, CCF-1814873 and IIS-1838154.} \\
Columbia University\\
rocco@cs.columbia.edu
\and
Sandip Sinha\thanks{Supported by NSF awards CCF-1563155, CCF-1420349, CCF-1617955, CCF-1740833, CCF-1421161,\newline CCF-1714818 and Simons Foundation (\#491119).}\\
Columbia University\\
sandip@cs.columbia.edu
}

\maketitle

\thispagestyle{empty}


\begin{abstract}
	\emph{Population recovery} is the problem of learning an unknown distribution over an unknown set of $n$-bit strings, given access to independent draws from the distribution that have been independently corrupted according to some noise channel. Recent work has intensively studied such problems both for the {bit-flip} noise channel and for the erasure noise channel.
	
	In this paper we initiate the study of population recovery under the \emph{deletion channel}, in which each bit $b$ is independently \emph{deleted} with some fixed probability and the surviving bits are concatenated and transmitted.  This is a far more challenging noise model than bit-flip~noise or erasure noise; indeed, even the simplest case in which the population is of size 1 (corresponding to a trivial probability distribution supported on a single string) corresponds to the \emph{trace reconstruction} problem, which is a challenging problem that has received much recent attention (see e.g.~\cite{DOS17,NP17,PZ17,HPP18,HHP18}).
		
	In this work we give algorithms and lower bounds for population recovery under the deletion channel when the population size is some value $\ell > 1$.  As our main sample complexity upper bound, we show that for any  population size $\ell = o(\log n / \log \log n)$, a population of $\ell$ strings from $\zo^n$ can be learned under deletion channel noise using $\smash{2^{n^{1/2 + o(1)}}}$ samples. On the lower bound side, we show that at least $n^{\Omega{(\ell)}}$ samples are required to perform population recovery under the deletion channel when the population size is $\ell$, for all $\ell \leq n^{1/2 - \epsilon}$.

Our upper bounds are obtained via a robust multivariate generalization of a polynomial-based analysis, due to Krasikov and Roddity \cite{KR97}, of how the $k$-deck of a bit-string uniquely identifies the string; this is a very different approach from recent algorithms for trace reconstruction (the $\ell=1$ case).
Our lower bounds build on moment-matching results of Roos~\cite{Roo00} and Daskalakis and Papadimitriou~\cite{DP15}.
\end{abstract}

\newpage

\newpage

\setcounter{page}{1}


\section{Introduction} \label{sec:intro}

In recent years the unsupervised learning problem of \emph{population recovery} has emerged as a significant focus of research attention in theoretical computer science \cite{DRWY12,MoitraSaks13,BIMP13,LZ15,DST16,WY16,PSW17,DOS17poprec}.  In the population recovery problem there is an unknown distribution $\bX$ over an unknown set of $n$-bit strings from $\{0,1\}^n$, and the learner's job is to reconstruct a high-accuracy approximation of $\bX$ given access to noisy independent draws from $\bX$ (so each data point which the learning algorithm receives is independently generated as follows:  an $n$-bit string is drawn from $\bX$ and corrupted by some noise process, and the result is provided to the learning algorithm).  The two noise models which have chiefly been studied to date are the \emph{bit-flip} noise model, in which each coordinate is independently flipped with some fixed probability, and the \emph{erasure} noise model, in which each coordinate is independently replaced by `?' with some fixed probability.  

Since the population recovery problem was first introduced in \cite{DRWY12,WY16}, a number~of positive results and lower bounds have been obtained for different variants of the problem.  In~one popular version of the problem \cite{PSW17,DOS17poprec,MoitraSaks13}, for a particular noise model (bit-flip~or erasure) the distribution $\bX$ may be an arbitrary distribution over $\zo^n$, and the goal is to learn the distribution $\bX$ with respect to $\ell_\infty$ distance (i.e.~to output a list of strings $x^1,\dots,x^r\in \{0,1\}^n$ and associated weights $\tilde{\bX}(x^i)$ 
such that $|\bX(x^i) - \tilde{\bX}(x^i)| \leq \eps$ for all $i \in [r]$ and $\bX(x) \leq \eps$ for all $x \in \zo^n \setminus \{x^1,\dots,x^r\}$).  In another well-studied version of the problem \cite{WY16,LZ15,DST16}, which is closely related to the problems we shall consider, the distribution $\bX$ is promised to be supported on at most $\ell$ strings in $\zo^n$ (i.e.~the ``population size'' is promised to be at most $\ell$), and the goal is to output a hypothesis distribution $\tilde{\bX}$ over $\zo^n$ which has total variation distance at most $\eps$ from $\bX$. Significant progress has been made on determining the sample complexity of population recovery for both of these variants under the bit-flip and erasure noise models; we refer the interested reader to \cite{DST16,PSW17,DOS17poprec} for the current state of the art.

\medskip \noindent
{\bf This work: Population recovery from the deletion channel and its relation to trace reconstruction.}  In both the bit-flip noise model and the erasure noise model, all of the challenge in the population recovery problem stems from the fact that given a noisy draw from $\bX$ it is \emph{a priori} not clear which element of $\bX$'s support was corrupted by noise to produce the noisy draw.  Putting it another way, if the population size is promised to be $\ell=1$, then under either of these two noise models it is trivially easy to learn 
  a single unknown string 
  from noisy examples.

In this work we study population recovery under the \emph{deletion} noise model, which is  far more challenging to handle than either bit-flip noise or erasure noise.  The deletion channel is defined as follows:  when a string $x$ is passed through the deletion channel with deletion parameter $\delta$, each coordinate $x_i$ is independently deleted with probability $\delta$, the surviving coordinates are concatenated, and the resulting string (of length $\boldn' \leq n$, where $\boldn'$ is distributed as $\Bin(n,1-\delta)$) is the output of the noise process.  Intuitively, the deletion channel is challenging because given a received word obtained by passing $x$ through the $\delta$-deletion channel (often referred to as a \emph{trace} of $x$, and denoted by $\bz \leftarrow \Del_\delta(x)$), it is not clear which coordinate of $x$ gave rise to which coordinate of~$\bz$.  Indeed, in contrast with the bit-flip and erasure noise models, even if the population size is guaranteed to be $\ell=1$, the problem of recovering a single unknown string from  independent traces is a well-known and challenging open problem, known as the \emph{trace reconstruction problem} \cite{Lev01a,Lev01b,BKKM04,KM05,HMPW08,VS08,MPV14,DOS17,NP17,PZ17,HPP18,HHP18}.

 There are several motivations for the study of population recovery under the deletion noise model. One motivation is the considerable recent research interest both in the trace reconstruction problem (the $\ell=1$ case of population recovery under the deletion channel) and in population recovery problems under bit-flip and erasure models.  Further motivation comes from potential relevance of the deletion channel population recovery problem both to recovery problems in computational biology and to other topics such as DNA data storage.  Regarding biological recovery problems, considering population recovery (the $\ell>1$ case) rather than trace reconstruction (the $\ell=1$ case) relaxes the potentially unrealistic assumption that all of the received samples (of a protein sequence, DNA sequence, etc.) are derived from a single unknown target sequence rather than from multiple unknown sequences. Heuristic algorithms for population recovery-type problems have also been applied to DNA storage \cite{organick2018random}. In these settings, each string in the population comes from a DNA sequence and the noisy channel can inflict a variety of errors including bit-flips and deletions. 

Thus,  the authors feel that the time is ripe {for a theoretical study of} population recovery under the challenging deletion model.  In this paper we initiate such a study, obtaining sample complexity upper and lower bounds when the population is of size $\ell > 1.$  Before describing our results for populations of size $\ell$ (equivalently, target distributions supported on at most $\ell$ strings), we first recall known upper and lower bounds for the trace reconstruction problem ($\ell=1$) below.

\medskip
\noindent {\bf Known bounds on trace reconstruction.} The trace reconstruction problem was raised more than fifteen years ago \cite{Lev01a,Lev01b,BKKM04}, though in fact some variants of the problem~go back at least to the 1970s \cite{Kalashnik73}.  The first algorithm that provably succeeds with high~\mbox{probability} in reconstructing an arbitrary $\smash{x \in \zo^n}$ using subexponentially many traces is due to Mitzenmacher et al. \hspace{-0.08cm}\cite{HMPW08}, who showed that $\smash{2^{\tilde{O}(\sqrt{n})}}$ many traces suffice for any constant deletion~rate 
   $\delta$ bounded away from $1$. This~result was  improved in recent simultaneous and independent works of De et al.~\cite{DOS17} and Nazarov and Peres~\cite{NP17}; these papers each showed that for any constant $ \delta $ bounded away from $1$, at most $\smash{2^{O(n^{1/3})}}$ traces suffice to reconstruct any $x \in \zo^n$.\footnote{Hartung, Holden and Peres \cite{HHP18} have recently extended this result to certain more general regimes where there can be different deletion probabilities for different coordinates and symbols.}

Due to the seeming difficulty of the worst-case trace reconstruction problem (reconstructing an arbitrary $x \in \zo^n$), an average-case version of the problem (reconstructing a randomly chosen string $x \in \zo^n$), which turns out to be significantly easier in terms of sample complexity, has also received considerable attention.
A number of early works \cite{BKKM04,KM05,VS08} gave efficient algorithms that succeed for trace reconstruction of almost all $x \in \zo^n$ when the deletion rate $\delta$ is sufficiently low ($o_n(1)$ as a function of $n$). In \cite{HMPW08} Mitzenmacher et al.~gave an algorithm which uses $\poly(n)$ traces to perform average-case trace reconstruction when the deletion rate $\delta$ is at most some sufficiently small constant.  Recently the best results on average-case trace reconstruction have been significantly strengthened in works of Peres and Zhai \cite{PZ17} and Holden, Pemantle and Peres \cite{HPP18} which build on the worst-case trace reconstruction results of \cite{DOS17,NP17}.  The latter of these papers \cite{HPP18} gives an algorithm which uses $\smash{\exp((\log n)^{1/3})}$ traces to reconstruct a random $x\in \{0,1\}^n$ when  the deletion rate is any constant bounded away from 1.

In terms of lower bounds, it is easy to see that if the deletion rate $\delta$ is at least some \mbox{positive} constant, then until $\Omega(\log n)$ draws have been received there will be some bits of the target string $x$ about which no information has been received.  Improving on this simple $\Omega(\log n)$ lower bound, McGregor et al.~\cite{MPV14} established a sample complexity lower bound of $\Omega(n)$ traces for any constant deletion rate. This was recently improved by Holden and Lyons \cite{HL18} 
  to 
  $\tilde{\Omega}(n^{5/4})$. 

Summarizing, for any constant deletion probability $0 < \delta < 1$ there is currently an exponential gap between the best lower bound of $\smash{\tilde{\Omega}(n^{5/4})}$ samples and the best upper bound of $\smash{2^{O(n^{1/3})}}$ samples for trace reconstruction of an arbitrary string $x \in \zo^n.$

\subsection{Our results}

\noindent {\bf Positive result.}\ignore{ \gray{We give two main positive results which provide sample complexity upper bounds on incomparable variants of the deletion-channel population recovery problem.}}
As our main positive result, we obtain an algorithm which learns any unknown distribution $\bX$ 
supported on at most $\ell$ strings under the deletion channel.  For any constant $\ell$~(and in fact even for $\ell$ as large as 
$o(\log n/\log\log n)$, 
  its sample complexity is exponential in $\smash{{n^{1/2 + o(1)}}}$. In more detail, our main positive result is the following:

\begin{theorem} [Learning an arbitrary mixture of $\ell$ strings under the deletion channel]
\label{thm:positive}
There is an algorithm with the following performance guarantee:  Let $\bX$ be an arbitrary distribution over at most $\ell$ strings in $\zo^n$.  For any deletion rate $0 < \delta < 1$ and any accuracy parameter $\eps$, 
if the algorithm is given access to independent draws from $\bX$ that are independently corrupted with deletion noise at rate $\delta$, then the algorithm uses 
 \[
\frac{1}{\eps^2}\cdot\left({\frac 2 {1-\delta}}\right)^{ \sqrt{n}\hspace{0.04cm} \cdot 
\hspace{0.04cm}(\log n)^{O(\ell)} }
\] 
 many samples and with probability at least $0.99$ outputs a hypothesis $\tilde{\bX}$ which is supported over at most $\ell$ strings and has total variation distance at most $\eps$ from the unknown target distribution $\bX$.
\end{theorem}

\ignore{

}

It is easy to see that if the target distribution is promised to be uniform over (a multi-set~of) at most $\ell$ strings, then the algorithm of Theorem~\ref{thm:positive} can be used to exactly reconstruct the unknown multi-set.
As we explain in Section~\ref{sec:techniques}, while Theorem~\ref{thm:positive} extends prior results on trace reconstruction (the $\ell=1$ case), it is proved using very different techniques from recent works \cite{HMPW08,DOS17,NP17,PZ17,HPP18,HHP18} on trace reconstruction.  

We note that for deletion rates $\delta$ that are bounded away from 1 by a constant, the $\smash{2^{O(n^{1/3})}}$ sample complexity bounds of \cite{DOS17,NP17} for trace reconstruction are better than the $\ell=1$ case of our result. However, our bounds apply even if the deletion rate $\delta$ is very close to 1;~{in~particular, \cite{DOS17,NP17} give no results for very high deletion rates $\delta=1-o(1/\sqrt{n})$, while Theorem~\ref{thm:positive} gives a $\smash{2^{\tilde{O}(\sqrt{n})}}$ bound for  $\delta=1-1/2^{\polylog(n)}$ and a $2^{o(n)}$ bound even for $\delta$ as large as $\smash{1-1/2^{\sqrt{n}/\polylog(n)}}.$ Of course, the main feature of Theorem~\ref{thm:positive} is that it applies when $\ell > 1$ (unlike \cite{DOS17,NP17}).}\medskip

\ignore{
We note that for deletion rates $\delta$ that are bounded away from 1 by a constant, the $\smash{2^{O(n^{1/3})}}$ sample complexity bounds of \cite{DOS17,NP17} for trace reconstruction are better than the $\ell=1$ case of our result. However, our bounds apply even if the deletion rate $\delta$ is very close to 1, in particular in regimes where the \cite{DOS17,NP17} bounds do not apply.  In more detail, for $1/2 \geq 1-\delta \geq \Theta(1/\sqrt{n})$
\cite{DOS17} gives a $\smash{2^{O((n/(1-\delta))^{1/3})}}$ upper bound on sample complexity; this essentially matches our $\ell=1$ bound when $\delta=1-\Theta(1/\sqrt{n})$. But while \cite{DOS17,NP17} give no results for very high deletion rates $\delta=1-o(1/\sqrt{n})$, Theorem~\ref{thm:positive} gives a $\smash{2^{\tilde{O}(\sqrt{n})}}$ bound for any $\delta=1-1/2^{\polylog(n)}$, and indeed~a $2^{o(n)}$ bound even for $\delta$ as large as $\smash{1-1/2^{\sqrt{n}/\polylog(n)}}.$ Of course, the main feature of Theorem~\ref{thm:positive} is that it applies when $\ell > 1$ (unlike the \cite{DOS17,NP17} results).
}

\medskip
\noindent {\bf Negative result.}
Complementing the sample complexity upper bound, we obtain a lower bound on the sample complexity of population recovery.  Our lower bound shows that for a wide range of values of $\ell$, at least $n^{\Omega(\ell)}$ samples are required when the population is of size at most $\ell$.  An informal version of our lower bound is as follows (see Theorem~\ref{thm:worst-case-lower} in Section~\ref{sec:worst-case-lower} for a detailed statement):

\begin{theorem} [Sample complexity lower bound, informal statement] \label{thm:simplified-worst-case-lower} Let  $0<\delta<1$ be any constant deletion probability and suppose that $A$ is an algorithm which, when run on  samples drawn from~the $\delta$-deletion channel over an arbitrary distribution $\bX$ supported over at most $\ell \leq n^{0.499}$ many strings, with probability at least $0.51$ outputs a hypothesis distribution $\tilde{\bX}$ that has total variation distance~at most $0.49$ from the unknown target distribution $\bX$. Then $A$ must use $n^{\Omega(\ell)}$ many samples.
\end{theorem}

\section{Our techniques} \label{sec:techniques}

As noted earlier, our positive result (Theorem~\ref{thm:positive}) gives a sample complexity upper bound for the original $(\ell=1)$ trace reconstruction problem as a special case.  We remark that both of the recent $\smash{2^{O(n^{1/3})}}$ sample complexity upper bounds for the trace reconstruction problem \cite{DOS17,NP17}, as well as the earlier work of \cite{HMPW08}, employed essentially the same algorithmic approach, which is referred to in \cite{DOS17} as a ``mean-based algorithm.''  At a high level, mean-based algorithms use their samples (traces) only to compute empirical estimates of the $n$ expectations\footnote{In this context, the original unknown target string $x$ is viewed as belonging to $\bn$, and a trace $z$ obtained from $\Del_\delta(x)$ is viewed as a string in $\smash{\bits^{n'}}$ for some $n' \leq n$ with $n -n'$ zeros appended to the end.
Throughout the paper, we use $[0:n-1]=\{0,\ldots,n-1\}$ to index entries of a string of length $n$.}
\begin{equation}
\label{eq:means}
\E_{\bz \leftarrow \Del_\delta(x)}[\bz_0],\ \cdots\hspace{0.05cm}, \E_{\bz \leftarrow \Del_\delta(x)}[\bz_{n-1}]
\end{equation}
corresponding to the coordinate means of the received traces; they then only use those $n$ estimates to reconstruct the unknown target string $x$.  Both of the algorithms in \cite{DOS17,NP17}, as well~as the algorithm from \cite{HMPW08} for trace reconstruction from an arbitrary string $x$, are mean-based algorithms.  (Both \cite{DOS17} and \cite{NP17} show that their sample complexity upper bounds are essentially best possible for any mean-based trace reconstruction algorithm.)

While mean-based algorithms have led to state-of-the-art results for trace reconstruction of~a single string, this approach breaks down even for the simplest non-trivial cases of population recovery under the deletion channel.  Indeed, even when $\ell=2$ and the unknown distribution $\bX$ is promised to be uniform over two strings, it is easy to see that 
the coordinate means do not provide enough information to recover $\bX$. For example, if $(x^1,x^2)$ and $(y^1,y^2)$ are two pairs of strings whose sums (as vectors in $\R^n$) $x^1 + x^2$ and $y^1 + y^2$ are equal (such as $x^1=0^n,$ $x^2 = 1^n$, $y^1 = 0^{n/2}1^{n/2},$ $y^2 = 1^{n/2}0^{n/2}$), it is easy to see that 
the coordinate means of received traces will match perfectly: 
\[
\Ex_{\bj \in \{1,2\}}
\Ex_{\bz \leftarrow \Del_\delta(x^{\bj})}
[\bz_i]
=
\Ex_{\bj \in \{1,2\}}
\Ex_{\bz \leftarrow \Del_\delta(y^{\bj})}
[\bz_i],\quad\text{for every $i\in \{0,\ldots,n-1\}$.}
\]
Thus the mean-based approach of \cite{HMPW08,DOS17,NP17} does not suffice for even the simplest version of the population recovery problem when $\ell=2$. Indeed, our sample complexity upper bounds are obtained using a completely different approach, which we explain below.
	
\subsection{Warm-up:  A different approach to trace reconstruction (the $\ell=1$ case)}

As a warm-up to our main results, we first give a high-level explanation of how our approach can be used to obtain a simple $\smash{2^{\tilde{O}(\sqrt{n})}}$-sample algorithm for the trace reconstruction problem.  While this is a higher sample complexity than the state-of-the-art mean-based approach of \cite{DOS17,NP17} (though our approach does better for very high deletion rates, as noted earlier), our approach has the crucial advantage that it can be adapted to go beyond the $\ell=1$ case, whereas the mean-based approach cannot handle $\ell>1$ as described above.

In a nutshell, the essence of our approach is to work with \emph{subsequence frequencies} in the \emph{original string $x$} (in contrast, note that the mean-based approach uses \emph{single-coordinate frequencies} in the \emph{received traces}).   To explain further we introduce some useful terminology:  the \emph{$k$-deck} of a string $x \in$  $\zo^n$, denoted $\deck_k(x)$, is the multi-set of all ${n \choose k}$ subsequences of $x$ with length exactly $k$.  Thus, the $k$-deck encapsulates all frequency information about length-$k$ subsequences of $x$.  

A question that arises naturally in the combinatorics of words is the following:  what is the smallest value of $k$ (as a function of $n$) so that for every string $x \in \zo^n$, the $k$-deck of $x$ uniquely identifies $x$?  Despite significant investigation dating back to the 1970s \cite{Kalashnik73}, this basic quantity is still poorly understood.  Improving on earlier $k \leq n/2$ bounds of Kalashnik \cite{Kalashnik73} and Manvel et al.~\cite{MMSSS91} and a simultaneous $k=O(\sqrt{n \log n})$ bound of Scott \cite{Scott97}, Krasikov and Roddity \cite{KR97}  showed that $k=O(\sqrt{n})$ suffices.  On the lower bounds side, the best lower bound known is $\smash{k = 2^{\Omega (\sqrt{\log n} )}}$, due to Dud\'{i}k and Schulman~\cite{DS03} (improving on earlier $k=\Omega(\log n)$ lower bounds of \cite{MMSSS91} and \cite{CK97}).

The relevance of upper bounds on $k$ to the trace reconstruction problem is intuitively clear, and indeed, McGregor et al.~\cite{MPV14} observed that if the deletion rate $\delta$ is at most $1 - c \sqrt{({\log n})/ n}$, then it is trivially easy to extract a random length-$O(\sqrt{n \log n})$ subsequence of $x$ from a typical trace of $x$.  Combining this with the $k=O(\sqrt{n \log n})$ upper bound of Scott \cite{Scott97} and a straightforward sampling-based procedure (which estimates the frequency of each string in $\zo^k$  to high enough accuracy to determine its exact multiplicity in the $k$-deck), they obtained an information-theoretic sample complexity upper bound on trace reconstruction: for $\delta \leq 1 - c \sqrt{({\log n})/ n}$, at most $\smash{n^{O(\sqrt{n\log n})}}$ 
traces suffice to reconstruct any $x \in \zo^n$ with high probability.

As {an initial observation}, we slightly strengthen the \cite{MPV14} result by showing that for \emph{any} value of $\delta < 1$, an algorithm which combines sampling and dynamic programming can exactly infer the $k$-deck of an unknown string $x\in \{0,1\}^n$ with high probability using $\left({ n/( {1-\delta})}\right)^{O(k)}$ traces from $\Del_\delta(x).$  (See {Theorem~\ref{thm:traces-to-deck}} for a detailed statement and proof of a more general version of this result.)   Combining this with the \cite{KR97} upper bound $k=O(\sqrt{n})$, we get that  any string $x$ can be reconstructed from $\delta$-deletion noise using $\smash{\left( n /({1-\delta})\right)^{O(\sqrt{n})}}$ samples.

The above-outlined approach to trace reconstruction (the $\ell=1$ case of population recovery) is the starting point for our {main positive result}, Theorem~\ref{thm:positive}. In the next subsection we give a~high-level description of some of the challenges that arise in {dealing with}\ignore{extending this approach to } multiple strings and how this work overcomes them.

\subsection{Ingredients in the proof of Theorem~\ref{thm:positive}}

Recall that in the setting of Theorem~\ref{thm:positive} the unknown $\bX$ is an arbitrary distribution supported on at most $\ell$ strings $x^1,\dots,x^\ell$ in $\zo^n$. Viewing $\bX$ as a mixture of individual strings, there is a natural notion of the $k$-deck of $\bX$, which we denote by $\deck_k(\bX)$ and which is the weighted multi-set corresponding to the $\bX$-mixture of the decks $\deck_k(x^1),\dots,\deck_k(x^\ell).$\footnote{
	By a weighted-multiset we mean a multiset in which each element has a weight.
Alternatively, one can interpret (after normalization) $\deck_k(x)$ as a probability distribution
  over the $2^k$ strings in $\{0,1\}^k$ and in this case, $\deck_k(\bX)$ can be viewed as
  a probability distribution that is the $\bX$-mixture of $\deck_k(x^1),\dots,\deck_k(x^{\ell})$.}
As a result, Theorem~\ref{thm:positive} will follow~if we can show the following: if two distributions $\bX,\bY$ over $\zo^n$ (each supported on at most $\ell$ strings) have $\dtv(\bX,\bY)>\eps$, then for a not-too-large value of $k$, the $k$-decks $\deck_k(\bX)$ and $\deck_k(\bY)$ (note that these are two weighted multi-sets of strings in $\zo^k$) must be ``noticeably different.''  This is established in Lemma~\ref{maintechnical}, which is the technical heart of our upper bound.

To explain our proof of Lemma~\ref{maintechnical} it is useful to revisit the $\ell=1$ setting; the analogous (and much easier to prove) statement in this context is that given any two strings $x \neq y \in \zo^n$,  the $k$-decks $\deck_k(x)$ and $\deck_k(y)$ are not identical when $ {k \geq C \sqrt{n}}$  for some large enough constant $C$.  This is the main result of \cite{KR97} (and a similar statement, with a slightly weaker quantitative bound on $k$, is also proved in \cite{Scott97}).  Since the $k$-deck in and of itself is somewhat difficult to work with (being a multi-set over $\zo^k$), both \cite{KR97} and \cite{Scott97} work instead with the \emph{summed $k$-deck}, which we denote by $\sdeck_k(x)$ and which is simply the vector in $\N^k$ obtained by summing all ${n \choose k}$ elements of the $k$-deck $\deck_k(x)$ (recall that each element of $\deck_k(x)$ is a vector in $\zo^k$). Both \cite{KR97} and \cite{Scott97} actually show that for a suitable value of $k$, the \emph{summed} $k$-deck $\sdeck_k(x)$ uniquely identifies $x$ among all strings in $\zo^n$.  (Both papers also observe that by a simple counting argument, the smallest such $k$ is at least $\tilde{\Omega}(\sqrt{n}).$)  The \cite{KR97} proof reduces the analysis of the summed $k$-deck to an extremal problem about univariate polynomials. The key ingredient of their proof is the following result about univariate polynomials, which was established in \cite{BEK99} in their work on the Prouhet-Tarry-Escott problem:  
\begin{flushleft}
\begin{quote}Given any nonzero vector $\delta \in \{-1,0,1\}^n$, there is a univariate polynomial $p$ of degree $O(\sqrt{n})$ such that 
\begin{equation} \label{eq:PTE}
\sum_{0 \leq i < n} \delta_i \cdot p(i) \neq 0. \tag{$\dagger$}
\end{equation}
\end{quote}
\end{flushleft}
Setting $\delta=x-y\ne 0$, to finish the proof of $\sdeck_k(x)\ne \sdeck_k(y)$ when $x\ne y$ and $k\ge C\sqrt{n}$,
  \cite{KR97} shows that 
  choosing $k$ to be $\deg(p)+1$, the inequality (\ref{eq:PTE}) implies that $\sdeck_k(x)\ne \sdeck_k(y)$.

Returning to our $\ell$-string 
  setting, we remark that several challenges arise which are~not present in~the one-string setting.  To highlight one of these, due to the difficulty of analyzing the entire $k$-deck of $\bX$ it is natural to try to work with the summed $k$-deck $\sdeck_k(\bX)$ (a nonnegative vector in $\R^k$), which is obtained by
  summing all elements of the weighted multi-set $\deck_k(\bX)$. Indeed~it can be shown via a rather straightforward extension of the \cite{KR97} analysis that, when $\bX$ is uniform over $x^1,\ldots,x^\ell$, the summed $k$-deck 
  with $k= O(\sqrt{n} \log \ell)$ suffices to exactly reconstruct the \emph{sum} $x^1 + \cdots + x^\ell$ (a vector in $\N^n$).
But even for uniform distributions, a difficulty which arises is that the summed $k$-deck (even with $k=n$) cannot 
  distinguish between two 
  uniform distributions over $x^1,\dots,x^\ell$ versus $y^1,\dots,y^\ell$ that have the same coordinate-wise sums, i.e. that satisfy $x^1 + \cdots + x^\ell = y^1 + \cdots + y^\ell$.\footnote{This is conceptually similar to the inability of mean-based algorithms to handle multiple strings noted earlier.}  Indeed, considering the same example as earlier, in which $\ell=2$ and~$x^1=0^n,\ x^2 = 1^n,y^1 = 0^{n/2}1^{n/2}$ and $y^2 = 1^{n/2}0^{n/2}$, the summed $k$-deck is $({n \choose k},\dots,{n \choose k})/2 \in \R^k$
  in both cases.

At a high level our Lemma \ref{maintechnical} can be viewed as a \emph{robust} generalization of the \cite{KR97} result.  A key technical ingredient in its proof is a robust generalization of the \cite{BEK99} result to \emph{multivariate} polynomials.  (The summed $k$-deck corresponds to univariate polynomials, so at a high level our analysis involving multivariate polynomials can be viewed as how we get around the obstacle noted in the previous paragraph.)  
The proof of Lemma \ref{maintechnical} consists of three steps which we outline below. 

The first conceptual step of our argument is to show that if two support-$\ell$ distributions $\bX$ and $\bY$ over $\zo^n$  satisfy $\dtv(\bX,\bY) \ge \eps$, then there exists a subset $T\subset [0:n-1]$ of size {$d = \lfloor \log(2 \ell) \rfloor$} such that $\bX$ and $\bY$ ``differ significantly'' just on the coordinates in $T$. In particular, there is some $|T|$-bit string $c$ such that $\Pr_{\bx\sim \bX}[\hspace{0.03cm}\bx_T=c\hspace{0.03cm}]$ is significantly different from $\Pr_{\by\sim \bY} [\hspace{0.03cm}\by_T=c\hspace{0.03cm}]$,
  where we use $x_T$ to denote the restriction of a string $x\in \{0,1\}^n$ on coordinates in $T$.
(This is made precise in Lemma~\ref{large-TV-dist-implies-large-delta-max}.)
Let $\smash{\Delta:{[0:n-1]\choose d}\rightarrow \R}$ be the following function over size-$d$ subsets 
  of $[0:n-1]$:
\begin{equation}\label{DEFINEDELTA}
\Delta(S)=\Pr_{\bx\sim \bX}\big[\bx_S=c\big]-\Pr_{\by\sim \bY} \big[\by_S=c\big].
\end{equation}
Then Lemma \ref{large-TV-dist-implies-large-delta-max} implies that $\|\Delta\|_\infty$ is not too small.

The second (and central) conceptual step of our argument can be viewed as a robust generalization of the \cite{BEK99} result to $d$-variate polynomials, as alluded to earlier.
The key result giving this step, Lemma~\ref{technicallemma1}, roughly speaking states the following:

\begin{flushleft}\begin{quote}
Given the $\Delta$ as defined in (\ref{DEFINEDELTA}),
  there is 
a $d$-variate polynomial $\phi$ of not-too-high degree (roughly $\sqrt{n}$) such that\footnote{The reader who has peeked ahead to the statement of Lemma~\ref{technicallemma1} may have noticed that the lemma statement also bounds the magnitudes of coefficients of the polynomial $\phi$.  This is done for technical reasons, 
  and we skip these technical details in the high-level description here.}
\begin{equation}
\label{eq:ourPTE}
\left|\sum_{0\le t_1<\cdots<t_d<n} \phi(t_1,\ldots,t_d)\cdot \Delta\big(\{t_1,\ldots,t_d\}\big)\right|
\tag{$\dagger \dagger$}
\end{equation}
can be lower bounded in terms of $\|\Delta\|_\infty$, which is not too small by Lemma \ref{large-TV-dist-implies-large-delta-max}.
\end{quote}\end{flushleft}

The third conceptual step relates (\ref{eq:ourPTE}) to 
  the distance between the $k$-decks $\deck_k(\bX)$ and $\deck_k(\bY)$, by showing that
  if (\ref{eq:ourPTE}) is not too small then $\deck_k(\bX)$ and $\deck_k(\bY)$ must be
  ``noticeably different'' when $k$ is chosen to be $\deg(\phi)+d$. 
We refer the reader to Lemma~\ref{technicallemma2}.
At a high level this is analogous to, but technically more involved than, the \cite{KR97} proof that the inequality 
  (\ref{eq:PTE}) for $\delta=x-y$~implies~that $\sdeck_k(x)\ne \sdeck_k(y)$ with $k=\deg(p)+1$.
Lemma \ref{maintechnical} then follows by combining all three steps,~i.e. $\dtv(\bX,\bY)$ being large implies that $\deck_k(\bX)$ is 
  ``noticeably different'' from $\deck_k(\bY)$ for $k$ that is roughly $\sqrt{n}$.
Below we outline the main ingredients needed in the second step.

In the search for a low-degree polynomial $\phi$ such that the sum in (\ref{eq:ourPTE}) has large magnitude,
  it is natural to define $\phi(t_1,\ldots,t_d)$ by first projecting $(t_1,\ldots,t_d)$ to
  a line and then applying a univariate polynomial similar to the $p$ used in (\ref{eq:PTE}).
To make this more precise, we will look for $\phi$ of the form
\begin{equation}\label{hehehehehe}
\phi(t_1,\ldots,t_d)=f\big(w_1t_1+\cdots+w_dt_d\big),
\end{equation}
where $w_1,\ldots,w_d$ are positive integers (so the line is along the direction $w=(w_1,\ldots,w_d)$)
  and $f$ is a low-degree univariate polynomial to be specified later.
With (\ref{hehehehehe}), we rewrite  the sum in (\ref{eq:ourPTE}) as
\begin{equation}\label{heheeqhehe}
\sum_{0\le t_1<\cdots<t_d<n} \phi(t_1,\ldots,t_d)\cdot \Delta\big(\{t_1,\ldots,t_d\}\big)
=\sum_{b=0}^{nd\hspace{0.05cm}\|w\|_\infty} f(b)\cdot \Gamma(b),
\end{equation}
where $\Gamma(b)$ is the sum of $\Delta(T)$ over all $d$-subsets $T=\{t_1,\ldots,t_d\}$
  such that $0 \leq t_1<\cdots<t_d < n$ and $b=w_1t_1+\cdots$ $+ w_dt_d$.
Comparing (\ref{heheeqhehe}) with (\ref{eq:PTE}), our goal would follow directly from the \cite{BEK99} result by choosing
  $f$ to be $p$ if $\Gamma$ 
  is nonzero and takes values in $\{-1,0,1\}$ (or even $\{-c,0,c\}$ for some not too small $c>0$).
However, the main difficulty we 
encounter is that $\Gamma$ is much more complex than the $\{-1,0,1\}^n$ vectors that can be handled by
  techniques of \cite{BEK99}; for example, $\Gamma$ in general 
  may contain a large number (depending on $n$) of distinct values. 
   
There are three ingredients we use in choosing $w_1,\ldots,w_d$ and $f$ to
  overcome this difficulty: 
\begin{flushleft}\begin{enumerate}
\item[(A)] We first observe that $\Delta$ has a combinatorial 
    ``rectangular'' structure, which implies that the support of $\Delta$ can be partitioned into 
    a small number of sets $\calS_1,\calS_2,\ldots$ (each element of $\calS_a$ is a size-$d$ subset of $[0:n-1]$)
    such that all $T\in \calS_a$ share the same value of $\Delta(T)$ and 
    there is a set $T_a\in \calS_a$ that is \emph{dominated}\footnote{Given two size-$d$ 
    subsets $S = \{s_1,\ldots,s_d\}$ and $T = \{t_1,\ldots,t_d\}$ of $[0 : n-1]$ with $s_1 <
\cdots < s_d$ and $t_1 <\cdots < t_d$, we say that $S$ is dominated by $T$ if $s_i\le t_i$ for all $i$.}
    by every $T\in \calS_a$.
We refer to $T_a$ as the \emph{anchor set} of $\calS_a$. 
This is made precise in Lemma \ref{coverlemma}. 
Moreover, we show in Lemma \ref{coverlemma2} that the collections $\calS_a$ 
	can be divided into
  an \emph{even smaller} number of groups such that, for any $\calS_a,\calS_{a'}$ that belong to the same group, the ratio of $|\Delta(T_a)|$ and $|\Delta(T_{a'})|$ is bounded from above by a small number.
\vspace{0.2cm}
\item[(B)] Next we observe that when $w_1,\ldots,w_d$ are drawn from a suitable distribution,\ignore{ and independently from $\{1,\ldots,L\}$
  for some positive integer $L$ that is not too large,} all anchor sets in (A) have distinct images after 
  the projection. (See Claim \ref{blue1}.)
We fix such a tuple $(w_1,\ldots,w_d)$. (A) and (B) together are then used to obtain (see Lemma \ref{starlem}) a strong
  structural characterization of $\Gamma$. 
\item[(C)] Finally we define a new univariate polynomial $f$ based on Chebyshev polynomials and the
construction of $p$ in \cite{BEK99}. (See Lemma \ref{uuuuu}.)
The characterization of $\Gamma$ and properties of $f$ are then combined to 
  finish the proof by showing that the sum in (\ref{eq:ourPTE}) has {not too small} magnitude when we apply the polynomial $\phi$
  given in (\ref{hehehehehe}).
\end{enumerate}\end{flushleft}







\subsection{Our lower bounds}

We begin by recalling the $\Omega(n)$ lower bound of McGregor et al.~\cite{MPV14}.  This lower bound is obtained via a simple analysis of the two distributions of traces resulting from the two strings $x^1 = 0^{n/2}10^{n/2-1}$ and $x^2=0^{n/2-1}10^{n/2}$.  The starting point of the \cite{MPV14} analysis is the observation that under the $\delta$-deletion channel, conditioned on the sole ``1'' coordinate being retained, the distribution of a trace of $x^1$ corresponds to $(\ba,\bb)$ where $\ba$ and $\bb$ are independent draws from $\Bin(n/2,1-\delta)$ and $\Bin(n/2-1,1-\delta)$ respectively, whereas the distribution of a trace of $x^2$ corresponds to $(\bb,\ba).$  \cite{MPV14} used this to show that the squared Hellinger distance between these two distributions of traces is $O(1/n)$, and in turn use this squared Hellinger distance bound to infer an $\Omega(n)$ sample complexity lower bound for determining whether a collection of received traces came from $x^1$ or from $x^2$.

Our lower bound approach may be viewed as an extension of the \cite{MPV14} lower bound to \emph{mixtures} of distributions similar to the ones they consider.  
The high-level idea of our lower bound proof is as follows:  we show that there exist two distributions $\bX,\bY$ over $\zo^n$ (in fact, over $n$-bit strings with precisely one 1) which have disjoint supports, each of size at most ${2\ell}$, but are such that the total variation distance
$\dtv(\Del_{\delta}(\bX),\Del_{\delta}(\bY))$,
between traces of strings drawn from $\bX$ versus traces of strings drawn from $\bY$, is very small.  This is easily seen to imply Theorem~\ref{thm:simplified-worst-case-lower}.

For simplicity in introducing the main ideas of our analysis, in this expository overview we will first consider an ``$n=+\infty$'' version of our population recovery scenario.  We begin by considering the distribution $\Del_{\delta}(\tilde{e}_{m+i})$ where $m$ is some fixed value and $\tilde{e}_{m+i}$ is an infinite string with a single $1$ in position $m+i$ and all other coordinates $0$.  A $\delta$ fraction of the outcomes of $\Del_{\delta}(\tilde{e}_{m+i})$ are the infinite all-$0$ string, which conveys no information.  The other $1 - \delta$ fraction 
of the outcomes each have precisely one 1, occurring in position $1+\ba$ where $\ba$ is distributed according to the binomial distribution $\Bin(m+i,1-\delta)$.  In this infinite-$n$ setting, two distributions $\bX,\bY$ over strings of the form $\tilde{e}_{m+i}$ with disjoint supports\ignore{ \blue{$S,T \subset \{e_{m+i} \, | 0 \leq i \leq 2k\}$}}
correspond to two mixtures of distinct binomial distributions (all with second parameter $1-\delta$, but with a set of first parameters in the first mixture that is disjoint from the set of first parameters in the second mixture).  The animating idea behind our construction and analysis is that it is possible for two distinct mixtures of binomials like this to be very close to each other in total variation distance.\footnote{We remark that our actual scenario is more complicated than this idealized version because $n$ is a finite value rather than $+\infty$. For $n=2m+1$, this means that a received trace $0^{\ba} 1 0^{\bb}$ which contains a 1 and came from $\Del_{\delta}(e_{m+i})$ provides a pair of values $(\ba,\bb)$ where $\ba$ is distributed according to $\Bin(m+i,{\rho})$ and $\bb$ is independently distributed according to $\Bin(m-i,{\rho})$ where $\rho=1-\delta$ is the retention probability. This second value $\bb$ provides additional information which is not present in the $n=+\infty$ version of the problem, and this makes it more challenging and more technically involved to prove a lower bound. We deal with these issues in Section~\ref{sec:tvdub}.}

In order to show that two distinct mixtures of binomial distributions as described above can be very close to each other in total variation distance, our lower bounds employ technical machinery due to Roos~\cite{Roo00} and Daskalakis and Papadimitriou~\cite{DP15}. Roos~\cite{Roo00} developed a ``Krawtchouk expansion'' which provides an \emph{exact} expression for the probability that a Poisson binomial distribution (a sum of $n$ independent Bernoulli random variables with expectations $p_1,\dots,p_n$) puts on any given outcome in $\{0,1, \cdots,n\}$.
Daskalakis and Papadimitriou \cite{DP15} used Roos's Krawtchouk expansion to show that under mild technical conditions, low-order moments of any Poisson binomial distribution essentially determine the entire distribution. In more detail, their main result is that if $\bX,\bY$ are two Poisson binomial distributions (satisfying mild technical conditions) whose $t$-th moments match, i.e.~$\E[\bX^t]=\E[\bY^t]$ for $t=1,\dots,O(\log(1/\eps))$, then the total variation distance between $\bX$ and $\bY$ is at most $\eps.$  

Our analysis proceeds in two main steps. In the first step, we show that there exist two mixtures of pairs of binomial distributions, which we denote by $\bD_{S}$ and $\bD_{T}$, with certain desirable properties. $S$ and $T$ are both subsets of $\{0,\dots, 2 \ell\}$, and $\bD_S$ is a certain mixture of pairs of binomial distributions $(\Bin(n/2+i,1-\delta), \Bin(n/2-i,1-\delta))$ for $i \in S$ while 
$\bD_T$ is a certain mixture of pairs of binomial distributions $(\Bin(n/2+j,1-\delta), \Bin(n/2-j,1-\delta))$ for $j \in T$. We establish the existence of \emph{disjoint} sets $S,T$ such that the resulting mixtures $\bD_S$ and $\bD_T$ have matching $t$-th moments for all $t=1,\dots,\ell.$
This is proved using known algebraic expressions for the moments of binomial distributions and simple linear algebraic arguments.  In the second main step, we extend the analysis of Daskalakis and Papadimitriou \cite{DP15} and apply this extension to our setting, in which we are dealing with mixtures of (pairs of) binomial distributions (as opposed to their and Roos's setting of Poisson binomial distributions).  We show that the matching first $\ell$ moments of $\bD_S$ and $\bD_T$ imply that the distributions $\Del_\delta(\bX)$ and $\Del_\delta(\bY)$ are very close, where $\bX$ corresponds to the mixture of Hamming-weight-one strings in $\zo^n$ corresponding to $\bD_S$ and $\bY$ likewise corresponds to the mixture of Hamming-weight-one strings corresponding to $\bD_T.$  (In fact, in our setting having $\ell$ matching moments leads to $n^{-\Omega(\ell)}$-closeness in total variation distance, whereas in \cite{DP15} the resulting closeness from $\ell$ matching moments was $2^{-\Omega(\ell)}.$)

We close this subsection by observing that while the results of \cite{Roo00,DP15} were used in a crucial way in subsequent work of Daskalakis et al.~\cite{DaDS15} to obtain a sample complexity \emph{upper bound} on learning Poisson binomial distributions, in our context we use these results to obtain a sample complexity \emph{lower bound} for population recovery.  Intuitively, the difference is that in the \cite{DaDS15} scenario of learning an unknown Poisson binomial distribution, there is no noise process affecting the samples: the learning algorithm is assumed to directly receive draws from the underlying Poisson binomial distribution being learned.  In such a noise-free setting, the existence of a small $\eps$-cover for the space of all Poisson binomial distributions (which is established in \cite{DP15} as a consequence of their moment-matching result) means, at least on a conceptual level, that a learning algorithm ``need only search a small space of candidates'' to find a high-accuracy hypothesis.  In contrast, in our context of deletion-channel noise, our arguments show that it is possible for two underlying true distributions $\bX,\bY$ over $\zo^n$ to be very different (indeed, to have disjoint supports) but to be such that their deletion-noise-corrupted versions have low-order moments which match each other exactly.  In this scenario, the \cite{Roo00,DP15} results can be used to show that the variation distance between the two distributions of noisy samples received by the learner is very small, and this gives a sample complexity lower bound for distinguishing $\bX$ and $\bY$ on the basis of such noisy samples.



\def\tuples{\Gamma_{n,d}}

\section{Preliminaries} \label{sec:preliminaries}

\noindent {\bf Notation.}
Given a nonnegative integer $n$, we write $[n]$ to denote $\{1,\ldots,n\}$.
Given integers $a\le b$ we write $[a:b]$ to denote $\{a,\ldots,b\}$. 
It will be convenient for us to index a binary string~$x \in \zo^n$
  using $[0:n-1]$ as $x=(x_0,\dots,x_{n-1})$.
Given a vector $v=(v_1,\ldots,v_d) \in \R^d$, we write $\|v\|_\infty$ to denote $\max_{i \in [d]}
 |v_i|.$
Given a function $\Delta:A\rightarrow \R$ over a finite domain $A$, we 
  write $\|\Delta\|_\infty=\max_{a\in A} |\Delta(a)|$.
{Given a polynomial $p$ (which may be univariate or multivariate), we write $\|p\|_1$ to denote the sum of magnitudes of $p$'s coefficients.} All logarithms and exponents are binary (base 2) unless otherwise specified.
\medskip

\noindent {\bf Distributions.}
We use bold font letters to denote probability distributions and 
  random variables, which should be clear from the context.
We write ``$\bx \sim \bX$'' to indicate that random variable~$\bx$~is 
  distributed according to distribution $\bX$.
The total variation distance between two distributions $\bX$ and $\tilde{\bX}$ over a finite set $\calX$ is 
  defined as
\[
\dtv(\bX,\tilde{\bX} ) = {\frac 1 2} \sum_{x \in \calX} \big|\bX(x) - \tilde{\bX}(x)\big|,
\]
where $\bX(x)$ denotes the amount of probability mass that the distribution $\bX$ puts on outcome $x$.\medskip

\noindent{\bf Population recovery from the deletion channel.}
Throughout this paper the parameter $0 <$ $\delta < 1$ denotes the \emph{deletion probability}.  Given a string $x \in \zo^n$, we write $\Del_\delta(x)$ to denote the distribution of a random trace of $x$ after it has been passed through the $\delta$-deletion channel (so the distribution $\Del_\delta(x)$ is supported on $\zo^{\leq n}$).  Recall that a random trace $\by \sim \Del_\delta(x)$ is obtained by independently deleting each bit of $x$ with probability $\delta$ and concatenating the surviving bits.\hspace{0.05cm}\footnote{For simplicity in this work we assume that the deletion probability $\delta$ is known to the learning algorithm.  We~note that it is possible to obtain a high-accuracy estimate of $\delta$ simply by measuring the average length of traces received from the deletion channel.}

We now define the problem of population recovery from the deletion channel that we will study in this paper.  In this problem the goal is to learn an unknown \emph{target distribution} $\bX$ supported~on at most $\ell$ strings from 
$\zo^n$.  
 {The learning algorithm has access to independent samples, each~of which is generated independently by first drawing  a string $\bx\sim \bX$ and then outputting a trace from $\Del_\delta(\bx)$.} For conciseness we write $\Del_\delta(\bX)$ to denote this distribution.\ignore{ $\Mix(\bX; \Del_\delta(x^1),\dots,\Del_\delta(x^k)).$}
The goal for the learning algorithm is to output with high probability (say at least $0.99$) a \emph{hypothesis distribution} $\tilde{\bX}$ for $\bX$ which is $\eps$-accurate in total variation distance:
$\dtv(\bX,\tilde{\bX})\le \eps$. We are interested in the number of samples needed for this learning task in terms of $n$, $\ell$, $\eps$ and $\delta$.
\medskip

\noindent {\bf Decks.}
Given a subset $T=\{t_1,\ldots,t_k\} \subseteq [0:n-1]$ of size $k$ with 
  $t_1<\cdots<t_k$, and two strings $v \in \zo^k$, $x \in \zo^n$, we say that \emph{$v$ matches $x$ at $T$} if $x_T = v$, where $x_T=(x_{t_1},\ldots,x_{t_k})\in \{0,1\}^k$  denotes the string $x$ restricted to positions in $T$. We say that the \emph{number of occurrences of $v$ in $x$} is the number of size-$k$ subsets $T \subseteq [0:n-1]$ such that $v$ matches $x$ at $T$, and we write $\#(v,x)$ to denote this quantity.
Given a distribution $\bX$ over $\{0,1\}^n$, we write $\#(v,\bX)$ to denote the 
  expected number of occurrences of $v$ in $\bx\sim \bX$, i.e. $$\#(v,\bX)=\Ex_{\bx\sim\bX} \big[\#(v,\bx)\big].$$

Given a string $x \in \zo^n$, we write $\deck_k(x)$ to denote the (\emph{normalized}\footnote{It will be more
  convenient for us to use the notion of (normalized) $k$-decks defined here; note that we can recover
  from it the multi-set of all  subsequences of $x$ with length $k$, and vice versa.}) \emph{$k$-deck} of $x$. 
This is a $2^k$-dimensional vector indexed by strings $v\in \{0,1\}^k$ such that 
$$
\big(\deck_k(x)\big)_v = \frac{\#(v,x)}{{n \choose k}}.
$$
So $\deck_k(x)$ is a nonnegative vector that sums to $1$.
Similarly, for a distribution $\bX$ over strings from $\{0,1\}^n$, we write $\deck_k(\bX)$ to denote 
  the (\emph{normalized}\footnote{Similarly, the (normalized) $k$-deck here is 
  equivalent to the weighted multi-set version used in the introduction up to a simple rescaling.}) \emph{$k$-deck} of $\bX$, given by
\[
\big(\deck_k(\bX)\big)_v=\frac{\#(v,\bX)}{{n\choose k}}, 
\]
for each $v\in \{0,1\}^k$. So $\deck_k(\bX)$ is also a $2^k$-dimensional nonnegative vector that sums to $1$.

\section{Upper bounds for distributions supported on at most $\ell$ strings}\ignore{mixtures over $\ell$ strings} \label{sec:worst-case-UB-general-p}

Our goal is to prove Theorem \ref{thm:upper-bound-uniform}, which is restated below:

\begin{theorem} \label{thm:upper-bound-uniform}
	There is an algorithm $A$ which has the following performance~\mbox{guarantee:} 
	For any~{distribution}
	$\bX$ supported over at most $\ell$ strings in $\zo^n$, if $A$ is given 
 \begin{equation}\label{mainbound}
 \frac{1}{\eps^2}\cdot \left(\frac{2}{1-\delta}\right)^{\sqrt{n}\hspace{0.04cm}\cdot\hspace{0.04cm} 
   (\log n)^{O(\ell)}}
\end{equation} 
many samples from $\Del_\delta(\bX)$, then with probability at least 0.99  the algorithm outputs a probability distribution $\tilde{\bX}$ supported over at most $\ell$ strings
	  such that $\dtv(\bX,\tilde{\bX})\le \eps$. 
\end{theorem}

\ignore{We first recall some terminology and prove two lemmas about projections in Section \ref{sec:proj}.}
In Section~\ref{sec:proj} we introduce the notion of a \emph{restriction}, which is a ``local view'' of a distribution~$\bX$ confined to a specific subset of coordinates and a specific outcome for those coordinates.  We then provide some terminology and prove {three} useful lemmas about restrictions in Section~\ref{sec:proj}.
Next in Section \ref{algorithmsec} we describe the algorithm $A$, 
  state our main technical lemma, Lemma \ref{maintechnical}, and 
  use~it to prove the correctness of algorithm $A$.
We prove Lemma  \ref{maintechnical} in Sections \ref{prooftechnical1} and \ref{prooftechnical2}.
\medskip

\noindent {\bf Notational convention.}  Our argument below involves many integer-valued index variables which take values in a range of different intervals.  To help the reader keep track, we will use the following convention (the values $L$ and $m$ will be defined later):

\begin{itemize}
\item $s,t,s_1,t_1,\dots$ will denote an index ranging over $[0:n-1]$;\vspace{-0.1cm}
\item $j,j_1,\dots$ will denote an index ranging over $[0:k-1]$;\vspace{-0.1cm}
\item $a,a',a_1,\dots$ will denote an index ranging over $[L]$;\vspace{-0.1cm} 
\item $b,b',b_1,\dots$ will denote an index ranging over $[0:m]$;\vspace{-0.1cm}
\item $i,i_1,\dots$, $\alpha,\alpha_1,\ldots$ and $\beta,\beta_1,\ldots$ will denote an index in all other places.\vspace{0.2cm}
\end{itemize}
	
\subsection{Restrictions}\label{sec:proj}	
	 
Let $\bX$ be a distribution over strings from $\{0,1\}^n$ and let $d\in [n]$ be a parameter {(which should be 
  thought of as quite small; we will  set $d= O(\log \ell)$ below).} 
Given a size-$d$ subset $T = \{t_1,\dots,t_d\}$ of $[0:n-1]$ with $0\le t_1<\cdots<t_d<n$
  and a string $c\in \{0,1\}^d$,
  we define
$$
\proj(\bX,T,c):=\Pr_{\bx\sim \bX} \big[(\bx_{t_1},\ldots,\bx_{t_d})=c\hspace{0.05cm}\big],
$$
 {the probability that a draw of $\bx \sim \bX$ matches $c$ in the coordinates of $T$.}

Let $\bX$ and $\bY$ be two distributions, each supported over at most $\ell$ strings from $\zo^n$.
Our first lemma shows that if $\dtv(\bX,\bY)$ is large, then there are a size-$d$ subset $T$ and a string
  $c\in \zo^d$ with $d=\lfloor \log(2\ell)\rfloor$ such that there is a reasonably big gap between
  $\proj(\bX,T,c)$ and $\proj(\bY,T,c)$.

 \begin{lemma} \label{large-TV-dist-implies-large-delta-max}
Let $\bX$ and $\bY$ be two distributions, each supported over at most $\ell$ strings from $\zo^n$.
Then there exist a size-$d$ subset $T$ of $[0:n-1]$ and a string $c\in \{0,1\}^d$ with $d=\lfloor \log (2\ell)\rfloor$
  such that 
$$
\Big|\hspace{0.03cm}\proj(\bX,T,c)-\proj(\bY,T,c)\hspace{0.03cm}\Big|
  \ge \frac{\dtv(\bX,\bY)}{\ell^{O(\ell)}}.
$$
\end{lemma}
\begin{proof}
 Let $\supp(\bX)\cup \supp(\bY)= \{z^1,\ldots,z^{\ell'}\}$ for some $\ell'\le 2\ell$.
For each $i\in [\ell']$, let $p_i\ge 0$ be the magnitude of the difference 
  between the probabilities of $z^i$ in $\bX$ and in $\bY$.
Let $\eps=\dtv(\bX,\bY)$. Then by definition we have $\sum_i p_i=2\eps$. Without loss of generality 
  we assume that $p_1\ge \cdots\ge p_{\ell'}\ge 0$~and prove the following claim (where we set $p_{\ell'+1}=0$ by default for convenience):

\begin{claim}
There exists an $i^*\in [\ell']$ such that $p_{i^*}\ge \eps/(4\ell)^{\ell'}$ and $p_{i^*+1}\le p_{i^*}/(4\ell)$.
\end{claim}
\begin{proof}
First we notice that $p_1\ge \eps/\ell$ given that $\sum_i p_i=2\eps$ and $\ell'\le 2\ell$.
Now given that the $p_i$'s~are nonnegative, there exists an $i\in [\ell']$ (e.g., by taking $i=\ell'$)
  such that $p_{i+1}\le p_i/(4\ell)$.
Take $i^*$ to be the smallest such index $i$.
Then we have
$$
\frac{p_{i^*}}{p_1}=\frac{p_{i^*}}{p_{i^*-1}}\cdots \frac{p_2}{p_1}>\frac{1}{(4\ell)^{i^*-1}}
$$
by the choice of $i^*$ as the smallest such index. As a result, we have $$p_{i^*}\ge \frac{\eps}{(4\ell)^{i^*}}\ge \frac{\eps}{(4\ell)^{\ell'}}.$$
This finishes the proof of the claim.
\end{proof}


Let $i^*\in [\ell']$ be the integer given by the claim above, and 
  we consider the first $i^*$ strings $z^1, \ldots,$ $z^{i^*}$.
Given that $i^*\le \ell'\le 2\ell$, there exist a $d$-subset $T$ of $[0:n-1]$
  with $d=\lfloor \log(2\ell)\rfloor$, a string $c\in \{0,1\}^d$
  and an $i'\le i^*$ such that
  the restriction of $z^{i'}$ matches $c$ but the restriction of $z^i$ does not match $c$ for
  any other $i\le i^*$.
(This can be achieved by repeatedly selecting a coordinate that splits the remaining strings into two nonempty subsets and setting $c$ to reduce the size by at least half each time.)
Using properties of $i^*$ given in the claim above, we have 


$$
\Big|\hspace{0.03cm}\proj(\bX,T,c)-\proj(\bY,T,c)\hspace{0.03cm}\Big|
  \ge p_{i^*}-\sum_{i>i^*}p_i\ge p_{i^*}-2\ell\cdot \frac{p_{i^*}}{4\ell}=\frac{p_{i^*}}{2}
    \ge \frac{\eps}{\ell^{O(\ell)}}.
$$
This finishes the proof of the lemma. 
\end{proof} 

Given two size-$d$ subsets $S=\{s_1,\ldots,s_d\}$ and $T=\{t_1 ,\ldots,t_d \}$
  of $[0:n-1]$ with $s_1<\cdots<s_d$ and $t_1 <\cdots<t_d $,
  we say that $S$ is \emph{dominated} by $T$ if
  $s_i\le t_i$ for every $i\in [d]$.
Let $\smash{\Delta:{[0:n-1]\choose d}\rightarrow \R}$ be a function
  over size-$d$ subsets of $[0:n-1]$.
We use $\supp(\Delta)$ to denote the set of subsets $T$ with $\Delta(T)\ne 0$.
We need the following definitions of a \emph{cover} and a \emph{group cover} of such a function $\Delta$.
\begin{definition}[Covers and group covers]
We say that a function $\Delta:{[0:n-1]\choose d}\rightarrow \R$ has an \emph{$L$-cover} {$\{(T_a,\calS_a):a\in [L]\}$} 
for some $L\ge 0$ if
\begin{enumerate}
\item $\calS_1,\ldots,\calS_L$ form an $L$-way partition of $\supp(\Delta)$;
\item $T_a\in \calS_a$ for each $a\in [L]$;
\item $\Delta(T)=\Delta(T_a)$ for every $T\in \calS_a$; and
\item $T_a$ is dominated by every $T\in \calS_a$.
\end{enumerate}
 {We refer to the set $T_a$ as the \emph{anchor set} of the collection $\calS_a$.}

 {Furthermore we say that $\Delta$ has an \emph{$(L,q,\lambda)$-group cover} if 
  $\Delta$ has an $L$-cover $\{(T_a,\calS_a):a\in [L]\}$ and a $q$-way partition of
  $[L]$ into $A_1,\ldots,A_q$ such that for each $i \in [q]$, for all $a,a'\in A_i$ we have
$$
\frac{|\Delta(T_a)|}{|\Delta(T_{a'})|}\le \lambda.
$$}
\end{definition}

Given distributions $\bX$ and $\bY$ over strings from $\zo^n$ and a string $c\in \{0,1\}^d$,
  we write $\Delta_{\bX,\bY,c}$ to denote the function over size-$d$ subsets of $[0:n-1]$
  that maps a size-$d$ subset $T$ to
$$
\Delta_{\bX,\bY,c}(T):=\proj(\bX,T,c)-\proj(\bY,T,c).
$$

The second lemma shows that when $d$ and the supports of $\bX,\bY$ are  small, 
  the function $\Delta_{\bX,\bY,c}$ has 
  a small cover for any string $c\in \{0,1\}^d$.
Taking as an example when $\ell=d=2$ and $\supp(\bX)=\{x^1,x^2\}$,
  we have that $\proj(\bX,S,c)=\proj(\bX,T,c)$ 
  if $x^1_S=x^1_T$ and $x^2_S=x^2_T$ 
  (note that this is a sufficient but not necessary condition in general). 
Letting $S=\{s_1,s_2\}$ for some $s_1<s_2$ and $T=\{t_1,t_2\}$ for some $t_1 < t_2$,
  this condition can be written equivalently as 
$$
(x^1_{s_1},x^2_{s_1})= (x^1_{t_1},x^2_{t_1})\quad\text{and}\quad
(x^1_{s_2},x^2_{s_2})= (x^1_{t_2},x^2_{t_2}).
$$
This implies that $\proj(\bX,\cdot ,c)$, as a function over size-$2$ subsets,
  has the following combinatorial ``rectangular'' structure:
one can partition indices $t\in [0:n-1]$ into four~types~{00,01,10,11} \mbox{according} to values of 
  $x^1_t$ and $x^2_t$; this induces a partition of all size-${2}$ subsets into $16$ ``rectangles,'' 
  \footnote{Strictly speaking, these are not rectangles
  since we always need to order indices of a subset in ascending order.} 
  where $S=\{s_1<s_2\}$ and $T=\{t_1<t_2\}$ belong to the same ``rectangle''
  iff the type of $s_1$ is the same as that of $t_1$ and the type of $s_2$ is the same as that of $t_2$. It follows that all $T$ in the same ``rectangle'' share the same value $\proj(\bX,T,c)$.  
We use this observation to obtain a small cover for $\Delta_{\bX,\bY,c}$.


\begin{lemma}\label{coverlemma}
Let $\bX$ and $\bY$ be two distributions, each supported over at most $\ell$ strings from $\zo^n$.
For any $d\in [n]$ and any string $c\in \{0,1\}^d$, $\Delta_{\bX,\bY,c}$ has an $L$-cover for some $L\le 2^{2d\ell}$.
\end{lemma}
\begin{proof}
Suppose that $\bX$ is supported on $x^1,\ldots,x^{\ell'}$ and $\bY$ is supported on $y^1,\ldots,y^{\ell''}$ 
  with $\ell',\ell'' \le \ell$.
We say an index $t\in [0:n-1]$ is of 
  \emph{type}-$(u,v)$, where $u\in \{0,1\}^{\ell'}$ and $v\in \{0,1\}^{\ell''}$, if
$$
(x^1_i,\ldots,x^{\ell'}_i)=u\quad\text{and}\quad
(y^1_i,\ldots,y^{ {\ell''}}_i)=v.
$$
This allows us to classify size-$d$ subsets of $[0:n-1]$ into
  at most $\smash{ (2^{\ell'+\ell''})^d\le 2^{2d\ell}}$ many equivalence classes:
  $S\sim T$ if $S=\{s_1,\ldots,s_d\}$ with $s_1<\cdots<s_d$ and $T=\{t_1,\ldots,t_d\}$ 
  with $t_1<\cdots<t_d$ are such that $s_i$ and $t_i$ are of the same type for all $i\in [d]$.
   
Let $\calS_a$ be a nonempty equivalence class of $\sim$ such that $S=\{s_1,\ldots,s_d\}\in \calS_a$ if $s_1<\cdots<s_d$ and $s_i$ has type-$(u^{(i)},v^{(i)})$ for each $i\in [d]$.
It follows from the definition of $\sim$ that 
  all $S\in \calS_a$~have the same $\proj(\bX,S,c)$ and $\proj(\bY,S,c)$, 
  and hence the same value of 
   $\Delta_{\bX,\bY,c}(S)$.
Moreover, we let $T_a=\{t_1,\ldots,t_d\}$~be the following set:
  $t_1$ is the smallest index of type-$(u^{(1)},v^{(1)})$ and for each $i$ from $2$ to $d$,
  $t_i$ is the smallest index that is larger than $t_{i-1}$ and has type-$(u^{(i)},v^{(i)})$.
Because $\calS_a$ is nonempty, $T_a$ is well defined and it is easy to 
  verify that $T_a$ is dominated by every $S\in \calS_a$.
As a result, $\Delta_{\bX,\bY,c}$ has the following  $L$-cover:
$$
\big\{(T_a,\calS_a):\text{$\calS_a$ is nonempty and $\Delta_{\bX,\bY,c}(T_a)\ne 0$}\big\},
$$
for some $L\le 2^{2d\ell}$.
This finishes the proof of the lemma.
\end{proof}

 The last lemma shows that the function  $\Delta_{\bX,\bY,c}$ actually has an
  $(L,q,\lambda)$-group cover, for some parameters $L\le 2^{2d\ell}$, $q\le \ell$ and $\lambda\le \ell^{O(\ell)}$. 

\begin{lemma}\label{coverlemma2}
Let $\bX$ and $\bY$ be two distributions, each supported over at most $\ell$ strings from $\zo^n$.
For any $d\in [n]$ and $c\in \{0,1\}^d$, $\Delta_{\bX,\bY,c}$ has an $(L,q,\ell^{O(\ell)})$-group cover 
  for some $L\le 2^{2d\ell}$ and~$q\le \ell$.
\end{lemma}
\begin{proof}
First we apply Lemma \ref{coverlemma} to obtain an
  $L$-cover $\{(T_a,\calS_a):a\in [L]\}$ of $\Delta:=\Delta_{\bX,\bY,c}$~for some $L\le 2^{2d\ell}$.
It suffices to show that the $L$ positive numbers $|\Delta(T_a)|$, $a\in [L]$, 
  can be divided into at most $\ell$ groups such that 
  any two in the same group have the ratio bounded from above by $\ell^{O(\ell)}$.
  
Let $p_1,\ldots,p_{\ell'}>0$ be probabilities of strings in $\bX$ for some $\ell'\le \ell$ 
  and $q_1,\ldots,q_{\ell''}>0$ be~probabilities of strings in $\bY$ for some $\ell''\le \ell$.
The observation is that every number $|\Delta(T_a)|$ is a linear form over the $p_i$'s and $q_i$'s 
  with coefficients $-1,0$ or $1$.
This motivates the following claim:

\def\cc{\mathbf{c}} \def\AA{\mathbf{A}} \def\xx{\mathbf{x}} \def\bb{\mathbf{b}}

\begin{claim}\label{hehe}
Let $u_1,\ldots,u_g>0$ be $g$ \emph{(not necessarily distinct)} positive numbers.
Let $$V=\Big\{v>0: v= c_1u_1+\cdots+c_g u_g \ \hspace{0.05cm}\text{for some $c_1,\ldots,c_g\in \{-1,0,1\}$}\Big\}.$$
Then there cannot exist $g+1$ numbers $v_1,\ldots,v_{g+1}$ in $V$ satisfying $v_{g+1}>\cdots >v_1$
  and $$\frac{v_{i+1}}{v_i}\ge (g+2)!,\quad\text{for all $i\in [g]$.}$$
\end{claim}
\begin{proof}
Assume for a contradiction that such $g+1$ numbers $v_1,\ldots,v_{g+1}$ exist in $V$ and let
$$
v_i=c_{i,1}u_1+\cdots+c_{i,g}u_g
$$
where $c_{i,j}\in \{-1,0,1\}$ for each $i\in [g+1]$.
Given that these are $g+1$ many $g$-dimensional vectors
  $c_i$ $=(c_{i,1},\ldots,c_{i,g})$, let $i^*\leq g+1$ be the smallest integer such that 
  $c_{i^*}$ can be written as a linear combination of $c_1,\ldots,c_{i^*-1}$:
  $c_{i^*}=\alpha_1c_1+\cdots+\alpha_{i^*-1}c_{i^*-1}$, which implies that
\begin{equation}\label{hehehehe1}
v_{i^*}=\alpha_1v_1+\cdots+\alpha_{i^*-1}v_{i^*-1} 
\le |\alpha_1|\cdot v_1+\cdots+|\alpha_{i^*-1}|\cdot v_{i^*-1}.
\end{equation}
We show below  that the magnitude of coefficients $\alpha_1,\ldots,\alpha_{i^*-1}$ is relatively small, 
  which leads to a contradiction because we assumed that $v_{i^*}$ 
  is much bigger than $v_{i^*-1},\ldots,v_1$.

To see this, note that $(\alpha_1,\ldots,\alpha_{i^*-1})$ is the solution
  to a $(i^*-1)\times (i^*-1)$ linear system
$
Ax=b
$
where $A$ is a $\{-1,0,1\}$-valued $(i^*-1)\times (i^*-1)$
  full-rank matrix and $b$ is a $\{-1,0,1\}$-valued vector. 
(In more detail, one can take $A$ to be a full-rank $(i^*-1)\times (i^*-1)$ submatrix of the matrix that consists of $c_1,\ldots,c_{i^*-1}$
  as columns and  take the vector $b$ to be the corresponding entries of $c_{i^*}$.) 
It follows from Cramer's rule that each entry of $A^{-1}$ has magnitude 
  at most $(i^*-1)!$ and thus, each entry of $A^{-1}b$ has absolute value
  at most $(i^*-1)\cdot (i^*-1)!<i^*!\le (g+1)!$
This contradicts with (\ref{hehehehe1}) and the assumption
  that $v_1<\ldots<v_{i^*-1}\le v_{i^*}/(g+2)!$.
\end{proof}
Claim \ref{hehe} gives us the following procedure to partition $[L]$ into $A_1,\ldots,A_q$ 
  for some $q\le \ell$: \vspace{0.15cm}
\begin{enumerate}
\item Set $i=1$ and $\calL=[L]$.\vspace{-0.1cm}
\item While $\calL$ is nonempty do\vspace{-0.1cm}
\item \ \ \ \ Let $v$ be the smallest $|\Delta(T_a)|$, $a\in \calL$.\vspace{-0.1cm}
\item \ \ \ \ Remove from ${\cal L}$ and add to $A_i$ every $a\in \calL$ with $|\Delta(T_a)|\le (2\ell+2)!\cdot v$, and increment $i$. \vspace{ 0.15cm}
\end{enumerate}
It follows from Claim \ref{hehe} that when $\calL$ becomes empty at the end,
  the number of $A_i$'s we created can be no more than $\ell$.
Furthermore, every $a$ and $a'$ that belong to the same $A_i$ have the ratio
  of $|\Delta(T_a)|$ and $|\Delta(T_{a'})|$ bounded by $(2\ell+2)!=\ell^{O(\ell)}.$
This finishes the proof of the lemma.
\end{proof} 

\subsection{Main Algorithm}\label{algorithmsec}

We start with an algorithm, {based on dynamic programming,} for estimating the $k$-deck 
  of a distribution $\bX$ over $\{0,1\}^n$.  
  
	\begin{theorem} \label{thm:traces-to-deck}
		 {Let $k\in [n]$.} 
		There is an algorithm with the following~performance~guarantee:  for~any distribution $\bX$ over strings in $\zo^n$, if 
		the algorithm is given  
\begin{equation*}
M= O\left(\frac{k}{\xi^2(1-\delta)^{2k}}\right)
\end{equation*}
		many samples from $\Del_\delta(\bX)$ then {with probability at least $0.99$}
		the algorithm outputs a nonnegative $2^k$-dimensional vector $Q$ with
$ 
\|Q-\deck_k(\bX)\|_\infty\le \xi.
$ 
		Its running time is $2^kM   \cdot \poly(n)$.
	\end{theorem}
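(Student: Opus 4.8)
The plan is to reduce estimating $\deck_k(\bX)$ to estimating, for each pattern $w\in\zo^k$, the expected number of times $w$ occurs as a length-$k$ subsequence in a draw from $\bX$, and to observe that a single trace from $\Del_\delta(\bX)$ already supplies an unbiased estimate of this quantity.

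Concretely, for a string $u$ let $c_w(u)$ denote the number of index tuples $1\le i_1<\dots<i_k\le|u|$ with $u_{i_1}\cdots u_{i_k}=w$, so that $\deck_k(\bX)_w=\mathbb{E}_{x\sim\bX}[c_w(x)]/\binom{n}{k}$. The identity I would use is
\[
\mathbb{E}_{y\sim\Del_\delta(\bX)}\!\left[c_w(y)\right]=(1-\delta)^k\,\mathbb{E}_{x\sim\bX}[c_w(x)]=(1-\delta)^k\binom{n}{k}\,\deck_k(\bX)_w.
\]
For the first equality, fix the source string $x$ and let $D\subseteq[n]$ be the (random) set of surviving positions, so $y=x_D$; an occurrence of $w$ in $y$ is precisely a size-$k$ set $S\subseteq D$ with $x_S=w$, hence $c_w(y)$ equals the number of occurrences of $w$ in $x$ all of whose $k$ positions survive. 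Each fixed size-$k$ position set survives independently with probability $(1-\delta)^k$, so by linearity $\mathbb{E}[c_w(y)\mid x]=(1-\delta)^k c_w(x)$, and averaging over $x\sim\bX$ gives the displayed identity. Therefore, given $M$ i.i.d.\ traces $y^{(1)},\dots,y^{(M)}\sim\Del_\delta(\bX)$, the vector $Q$ with
\[
Q_w=\frac{1}{(1-\delta)^k\binom{n}{k}}\cdot\frac1M\sum_{j=1}^{M}c_w\!\left(y^{(j)}\right),\qquad w\in\zo^k,
\]
is nonnegative, $2^k$-dimensional, and satisfies $\mathbb{E}[Q_w]=\deck_k(\bX)_w$ for every $w$.

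The accuracy analysis is then a routine Hoeffding-plus-union-bound. Since the deletion channel only removes symbols, every trace has length at most $n$, so $0\le c_w(y^{(j)})\le\binom{n}{k}$ and each summand above lies in $[0,(1-\delta)^{-k}]$; Hoeffding's inequality for a mean of $M$ i.i.d.\ variables in an interval of width $(1-\delta)^{-k}$ gives $\Pr[\,|Q_w-\deck_k(\bX)_w|>\xi\,]\le 2\exp(-2M\xi^2(1-\delta)^{2k})$ for each fixed $w$, and a union bound over the $2^k$ patterns makes the total failure probability at most $0.01$ as soon as $M\ge c\,k/(\xi^2(1-\delta)^{2k})$ for an absolute constant $c$ — exactly the stated sample complexity — yielding $\|Q-\deck_k(\bX)\|_\infty\le\xi$ with probability $\ge 0.99$.

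Finally, the only place dynamic programming enters is computing all $2^k$ integers $c_w(y)$ from one string $y$ within the time budget $2^k\cdot\poly(n)$. I would scan $y$ from left to right maintaining a table $T[\cdot]$ indexed by strings $u\in\zo^{\le k}$, with the invariant that $T[u]$ is the number of subsequence-occurrences of $u$ in the prefix read so far; start from $T[\varepsilon]=1$ and $T[u]=0$ otherwise, and upon reading a bit $b$ set $T[ub]\leftarrow T[ub]+T[u]$ for all $u$ with $|u|\le k-1$, processing these updates in order of decreasing $|u|$ so that each source entry is read before being modified. Each of the at most $n$ scan steps touches $O(2^k)$ entries, each entry is a nonnegative integer bounded by $\binom{n}{k}\le 2^n$ and so fits in $\poly(n)$ bits, and the final values are read off at $|u|=k$; running this on all $M$ traces costs $2^kM\cdot\poly(n)$. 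The crux of the whole argument is the unbiasedness identity — in particular the observation that occurrences of $w$ in the trace correspond bijectively to occurrences of $w$ in the source whose positions all happen to survive — after which both the concentration bound and the dynamic program are standard.
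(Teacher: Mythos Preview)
Your proposal is correct and follows essentially the same approach as the paper: the same unbiasedness identity $\mathbb{E}[c_w(y)]=(1-\delta)^k\binom{n}{k}\deck_k(\bX)_w$, the same Hoeffding/Chernoff-plus-union-bound over the $2^k$ patterns, and the same dynamic-programming idea for counting subsequence occurrences. The only cosmetic difference is that the paper's DP computes $\#(v,z)$ for one pattern $v$ at a time (a $\poly(n,k)$ table indexed by prefix lengths of $v$ and $z$) and then loops over all $2^k$ patterns, whereas you compute all $2^k$ counts in a single left-to-right scan with a table indexed by $\zo^{\le k}$; both yield the stated $2^kM\cdot\poly(n)$ running time.
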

	
	\begin{proof}
Let $x^1,\ldots,x^p$ be the support of $\bX$. 
Then for each string $v\in \{0,1\}^k$, we have 
\begin{align*}
\Ex_{\bz \sim \Del_\delta(\bX)}\big[\#(v,\bz)\big] &= (1-\delta)^k \cdot
		\left(
		\bX(x^1)\cdot { {\#(v,x^1)} } + \cdots + \bX(x^p)\cdot { {\#(v,x^p)} }
		\right)\\[-1.5ex]
&=(1-\delta)^k\cdot \Ex_{\bx\sim \bX } \big[\#(v,\bx)\big] 
=(1-\delta)^k\cdot \#(v,\bX) 
=(1-\delta)^k\cdot  {n\choose k}\cdot \big(\deck_k(\bX)\big)_v.
\end{align*}
		The first equation is because for a given size-$k$ subset $S \subseteq [0:n-1]$ of indices at which $v$ matches $x^i$, all of the positions in $S$ ``survive'' into a string $\bz \sim \Del_\delta(x^i)$ with probability exactly $(1-\delta)^k.$ 

As a result, it suffices to estimate $\Ex[\#(v,\bz)]$ to additive accuracy
  $\pm \xi (1-\delta)^k {n\choose k}$ for every string $v\in \{0,1\}^k$.
%
For any fixed string $v \in \{0,1\}^k$, by a standard Chernoff bound, using 
$$M=O\left(\frac{k }{\xi^2(1-\delta)^{2k}}\right)$$
 samples the empirical estimate of  $\E[\#(v,\bz)]$ will have the desired additive $\xi (1-\delta)^k {n\choose k}$ accuracy except with failure probability $0.01/2^k$.
The success probability of 0.99 follows from union bound. 
		
The running time of the algorithm uses the following simple observation: given $z \in \zo^{n'}$~and $v \in \zo^k$, there is a $\poly(n',k)$-time {procedure} that computes $\#(v,z).$ The {procedure} works by straightforward dynamic programming:  For each $j \in [0:k-1]$ and $i \in [0:n'-1]$, the algorithm maintains a count of the number 
		$\#(v_0\ldots v_j,z_0\ldots z_i)$.	
This then implies that the running time of the overall algorithm is $M\cdot 2^k \cdot \poly(n)$. 
This finishes the proof of the lemma.	
	\end{proof}

We prove the following main technical lemma in Sections \ref{prooftechnical1} and \ref{prooftechnical2}.  Intuitively, this lemma says that if the total variation distance between $\bX$ and $\bY$ is not too small, then for a suitable (not too large) value of $k^\ast,$ the distance between the $k^\ast$-decks of $\bX$ and $\bY$ also cannot be too small. 
 \begin{lemma}\label{maintechnical}
Let $\ell$ be a positive integer with $\ell\le \log n$.
Let $\bX$ and $\bY$ be two distributions, each~supported over at most $\ell$ strings from $\zo^n$.
Then there is a positive integer 
\begin{equation}\label{defk}
k^*=\sqrt{n}\cdot (\log n)^{O(\ell)}
\end{equation}
such that
$$
 \dtv(\bX,\bY)\le \exp\left(\sqrt{n}\cdot (\log n)^{O(\ell)}
\right)\cdot 
\|\deck_{k^*}(\bX)-\deck_{k^*}(\bY)\|_\infty.
$$
\end{lemma}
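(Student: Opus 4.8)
The first step exploits that $\deck_k(\cdot)$ is \emph{linear} in the underlying measure. Write $\mu:=\bX-\bY$; this is a signed measure supported on a set of $m\le 2\ell$ distinct strings $w^1,\dots,w^m\in\zo^n$, with real weights $c_1,\dots,c_m$ obeying $\sum_i c_i=0$ and $\tfrac12\sum_i|c_i|=\dtv(\bX,\bY)$, and for every $v$ one has $\big(\deck_k(\bX)-\deck_k(\bY)\big)_v=\binom{n}{k}^{-1}\sum_i c_i\,\#(v,w^i)$. Setting $\rho:=\|\deck_{k^*}(\bX)-\deck_{k^*}(\bY)\|_\infty$, so that $\binom{n}{k^*}\,\rho=\max_{v\in\zo^{k^*}}\big|\sum_i c_i\#(v,w^i)\big|$, the lemma reduces to producing $k^*=\sqrt n\,(\log n)^{O(\ell)}$ such that
\[
  \sum_{i=1}^m|c_i|\ \le\ \exp\!\big(\sqrt n\,(\log n)^{O(\ell)}\big)\cdot\binom{n}{k^*}\,\rho ,
\]
since the factor $\binom{n}{k^*}\le n^{k^*}=\exp(\sqrt n\,(\log n)^{O(\ell)})$ is harmlessly absorbed into the leading factor.

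The second step passes from subsequence counts to univariate polynomials. To each $w\in\zo^n$ I would attach a polynomial $P_w$ of degree at most $n$ whose coefficients have magnitude at most $\binom{n}{k^*}$ and are explicit $\poly(n)$-bounded linear combinations of the entries of $\deck_{k^*}(w)$; the natural construction records the counts $\#(v,w)$ for $v$ ranging over ``block patterns'' $1^{t_1}0^{t_2}1^{t_3}\cdots$ of total length $k^*$, arranged so that $P_w$ captures enough of the run-length profile of $w$ to determine it. Since each degree-$\le k^*$ coefficient of the ``test polynomial'' $R:=\sum_i c_iP_{w^i}$ is then a $\poly(n)$-bounded linear combination of the numbers $\sum_i c_i\#(v,w^i)$, all such coefficients have magnitude at most $\poly(n)\cdot\binom{n}{k^*}\,\rho$; the goal becomes to propagate this bound to all coefficients of $R$, and thence to $\sum_i|c_i|$.

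The heart of the argument is a rigidity estimate for small combinations of these structured polynomials: a combination $R=\sum_{i=1}^m c_iP_{w^i}$ of $m\le2\ell$ of them all of whose degree-$\le k^*$ coefficients are at most $\delta$ must in fact have \emph{every} coefficient at most $\exp(\sqrt n\,(\log n)^{O(\ell)})\cdot\delta$, after which $\sum_i|c_i|$ is recovered by inverting the known, overdetermined linear map from $(c_1,\dots,c_m)$ to the coefficient vector of $R$. For a single pair of strings this is a quantitative form of the classical fact that the $O(\sqrt{n\log n})$-deck reconstructs a string, and I would prove it via the complex-analytic route underlying trace reconstruction: after a Littlewood-type normalization, a nonzero (difference of) string-polynomial(s) cannot vanish to order $\omega(\sqrt{n\log n})$ at $z=1$ (a Borwein--Erd\'elyi--K\'os-type bound), and the robust version --- small, not necessarily zero, low-order behaviour forces global smallness on the unit circle --- comes from the Borwein--Erd\'elyi / Nazarov--Peres / De--O'Donnell--Servedio family of extremal inequalities, with a short arc of length $\asymp 1/\sqrt n$ around $1$ producing the $\exp(O(\sqrt n))$ amplification. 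Passing from one string to a combination of $2\ell$ strings is handled by iterating a separation (``peeling'') argument over the strings, and it is here that the loss degrades by a $(\log n)^{O(1)}$ factor per string --- hence $(\log n)^{O(\ell)}$ overall --- and that $k^*$ is forced up to $\sqrt n\,(\log n)^{O(\ell)}$ above the single-string scale.

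The step I expect to be the genuine obstacle is exactly this disentangling of the $\le 2\ell$ component strings with only polylogarithmic loss apiece: a combination of distinct string-polynomials can be minuscule near $z=1$ with no individual term small, so one must establish that the $P_{w^i}$ of distinct strings are \emph{robustly} (not merely linearly) independent on the relevant arc and control the conditioning of the resulting linear system. Balancing that conditioning bound against the arc length (equivalently, against $k^*$) and against the coefficient size $\binom{n}{k^*}$ is what pins down both the exponent $O(\ell)$ in $(\log n)^{O(\ell)}$ and the precise value of $k^*$; a secondary nuisance is choosing the family of block patterns so that $P_w$ has polynomially bounded coefficients while still encoding $w$ faithfully.
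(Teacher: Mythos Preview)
Your sketch names plausible ingredients but leaves the decisive step unproved. You correctly flag the ``disentangling of the $\le 2\ell$ component strings'' as the obstacle, and then stop: you assert that one must show the $P_{w^i}$ are \emph{robustly} independent on a short arc near $z=1$, but give no mechanism for this, nor any definition of $P_w$ precise enough to even formulate such a statement. That is not a technical detail --- it is the entire lemma. The single-string Borwein--Erd\'elyi--K\'os / Nazarov--Peres machinery you cite controls one Littlewood-type polynomial on an arc; it says nothing about why a small linear combination of $2\ell$ such polynomials forces the individual coefficients $c_i$ to be small, which is exactly the robust-conditioning bound you need and do not supply.

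The paper's route is structurally different and avoids this altogether. Rather than attach a univariate polynomial to each string and argue about their independence, it passes to a \emph{$d$-variate} object $\Delta=\Delta_{\bX,\bY,c}:\binom{[0:n-1]}{d}\to\R$ for $d=\lfloor\log(2\ell)\rfloor$, and invokes separate structural lemmas to show that $\|\Delta\|_\infty\ge\dtv(\bX,\bY)/\ell^{O(\ell)}$ and that $\Delta$ admits an $(L,q,\lambda)$-\emph{group cover} with $q\le\ell$. The test functional is a $d$-variate polynomial $\phi(t_1,\dots,t_d)=f(w_1t_1+\cdots+w_dt_d)$: a random linear projection to one dimension (so that the $\le L$ anchor sets land at distinct integers), followed by a univariate $f$ built from a Chebyshev-based polynomial $h$ with asymmetric decay on the two sides of $0$. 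The $(\log n)^{O(\ell)}$ factor then has a completely concrete origin: an induction (Lemma~\ref{starlem}) over at most $q\le\ell$ ``step-sequence'' anchor locations, each step costing a $\log^{O(1)} m$ factor in the exponent, to decide where to center $f$. A second lemma (Lemma~\ref{technicallemma2}) converts $\sum\phi\cdot\Delta$ back into a bound against $\|\deck_{k^*}(\bX)-\deck_{k^*}(\bY)\|_\infty$ via a combinatorial change of basis. So the dependence on $\ell$ comes from the length of a $q$-step induction over a structured cover, not from peeling strings one at a time; and the passage to $d$ variables with $d=O(\log\ell)$ is what replaces your missing robust-independence argument.
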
 

We now present our algorithm $A$ and use Lemma \ref{maintechnical} to prove Theorem \ref{thm:upper-bound-uniform}:
 
\begin{proof}[Proof of Theorem \ref{thm:upper-bound-uniform}]
 The bound (\ref{mainbound}) we aim for holds trivially when $\ell\ge \log n$.
To see this, we first notice that 
  when $\ell\ge \log n$, the 
  sample complexity bound (\ref{mainbound}) we aim for is at least
\begin{equation}\label{hela}
\frac{\text{poly}(\ell)}{\eps^2}\cdot \left(\frac{1}{1-\delta}\right)^n.
\end{equation}
With $(1/(1-\delta))^n$ samples from $\Del_\delta(\bX)$, we expect to see a full string of length $n$ where~no bits~are 
  deleted and we know that such a string is drawn directly from $\bX$.
This means that, with (\ref{hela}) many samples, we receive
  $\text{poly}(\ell)/\eps^2$ draws from $\bX$ with high probability. When the latter happens, 
  the empirical estimation $\tilde{\bX}$ of $\bX$ satisfies 
  $\dtv(\bX,\tilde{\bX})\le \eps$ with high probability.
This allows us to focus on the case when $\ell\le \log n$ in the rest of the proof
  (so Lemma \ref{maintechnical} applies).

Let $\eps$ be the total variation distance we aim for in Theorem \ref{thm:upper-bound-uniform}.
Let $k^*$ be the parameter in (\ref{defk}).
Let $\xi$ be a parameter to be specified later.
By Theorem \ref{thm:traces-to-deck}, the algorithm $A$ can first use  
\begin{equation}\label{lululu}
M^*= O\left(\frac{k^* }{\xi^2 ({1-\delta})^{2k^*}}\right)  
\end{equation}
  samples to obtain an estimate $Q$ of $\deck_{k^*}(\bX)$ such that 
\begin{equation}\label{blabla}
\|Q-\deck_{k^*}(\bX)\|_\infty\le \xi,
\end{equation}
and it succeeds in obtaining such an estimate with probability at least $0.99$.

With $Q$ in hand the algorithm $A$
computes $ \|Q-\SD_{k^*}(\bY) \|_\infty$ 
  for every distribution $\bY$ supported on at most $\ell$ strings such that
  the probability of each string in $\bY$ is an integer multiple of~$\xi/\ell$.
Finally the algorithm outputs the distribution $\bX^*$ that minimizes the distance (breaking ties arbitrarily).
  
We show that when $Q$ satisfies (\ref{blabla}),
  $\bX^*$ must be close to $\bX$.
We start with a simple observation that one can round $\bX$ to get a distribution $\bX'$ 
  in which the probability of each string is an integer multiple
  of $\xi/\ell$ and $d_{\text{TV}}(\bX,\bX')\le \xi$.
This can be done by rounding the probability of every {string except one}
  to the nearest multiple of $\xi/\ell$ and {setting the last probability as required so that the total probability is 1.}
We have
\begin{align*}
\big\|Q-\SD_{k^*}(\bX')\big\|_\infty&\le \big\|Q-\SD_{k^*}(\bX) \big\|_\infty
+ \big\|\SD_{k^*}(\bX)-\SD_{k^*}(\bX')\big\|_\infty\\[0.5ex]&\le 
\big\|Q-\SD_{k^*}(\bX) \big\|_\infty
+\dtv(\bX,\bX')\le 2\xi.
\end{align*}
By definition of $\bX^*$ and $\bX'$, we have $\|Q-\SD_{k^*}(\bX^*) \|_\infty \leq \|Q-\SD_{k^*}(\bX') \|_\infty \leq 2 \xi$.
As a result, 
$$
\big\|\SD_{k^*}(\bX)-\SD_{k^*}(\bX^*)\big\|_\infty\le \big\|Q-\SD_{k^*}(\bX^*)\big\|_\infty
  +\big\|Q-\SD_{k^*}(\bX)\big\|_\infty\le 3\xi.
$$
It follows from Lemma \ref{maintechnical}
 that
$$
d_{\text{TV}}(\bX,\bX^*)\le 
3\xi\cdot \exp\left(\sqrt{n}\cdot (\log n)^{O(\ell)}
\right).
$$
Finally we choose $\xi$ so that the RHS becomes $\eps$. The number of samples needed 
  in (\ref{lululu}) becomes 
$$
\left(\frac{1}{\eps }\right)^2\cdot
\left(\frac{2}{1-\delta}\right)^{\sqrt{n}\hspace{0.04cm}\cdot\hspace{0.04cm} (\log n)^{O(\ell)}}. 
$$
This finishes the proof of Theorem \ref{thm:upper-bound-uniform}. 
\end{proof}

We use the following two lemmas  to prove Lemma \ref{maintechnical}.
They are proved in Section  \ref{prooftechnical1}
  and \ref{prooftechnical2}.

\begin{lemma}\label{technicallemma1} 
 Let $d,q,L$ and $\lambda$ be positive integers satisfying
  $$d,q\le \log n \quad \text{and}\quad L,\lambda\le (\log n)^{O(\log n)}.$$  
Let $\smash{\Delta:{[0:n-1]\choose d}\rightarrow \R}$~be~a function
  that is not identically~zero and has~an 
  $(L,q,\lambda)$-group cover. 
Let $m=d(n-1)L^2$.
Then there exists a $d$-variate polynomial $\phi$ with degree at most $O(\sqrt{m}\cdot\log^{{4q+1}}m)$
  and $\|\phi\|_1=\exp(O(\sqrt{m}\cdot \log^{{4q+3}}m))$\ignore{\xnote{\color{red}I think it is standard
  to use $\|\phi\|_1$ to denote the sum of absolute value of coefficients of $\phi$?
  If so we can use it and add a line in preliminaries.
The same for $\|\Delta\|_\infty$ as $\max_{T} |\Delta(T)|$.}}  such that\vspace{0.05cm}
$$
\left|\sum_{0\le t_1<\cdots<t_d<n} \phi(t_1,\ldots,t_d)\cdot \Delta\big(\{t_1,\ldots,t_d\}\big)\right|\ge
\frac{\|\Delta\|_\infty}{\exp(O(\sqrt{m}\cdot \log^{{4q-1}}m))}.
$$ 
\end{lemma}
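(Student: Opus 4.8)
The plan is to construct the desired polynomial $\phi$ by reducing the $d$-dimensional problem to a one-dimensional approximation problem and then invoking a quantitative result about polynomials that separate a fixed point value from a structured set of other values. First I would use the hypothesis that $\Delta$ has an $(L,q,\lambda)$-group cover: this should let me partition (or organize) the domain ${[0:n-1] \choose d}$ according to which "group" each coordinate $t_j$ falls into, so that $\Delta(\{t_1,\dots,t_d\})$ depends only on this coarse group-pattern data plus some bounded residual information. Let $T^\star = \{t_1^\star,\dots,t_d^\star\}$ be a set achieving $|\Delta(T^\star)| = \|\Delta\|_\infty$. The goal is to build $\phi$ so that the sum $\sum \phi(t_1,\dots,t_d)\Delta(\{t_1,\dots,t_d\})$ is dominated by the single term at $T^\star$: concretely, we want $\phi$ to take value roughly $1$ at $(t_1^\star,\dots,t_d^\star)$ and to be small (in an $\ell_1$-weighted sense against all the other $\binom{n}{d}$ values of $\Delta$) everywhere else.

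The key technical ingredient I expect to need is a one-variable polynomial that is $1$ at a designated integer point in $[0:n-1]$ and exponentially small (but with controlled coefficient size) at all other integer points in that range, with degree $\tilde O(\sqrt n)$ — this is exactly the kind of bound that appears in the literature on trace reconstruction via complex-analytic / Borwein-style extremal polynomials, and the exponents $4q+1$, $4q+3$, $4q-1$ in the statement strongly suggest that $q$ iterated applications (one per "level" of the group cover) each cost a polylog factor in the degree and an $\exp(\sqrt m \cdot \mathrm{polylog})$ factor in the $\ell_1$ norm. So the steps, in order, would be: (i) set $m = d(n-1)L^2$ and unfold the group-cover structure to reduce to controlling $\Delta$ on a grid of effective size governed by $L, q, \lambda$; (ii) for each of the $d$ coordinates and each of the $q$ cover-levels, take a univariate "spike" polynomial of degree $O(\sqrt{m}\cdot \mathrm{polylog}\, m)$ that isolates the relevant coordinate value while being tiny elsewhere; (iii) form $\phi$ as an appropriate product/combination of these univariate pieces over the $d$ coordinates, tracking that the degree multiplies only to $O(\sqrt m \log^{4q+1} m)$ and the $\ell_1$ norm to $\exp(O(\sqrt m \log^{4q+3}m))$ (here one uses that $\|\phi\psi\|_1 \le \|\phi\|_1\|\psi\|_1$); (iv) bound the sum by splitting off the $T^\star$ term, which contributes $\approx \|\Delta\|_\infty$, and bounding the remaining $\binom{n}{d}$ terms by $\|\phi\|_1$ times the sup of the small values of $\phi$ against $\Delta$, which is $\|\Delta\|_\infty / \exp(O(\sqrt m \log^{4q-1}m))$ — the difference in exponent ($4q+3$ for the norm vs. $4q-1$ for the final bound, a factor of $\log^4 m$) being the slack we pay to beat the union bound over all other terms and to absorb the polynomial factor $\binom{n}{d}\le n^d$.

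The main obstacle will be step (iii): making the group-cover structure interact cleanly with the univariate polynomial construction so that the degree and coefficient bounds compose with only polylogarithmic blow-up per level rather than, say, a multiplicative $\sqrt m$ per level. In particular, one has to be careful that "isolating a coordinate in a group" is achievable by a polynomial whose degree is $\tilde O(\sqrt m)$ rather than $\tilde O(n)$ — this is where the $\sqrt{}$ savings characteristic of trace reconstruction must be extracted, presumably from a Chebyshev/Borwein-type estimate applied on an interval of length $n$ but exploiting that we only need separation at integer points. A secondary subtlety is handling the case where the group cover is "degenerate" (e.g., groups of size $1$, or $\lambda$ close to its maximum), where one must check the bounds still hold; since $\Delta$ is assumed not identically zero, $\|\Delta\|_\infty > 0$ and the statement is non-vacuous, but one should verify the construction does not secretly divide by a quantity that can vanish.
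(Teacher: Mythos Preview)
Your proposal contains a genuine conceptual gap in step (ii). A univariate polynomial of degree $\tilde O(\sqrt{m})$ that is $1$ at a designated integer and ``tiny elsewhere'' on $[0:m]$ does not exist: any polynomial that is $1$ at one integer and at most $1/2$ at all other integers in $[0:m]$ must have degree $\Omega(m)$. The Borwein/Chebyshev-type construction the paper uses (Lemma~\ref{uuuuu}) yields a polynomial $h$ that is $1$ at $0$, exponentially small for $b>0$, but \emph{exponentially large} (though controlled) for $b<0$. This one-sidedness is not a technicality you can patch around; it is the entire difficulty, and your plan of ``isolating $T^\star$'' by a product of spikes collapses once you try to bound the contribution of sets $T$ with some coordinates below and some above those of $T^\star$.

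The paper's proof is structurally quite different from what you outline. It does not take a product over the $d$ coordinates; instead it projects the whole $d$-tuple to a single integer via a random linear form $w(T)=\sum_j w_j t_j$ with $w_j\in[L^2]$ (chosen so the $L$ anchor sets land at distinct points), reducing to a one-dimensional problem for the pushforward $\Gamma(b)=\sum_{w(T)=b}\Delta(T)$. The univariate polynomial is then $f(x)=h(x-m_\alpha)^{\lceil 3\log^{\beta+1}m\rceil}$ for a carefully chosen pivot $m_\alpha$, and the heart of the argument is Lemma~\ref{starlem}: an induction over the $q$ groups that produces ``step sequences'' $(\tau_0,\dots,\tau_r)$, $(m_0,\dots,m_r)$ and shows one can always find a pivot $m_\alpha$ at which $|\Gamma(m_\alpha)|$ is large enough that the asymmetric tails of $h$ are beaten on both sides simultaneously. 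The exponent $4q+1$ arises because each inductive step in Lemma~\ref{starlem} may increase the parameter $\beta$ by $4$, not from $q$ separate polynomial constructions. Your reading of the $(L,q,\lambda)$-group cover is also off: it is not a partition of coordinates but a structured collection of anchor $d$-sets $T_a$ with a $q$-way grouping, and it is used only to control the shape of $\Gamma$ after projection, not to factor $\Delta$.
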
 

 We note that the following lemma holds for any two distributions $\bX,\bY$ over $\zo^n$ regardless of their support size. 
\begin{lemma}\label{technicallemma2}
Let $d,k\in [n]$  with $k\ge d$.
Let $\bX,\bY$ be distributions each supported over strings from $\zo^n$.
Then for any string $c\in \{0,1\}^d$ and $d$-variate polynomial $\phi$~of~degree at most $k-d$, 
$$
\left|\sum_{0\le t_1<\cdots<t_d<n} \phi(t_1,\ldots,t_d)\cdot \Delta_{\bX,\bY,c}\big(\{t_1,\ldots,t_d\}\big)\right|\ \le \|\phi\|_1\cdot n^{O(k)}\cdot \|\deck_k(\bX)-\deck_k(\bY)\|_\infty.
$$
\end{lemma}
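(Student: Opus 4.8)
The plan is as follows. Recall (from the definition of $\Delta_{\bX,\bY,c}$) that for a size-$d$ index set $T=\{t_1<\cdots<t_d\}$ we have $\Delta_{\bX,\bY,c}(T)=\Ex_{\bx\sim\bX}[\chi_{\bx,T,c}]-\Ex_{\by\sim\bY}[\chi_{\by,T,c}]$, where $\chi_{x,T,c}\in\{0,1\}$ is the indicator of the event $x_{t_1}=c_1,\dots,x_{t_d}=c_d$. Since $\|\phi\|_1$ is the sum of the absolute values of the coefficients of $\phi$, by the triangle inequality it suffices to prove the bound (with $\|\phi\|_1$ removed) when $\phi$ is a single monomial $t_1^{a_1}\cdots t_d^{a_d}$ with $a_1+\cdots+a_d\le k-d$; the general case follows by expanding $\phi$ into monomials and summing against their coefficients.

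First I would record a combinatorial identity connecting ``gap-binomial'' weightings to subsequence counts. For nonnegative integers $i_1,\dots,i_d$, set $k'=d+i_1+\cdots+i_d$ (so $k'\le k$) and
$$
G_{\vec{i}}(t_1,\dots,t_d)=\binom{t_1}{i_1}\binom{t_2-t_1-1}{i_2}\cdots\binom{t_d-t_{d-1}-1}{i_d},
$$
a polynomial of total degree $k'-d$. Because $\binom{t_\ell-t_{\ell-1}-1}{i_\ell}$ counts the ways to place $i_\ell$ ``wildcard'' indices strictly between $t_{\ell-1}$ and $t_\ell$ (with $t_0:=-1$), for every $x\in\zo^n$ one has
$$
\sum_{0\le t_1<\cdots<t_d<n} G_{\vec{i}}(t_1,\dots,t_d)\cdot\chi_{x,T,c}\;=\;\sum_{w}\#(w,x),
$$
where the right-hand sum runs over the (at most $2^{k'}$) strings $w\in\zo^{k'}$ that agree with $c$ on the fixed $d$-element coordinate set $\{i_1+1,\ i_1+i_2+2,\ \dots,\ i_1+\cdots+i_d+d\}$ (whose largest element is $k'$): after summing each wildcard bit over $\{0,1\}$, the left side counts exactly the length-$k'$ subsequence occurrences whose bits match $c$ on the $t$-coordinates.

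The core of the argument, and the step I expect to be the main obstacle, is to expand the monomial $t^{\vec{a}}$ into the $G_{\vec{i}}$ with well-controlled coefficients,
$$
t_1^{a_1}\cdots t_d^{a_d}=\sum_{\vec{i}\,:\,i_1+\cdots+i_d\le a_1+\cdots+a_d}\lambda_{\vec{i}}\cdot G_{\vec{i}},\qquad \sum_{\vec{i}}|\lambda_{\vec{i}}|\le n^{O(k)}.
$$
A brute-force inversion of the ``$G_{\vec{i}}\leftrightarrow$ monomial'' change-of-basis matrix is far too lossy here (it would cost roughly $n^{\Theta(k\cdot 2^k)}$), so one must exploit the product structure: substitute $v_1=t_1$ and $v_\ell=t_\ell-t_{\ell-1}-1$ for $\ell\ge2$, so that $t_\ell=v_1+\cdots+v_\ell+(\ell-1)$; expanding $\prod_\ell(v_1+\cdots+v_\ell+(\ell-1))^{a_\ell}$ writes $t^{\vec{a}}$ as a combination of monomials $\prod_\ell v_\ell^{f_\ell}$ (with $\sum_\ell f_\ell\le\sum_\ell a_\ell$) whose coefficients sum in absolute value to at most $\prod_\ell(2\ell-1)^{a_\ell}\le(2n)^k$; then expanding each $v_\ell^{f_\ell}=\sum_{i_\ell}c(f_\ell,i_\ell)\binom{v_\ell}{i_\ell}$, where $c(f,i)$ is the number of surjections $[f]\to[i]$, so that $\sum_i|c(f,i)|$ is the $f$-th Fubini number and hence $\le k^{O(k)}$, multiplies the weight by at most $k^{O(k)}$. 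Composing and using $k\le n$ gives $\sum_{\vec{i}}|\lambda_{\vec{i}}|\le n^{O(k)}$, and since $\prod_\ell\binom{v_\ell}{i_\ell}=G_{\vec{i}}$ this is the asserted expansion.

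Finally I would assemble the pieces. Substituting the expansion into the identity, taking expectations over $\bx\sim\bX$ minus $\by\sim\bY$, and using $\#(w,\bX)-\#(w,\bY)=\binom{n}{k'}\big((\deck_{k'}(\bX))_w-(\deck_{k'}(\bY))_w\big)$ together with the elementary monotonicity $\|\deck_{k'}(\bX)-\deck_{k'}(\bY)\|_\infty\le 2^{k-k'}\|\deck_{k}(\bX)-\deck_{k}(\bY)\|_\infty$ (proved by iterating the standard averaging identity $\#(w',x)=\tfrac1{\,n-|w'|\,}\sum_{j,b}\#(w'_{\le j}\,b\,w'_{>j},x)$ from length $k'$ up to $k$), each $|\#(w,\bX)-\#(w,\bY)|$ is at most $n^{k}2^{k}\|\deck_{k}(\bX)-\deck_{k}(\bY)\|_\infty$. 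With at most $2^{k'}\le 2^k$ strings $w$ in each inner sum and $\sum_{\vec{i}}|\lambda_{\vec{i}}|\le n^{O(k)}$, this yields $\big|\sum_{0\le t_1<\cdots<t_d<n}t^{\vec{a}}\,\Delta_{\bX,\bY,c}(T)\big|\le n^{O(k)}\|\deck_{k}(\bX)-\deck_{k}(\bY)\|_\infty$; summing over the monomials of $\phi$ weighted by their coefficients (whose absolute values sum to $\|\phi\|_1$) gives the lemma.
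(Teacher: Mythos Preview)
Your argument is correct, and the first two expansion steps (monomials $\to$ quasimonomials via the substitution $v_\ell=t_\ell-t_{\ell-1}-1$, then quasimonomials $\to$ products of binomial coefficients via Stirling/Fubini numbers) are exactly the paper's First and Second steps in its proof of Lemma~\ref{mainhardlemma}. The route diverges after that. The paper introduces a further family of polynomials
\[
g_{j_1,\dots,j_d}(t_1,\dots,t_d)=\binom{t_1}{j_1}\binom{t_2-t_1-1}{j_2-j_1-1}\cdots\binom{t_d-t_{d-1}-1}{j_d-j_{d-1}-1}\binom{n-t_d-1}{k-j_d-1},
\]
which carry an extra trailing binomial so that each $g$ corresponds directly to an entry-sum of $\deck_k$ (Lemma~\ref{firsteasylemma}); it then expends a separate combinatorial identity (Claim~\ref{claim:IDd}, a double-counting argument with marked and colored balls) to express each PBC in terms of the $g$'s. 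You instead stop at the PBC stage, observe that each $G_{\vec{i}}$ already encodes a sum of subsequence counts of the \emph{shorter} length $k'=d+\sum_\ell i_\ell$, and close the gap with the elementary monotonicity $\|\deck_{k'}(\bX)-\deck_{k'}(\bY)\|_\infty\le 2^{k-k'}\|\deck_k(\bX)-\deck_k(\bY)\|_\infty$. This trades the paper's Third step and its double-counting identity for a one-line averaging argument, which is cleaner; the paper's route, on the other hand, stays entirely at level $k$ and makes the connection to $\deck_k$ a touch more explicit. Both yield the same $n^{O(k)}$ bound.
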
 
 \begin{proof}[Proof of Lemma \ref{maintechnical}]
Let $\bX$ and $\bY$ be two distributions each supported over at most $\ell$ strings from $\{0,1\}^n$.
It then follows from Lemma \ref{large-TV-dist-implies-large-delta-max} and Lemma \ref{coverlemma2}
  that there exists a string $c\in \{0,1\}^d$ with $d=\lfloor \log (2\ell)\rfloor$ 
 such that $\Delta:=\Delta_{\bX,\bY,c}$ satisfies
$ \|\Delta \|_\infty \ge \dtv(\bX,\bY)/\ell^{O(\ell)}$  {and} 
  has an $(L,q,\lambda)$-group cover for some $L\le 2^{2d\ell}$, $q\le \ell$, and $\lambda = \ell^{O(\ell)}.$ 
As we assumed that $\ell\le \log n$, both $d$ and $q$ are at most $\log n$
  and $L,\lambda\le \ell^{O(\ell)}\le (\log n)^{O(\log n)}$
  (so Lemma \ref{technicallemma1} applies).

Let $m=d(n-1)L^2$ 
and   $\phi$ be the polynomial given in Lemma \ref{technicallemma1}.
Let $k^*= \deg(\phi)+d$ (we~set $k=k^*$ in Lemma \ref{technicallemma2};
  the choice of $k^*$ ensures that $\deg(\phi)\le k^*-d$ as required in Lemma \ref{technicallemma2})  
with 
$$k^*=O(\sqrt{m}\cdot \log^{4q+1} m)= \sqrt{n}\cdot  (\log n)^{O(\ell)}.
$$
Combining Lemma \ref{technicallemma1} and Lemma \ref{technicallemma2}, we have \vspace{0.15cm}
$$
\frac{\|\Delta\|_\infty}{\exp(\sqrt{n}\cdot (\log n)^{O(\ell)})}
\le 
\exp\left(\sqrt{n}\cdot (\log n)^{O(\ell)}\right)\cdot n^{\sqrt{n}\hspace{0.04cm}\cdot\hspace{0.04cm}
 (\log n)^{O(\ell)}}\cdot \|\deck_{k^*}(\bX)-\deck_{k^*}(\bY)\|_\infty.\vspace{0.15cm}
$$  
The lemma follows from the fact that $\|\Delta\|_\infty\ge \dtv(\bX,\bY)/\ell^{O(\ell)}$.\end{proof} 

\subsection{Proof of Lemma \ref{technicallemma1}}\label{prooftechnical1}  

Let $\smash{\Delta}$ be a function over $d$-subsets of $[0:n-1]$ that is not identically
  zero and has an $(L,q,\lambda)$-group cover $\{(T_a,\calS_a):a\in [L]\}$ 
  with a $q$-way partition $A_1,\ldots,A_q$ of $[L]$.
We start with a high-level description of the $d$-variate polynomial $\phi$. 

To evaluate $\phi$ on a tuple $(t_1,\ldots,t_d)$, we first project
  $(t_1,\ldots,t_d)$ onto a line along the direction of $(w_1,\ldots,w_d)$ for some 
  relatively small positive integers $w_1,\ldots,w_d$ to be specified later, and 
  then apply a univariate polynomial $f(\cdot)$ on the image of the projection.
In other words $\phi$ takes the form 
\begin{equation}\label{form}
\phi(t_1,\ldots,t_d)=f\big(w_1t_1+\cdots+w_dt_d\big)
\end{equation}
for some positive integers $w_1,\ldots,w_d\in [L^2]$.
 {We give details below.}

\subsubsection{The projection }

Let $m=d(n-1)L^2$ and let $w$ be the following function from size-$d$ subsets of $[0:n-1]$ to {$[0:m]$}: 
$$
w(T)=w_1t_1+\cdots+w_dt_d,\quad\text{where $T=\{t_1,\ldots,t_d\}$ with $t_1<\cdots<t_d$.}
$$
 {So $w$ is the projection function that maps a size-$d$ subset $T$ of $[0:n-1]$ (or equivalently, a sorted $d$-tuple of distinct values from $[0:n-1]$) to a location on the real line.} 
Claim \ref{blue1} implies that there exist $w_1,\ldots,w_d\in [L^2]$ such that 
   all anchor sets in the $L$-cover are mapped to distinct locations.

\begin{claim}\label{blue1}
If $w_1,\ldots,w_d$ are drawn independently and uniformly at random from $[L^2]$ 
  then $w(T_a)$ $\ne w(T_{a'})$ for all $a\ne a'\in [L]$ with probability at least $1/2$.
\end{claim}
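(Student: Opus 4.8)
The plan is to prove Claim~\ref{blue1} by a union bound over pairs $a\ne a'\in[L]$, showing that for each such pair the ``bad event'' $w(T_a)=w(T_{a'})$ occurs with probability at most $1/(2\binom{L}{2})$, which suffices since there are $\binom{L}{2}$ pairs. Fix $a\ne a'$, and write $T_a=\{s_1<\cdots<s_d\}$ and $T_{a'}=\{s_1'<\cdots<s_d'\}$. Then
\[
w(T_a)-w(T_{a'})=\sum_{i=1}^{d} w_i\,(s_i-s_i'),
\]
and since $T_a\ne T_{a'}$ (anchor sets of distinct cover elements are distinct — this is part of the definition of a group cover), at least one coefficient $s_i-s_i'$ is nonzero. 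Let $j$ be an index with $s_j\ne s_j'$.

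The key step is a standard ``fix all but one coordinate'' argument. Condition on arbitrary values of $w_i$ for all $i\ne j$; then $w(T_a)-w(T_{a'})$ is an affine function of $w_j$ with leading coefficient $s_j-s_j'\ne 0$, so it vanishes for at most one value of $w_j$ in $[L^2]$. Hence, conditioned on the other coordinates, $\Pr_{w_j}[w(T_a)=w(T_{a'})]\le 1/L^2$, and therefore unconditionally $\Pr[w(T_a)=w(T_{a'})]\le 1/L^2$. Taking a union bound over all $\binom{L}{2}<L^2/2$ ordered-into-unordered pairs gives
\[
\Pr\bigl[\exists\, a\ne a':\ w(T_a)=w(T_{a'})\bigr]\ \le\ \binom{L}{2}\cdot\frac{1}{L^2}\ <\ \frac12,
\]
so with probability at least $1/2$ all the $w(T_a)$ are distinct, as claimed.

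I do not expect any real obstacle here; this is a routine Schwartz–Zippel-style counting argument, and the only thing to be careful about is making sure the definition of an $(L,q,\lambda)$-group cover indeed guarantees that the anchor sets $T_a$ are pairwise distinct, so that some coefficient $s_j-s_j'$ is genuinely nonzero (if anchor sets could coincide the claim would be false). If the cover definition only guarantees the pairs $(T_a,\calS_a)$ are distinct but not the $T_a$'s themselves, one would instead restrict the union bound to pairs with $T_a\ne T_{a'}$ and note that the conclusion ``all anchor sets are mapped to distinct locations'' should be read as ``all \emph{distinct} anchor sets''; but in the intended setup the $T_a$ are distinct, so the bound as stated goes through verbatim.
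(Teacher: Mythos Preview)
Your proof is correct and follows essentially the same approach as the paper: fix a pair of distinct anchor sets, observe that the difference $w(T_a)-w(T_{a'})$ is a nonzero linear form in the $w_i$'s, condition on all but one coordinate to bound the collision probability by $1/L^2$, and union-bound over the $\binom{L}{2}$ pairs. Your remark about needing the $T_a$'s to be pairwise distinct is apt; the paper's definition of a cover does require this.
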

\begin{proof}
Let $S=\{s_1,\ldots,s_d\}$ and $T=\{t_1,\ldots,t_d\}$ denote two size-$d$ subsets of $[0:n-1]$
  that satisfy $s_1<$ $\cdots <s_d$, $t_1<\cdots <t_d$ and $S\ne T$.
Then the probability that $w(S)=w(T)$ equals
\begin{equation} \label{hahaha1}
\Pr\big[w_1s_1+\cdots+w_ds_d=w_1t_1+\cdots+w_dt_d\big].
\end{equation}
As $(s_1,\dots,s_d)\ne (t_1,\dots,t_d)$, one of the $d$ quantities $s_i-t_i$ is nonzero; say without loss of generality $s_1 \neq t_1.$ Fixing any outcomes of random draws of $w_2,\dots,w_d$, there is a unique
   outcome of $w_1$ which would result in the equation in (\ref{hahaha1}), 
  and the probability that $w_1$ takes this particular outcome is either $1/L^2$ or zero (if it is not in $[L^2]$).
As a result, the probability in (\ref{hahaha1}) is at most $1/L^2$,
  and the claim follows from a union bound over ${L\choose 2}$ events.
\end{proof}

We fix such a tuple $w_1,\ldots,w_d\in [L^2]$ that satisfies Claim~\ref{blue1} for the rest of the proof.

\subsubsection{The univariate polynomial }
Now we move to the more difficult part of choosing the univariate polynomial $f$ in (\ref{form}).

\medskip
\noindent {\bf A useful tool.} 
A key tool for our construction of $f$ is a univariate polynomial $h$ with several useful properties described below.  Figure~\ref{fig:plot-of-h}  gives a schematic representation of the key upper bounds on $|h(b)|$ provided by item (2) in Lemma~\ref{uuuuu}.

\begin{figure}\centering
  \begin{minipage}[c]{0.36\textwidth}
    \includegraphics[width=2.5in]{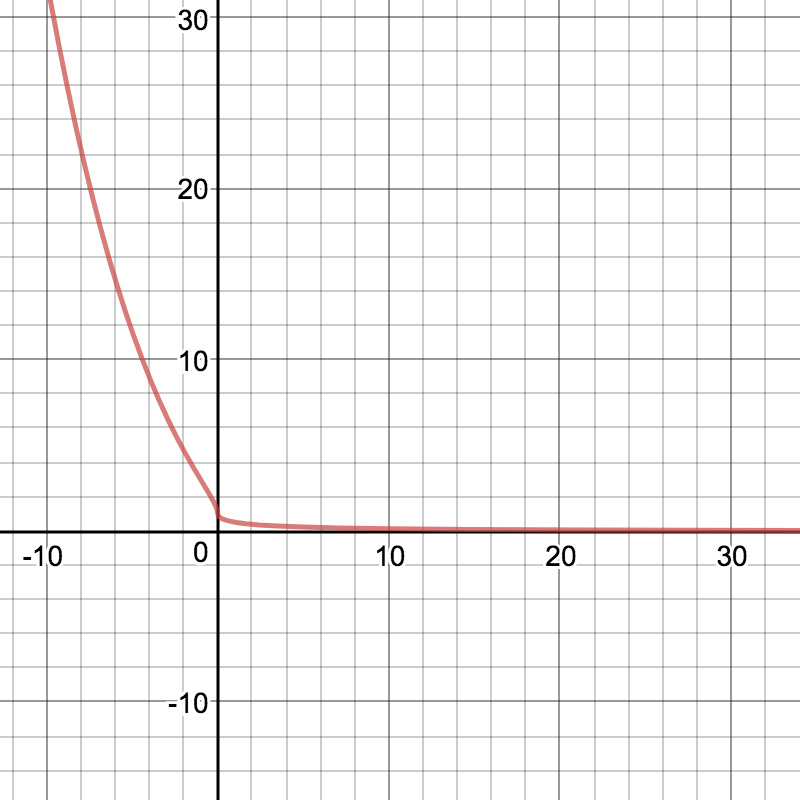}
  \end{minipage} \hskip0.5in
  \begin{minipage}[c]{0.46\textwidth}
    \caption{A schematic representation of the bounds on $|h(b)|$ given by item (2) of Lemma \ref{uuuuu}.  The three key properties are that (i) $h(0)=1$; (ii) for $b \in [m]$, the upper bound on $|h(b)|$ is very small and decreases rapidly as we move away from 0; and (iii) for $b \in [-m:-1]$, the upper bound on $|h(b)|$ is not too large and does not increase too rapidly as we move away from 0.} 
\label{fig:plot-of-h}
  \end{minipage}
\end{figure}

\begin{lemma}\label{uuuuu}
There is a univariate polynomial $h$ with the following properties:
\begin{enumerate}
\item $h$ has degree $O(\sqrt{m}\log m)$.
\item $h(0)=1$ and for each $b\in [m]$, 
$$|h(b)|\le \frac{1}{2^{\sqrt{b}}}\quad \text{and}\quad |h(-b)|\le e^{6\sqrt{b}\log m}.$$
\item  {$h$ satisfies $\|h\|_1 \leq \exp(O(\sqrt{m}\log m))$}.\vspace{0.1cm}
\end{enumerate}
\end{lemma}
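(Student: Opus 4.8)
The plan is to build $h$ as a product of $O(\log m)$ rescaled Chebyshev polynomials, one per dyadic scale of $[1,m]$, each of degree $\Theta(\sqrt m)$. Set $J:=\lceil\log_2 m\rceil$ and $D:=\lceil C_0\sqrt m\,\rceil$ for a modest absolute constant $C_0$ fixed at the end, write $T_D$ for the degree-$D$ Chebyshev polynomial of the first kind, and for $j\in[J]$ let $\ell_j$ be the increasing affine bijection from $[2^{j-1},2m]$ onto $[-1,1]$. Define
\[
p_j(x)\ :=\ \frac{T_D\big(\ell_j(x)\big)}{T_D\big(\ell_j(0)\big)},\qquad h\ :=\ \prod_{j=1}^{J}p_j .
\]
For every $j\le J$ one has $2^{j-1}<m$, so the interval $[2^{j-1},2m]$ has length $r_j:=2m-2^{j-1}\in(m,2m)$, the slope of $\ell_j$ is $2/r_j\le 1$, and $\ell_j(0)=-(1+\eta_j)$ with $\eta_j:=2^j/r_j\in(0,2)$. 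I get $p_j(0)=1$ (hence $h(0)=1$) and $\deg h=DJ=O(\sqrt m\log m)$ for free, which already gives item~(1). For item~(3) I would use submultiplicativity of the coefficient $\ell_1$-norm, $\|h\|_1\le\prod_j\|p_j\|_1$: composing the degree-$D$ Chebyshev (which has $\|T_D\|_1=2^{O(D)}$) with an affine map of slope $\le1$ and intercept of magnitude $\le3$ multiplies the coefficient norm by at most $4^{D}$, while dividing by $|T_D(\ell_j(0))|=T_D(1+\eta_j)\ge1$ only helps, so each $\|p_j\|_1=2^{O(D)}$ and hence $\|h\|_1=2^{O(DJ)}=\exp(O(\sqrt m\log m))$.

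The substance is item~(2). Here I would use two standard Chebyshev estimates, both consequences of $T_D(\cosh\theta)=\cosh(D\theta)$ together with $\tfrac12e^\theta\le\cosh\theta\le e^\theta$: namely $T_D(1+\eta)\ge\tfrac12\exp(D\sqrt{\eta})$ for $0\le\eta\le2$ (using $\cosh^{-1}(1+\eta)\ge\sqrt{\eta}$ on that range), and $T_D(1+\mu)/T_D(1+\eta)\le2\exp\!\big(D\sqrt{2(\mu-\eta)}\big)$ for $0\le\eta\le\mu$ (using $\cosh^{-1}(1+\mu)-\cosh^{-1}(1+\eta)=\int_\eta^\mu dt/\sqrt{t^2+2t}\le\sqrt{2(\mu-\eta)}$). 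For the decay on the positive integers, fix $b\in[m]$ and let $j_0\in[J]$ satisfy $2^{j_0-1}\le b\le2^{j_0}$, so $\sqrt b\le2^{j_0/2}$. If $j\le j_0$ then $b\in[2^{j-1},2m]$, so $\ell_j(b)\in[-1,1]$ and $|p_j(b)|\le1/T_D(1+\eta_j)\le1$; moreover for $j=j_0$, using $\eta_{j_0}\ge2^{j_0}/(2m)$, the first estimate gives $|p_{j_0}(b)|\le2\exp(-D\sqrt{\eta_{j_0}})\le2\exp(-(C_0/\sqrt2)\,2^{j_0/2})\le2^{-2^{j_0/2}}$ once $C_0$ is a large enough absolute constant. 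If $j>j_0$ then $b$ lies to the left of $[2^{j-1},2m]$, so $\ell_j(0)<\ell_j(b)\le-1$, and since $|T_D|$ is increasing on $(-\infty,-1]$ with $|\ell_j(b)|<|\ell_j(0)|$ we get $|p_j(b)|<1$. Multiplying the three groups of factors gives $|h(b)|\le|p_{j_0}(b)|\le2^{-2^{j_0/2}}\le2^{-\sqrt b}$, exactly the claimed bound.

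For the one-sided growth at the negative integers, a short computation gives $\ell_j(-b)=-(1+\eta_j+2b/r_j)$ for $b\in[m]$, so by the second Chebyshev estimate (with $\mu=\eta_j+2b/r_j$, $\eta=\eta_j$) and $r_j\ge m$,
\[
|p_j(-b)|\ =\ \frac{T_D(1+\eta_j+2b/r_j)}{T_D(1+\eta_j)}\ \le\ 2\exp\!\big(D\sqrt{4b/r_j}\big)\ \le\ 2\exp\!\big(2D\sqrt b/\sqrt m\big)\ =\ e^{O(\sqrt b)},
\]
using $D=O(\sqrt m)$. Multiplying over the $J=O(\log m)$ scales yields $|h(-b)|\le e^{O(\sqrt b\log m)}$. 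Finally I would fix $C_0$ small enough (a modest absolute constant suffices, compatibly with the earlier requirement that $C_0$ also be large enough) and treat $m$ below an absolute constant separately, after which the hidden constant is at most $6$; the exact value $6$ is routine constant-chasing that I would not belabor.

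The main obstacle is the tension between the two requirements on the single degree parameter $D=\Theta(\sqrt m)$: it must be a \emph{large} enough multiple of $\sqrt m$ for the single factor $p_{j_0}$ to pull $|h|$ below $2^{-\sqrt b}$ on the dyadic block containing $b$, uniformly over all blocks; yet \emph{small} enough that the product of the $\Theta(\log m)$ growth factors on the negative axis stays within $e^{6\sqrt b\log m}$. What makes both affordable is the design choice to require each $p_j$ to be small on the \emph{entire} right tail $[2^{j-1},2m]$ rather than only on its own block $[2^{j-1},2^j]$: this makes $|p_j|\le1$ automatic everywhere to the left of $2^{j-1}$, which eliminates all cross-scale interaction on the positive axis and collapses the estimate of $|h(b)|$ to the single factor $p_{j_0}$. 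A polynomial forced to be tiny only on a short interval necessarily blows up to the right of it, so without this device the positive-side bookkeeping would be considerably more delicate.
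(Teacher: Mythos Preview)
Your construction is correct and is essentially the Borwein--Erd\'elyi--K\'os dyadic product of rescaled Chebyshev polynomials that the paper also invokes for its proof in the appendix; the degree, $\|h\|_1$, and two-sided bounds all go through as you outline, with the positive-side decay coming from the single critical scale $j_0$ and all other scales contributing factors at most~$1$.

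One small caveat: the parenthetical at the end (``fix $C_0$ small enough\ldots compatibly with the earlier requirement that $C_0$ also be large enough'') is a bit glib. Tracking your own inequalities, the positive side needs roughly $C_0\ge 2\sqrt{2}\ln 2\approx 1.96$, while on the negative side with $C_0=2$ and $J\approx\log_2 m$ you get an exponent of about $4.7\sqrt{b}\log_2 m$; so the stated constant $6$ is comfortable if $\log$ means $\log_2$ (the convention throughout the paper) but would come out closer to $7$ for natural log. Since every downstream use of the lemma only needs $|h(-b)|\le\exp(O(\sqrt{b}\log m))$, this is cosmetic, but you should state the constant you actually get rather than asserting $6$ without the arithmetic.
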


 Our construction of the polynomial $h$ is based on the Chebyshev polynomial and builds on an earlier construction due to Borwein et al.~\cite{BEK99}.  We prove Lemma~\ref{uuuuu} in Appendix~\ref{ap:polynomial-h}, and we explain the role that $h$ plays in the construction of our desired univariate polynomial $f$ under the heading ``{\bf The high-level idea}'' below, after first providing some useful preliminary explanation. 

Given that our polynomial $\phi$ takes the form of (\ref{form}),
 {the crucial quantity whose magnitude we are trying to lower bound, namely}
$$
\sum_{0\le t_1<\cdots<t_d<n} \phi(t_1,\ldots,t_d)\cdot \Delta\big(\{t_1,\ldots,t_d\}\big)
$$
 {(recall the LHS of Lemma~\ref{technicallemma1}),}
can be written as

\begin{equation} \label{eq:turnip}
\sum_{b\in [0:m]} f(b)\cdot \Gamma(b),
\end{equation}
where 
$\Gamma:[0:m]\rightarrow \R$ is a function that is defined using $\Delta$ as follows: 
\begin{equation}\label{heheeq1}
\Gamma(b)=\sum_{T:\hspace{0.06cm}w(T)=b} \Delta(T),
\end{equation}
where the sum is over all $d$-subsets $T$ of $[0:n-1]$.



 To better understand $\Gamma$, we use the $(L,q,\lambda)$-group cover of $\Delta$ to introduce two new 
  sequences $\tau_0,\ldots,\tau_r$ and $m_0, \ldots, m_r$, for some value $r \in [0:q-1]$ that is defined below. 
We start with some notation. For each $i\in [q]$,
  we let $\calG_i=\cup_{a\in A_i} \calS_a$ and refer to $\calG_i$ as \emph{group $i$}.
We refer to
  the $T_a$~with the smallest $w(T_a)$ among all $a\in A_i$
  as the \emph{anchor} of group $i$ and denote it by $V_i$.
(By Claim~\ref{blue1}, each group has a unique anchor and we have 
  $w(T)>w(V_i)$ for all $T\in \calG_i$ other than $V_i$.)
We let $v_i=|\Delta(V_i)|$ and $\kappa_i=w(V_i)$,
 {so $\kappa_i$ is the location that the anchor $V_i$ of $\calG_i$ is projected to.}
By the definition of an $(L,q,\lambda)$-group cover 
  and Claim \ref{blue1}, we have that each $v_i>0$, the $\kappa_i$'s are distinct,  
$$
\max_{i\in [q]} v_i\ge \frac{\|\Delta\|_\infty}{\lambda}.
$$


  Now we are ready to define $r$ and the two sequences. See Figure~\ref{fig:step-sequences} for an illustration of these sequences. 
First we let $\tau_0=\max_{i\in [q]} v_i$ and also let  
  $m_0\in [0:m] $ 
  denote the smallest $\kappa_i$ (among~all groups $i\in [q]$) with $v_i=\tau_0$.
We are done {and the value of $r$ is 0} if no $\kappa_i$ is smaller than $m_0$~{(i.e. the anchor of every other group is projected to a larger location value than $m_0$)}; otherwise, we let
  $\tau_1<\tau_0$ be the largest value of $v_i$ {over those $i\in [q]$ that have $\kappa_i < m_0$} and also let 
  $m_1<m_0$ be the smallest $\kappa_i$ such that $v_i=\tau_1$.
We~are done and the value of $r$ is 1 if no $\kappa_i$ is smaller than $m_1$  {(i.e. every other group anchor   
  is projected to a larger location value than $m_1$)}; otherwise we repeat the process. Continuing in this way, at the end we obtain two sequences:
  $$0<\tau_r<\cdots<\tau_0\qquad \text{with $\tau_0=\max_{i\in [q]} v_i\ge \frac{\|\Delta\|_\infty}{\lambda}$ and}\qquad 0\le m_r<\cdots<m_0 \leq m,$$ for some value $r \in [0:q-1]$. 
 {We say that $(\tau_0,\dots,\tau_r)$ is the \emph{$\tau$-step-sequence} and that $(m_0,\dots,m_r)$ is the \emph{$m$-step-sequence} for $\Gamma$.

\begin{figure}
\centering
    \includegraphics[width=6.3 in]{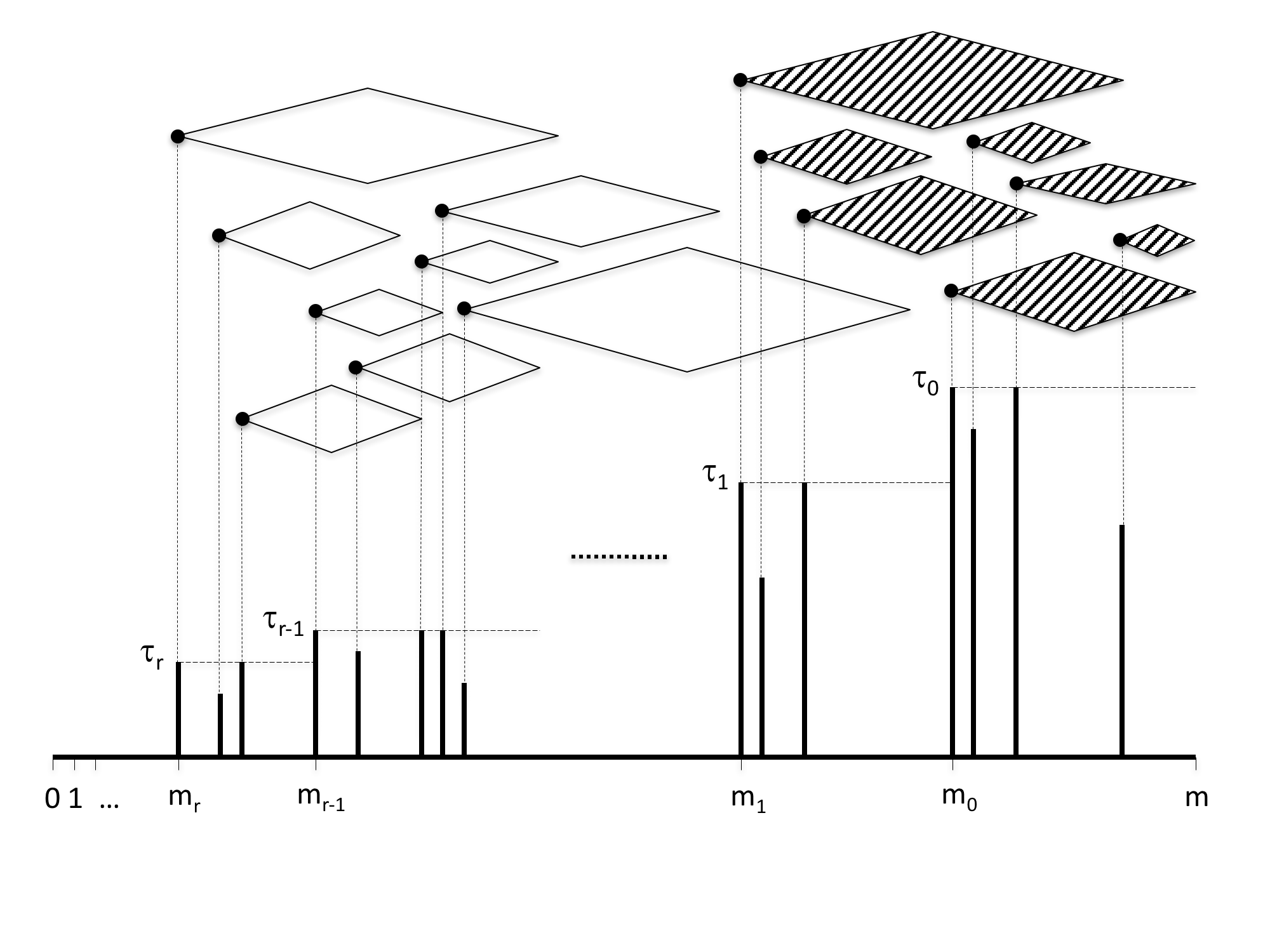}\vspace{-1.4cm}
\caption{{An illustration of a $\tau$-step-sequence and its associated $m$-step-sequence. The values $\tau_r < \tau_{r-1} < \cdots < \tau_1 < \tau_0$ (which may be arbitrary real positive values) are the heights of the bars at locations $0 \le m_r <m_{r-1}< \cdots < m_1 < m_0 \leq m$ (these locations are integers).  The location of each vertical bar corresponds to some $\kappa_i, i \in [q]$, and its height is $v_i$; {the corresponding group
  $\calG_i$ is illustrated as a diamond, with $V_i$ being its left corner}. Note that all the bars between locations $m_i$ and $m_{i-1}$ have height at most $\tau_i$.} {See Example~\ref{ex} for an explanation of why certain diamonds are shaded in the figure.}} 
\label{fig:step-sequences}
\end{figure}
  
  \medskip

\noindent {\bf The high-level idea.} Before entering into further details we give intuition for the polynomial $f$.  Looking ahead to (\ref{eq:fdef}), the polynomial $f$ is essentially a translation of the polynomial $h$ depicted in Figure~\ref{fig:plot-of-h}, i.e. \hspace{-0.03cm}$f(x)$ is essentially $h(x-m_\alpha)$ for some $\alpha\in [0:r]$.\footnote{The exponent of $h$ in the exact definition of our $f$ given in Equation~(\ref{eq:fdef}) is needed for technical reasons that are not important for this intuitive explanation.}  Recalling the key properties~of $h$, we see that 
\begin{itemize}
\item $f(m_\alpha)=1.$\vspace{-0.06cm}
\item $|f(b)|$ is ``very small'' for $b > m_\alpha$; and\vspace{-0.06cm}
\item  $|f(b)|$ is ``not too large'' for $b < m_\alpha$.
\end{itemize}

The crux of our analysis below is to establish that there is a suitable value $m_\alpha$ in the $m$-step-sequence which is such that the magnitude of the single summand
$f(m_\alpha) \cdot \Gamma(m_\alpha)$ in (\ref{eq:turnip}) {is greater than} the contribution of all other summands  in (\ref{eq:turnip}). 

To gain some intuition for why this is the case, let us pretend that instead of the $\Gamma$ being~defined as in (\ref{eq:turnip}), the definition of $\Gamma$ instead only took a sum over the $q$ anchors  $V_1,\dots,V_q$ of the $q$ groups $\calG_1,\ldots,\calG_q$
  (i.e. $\Gamma$ is supported on $\kappa_i$, $i\in [q]$, with $|\Gamma(\kappa_i)|=v_i$). 
   Of~course this is not actually~the case since each group $\calG_i$~in general contains
   many more sets than just its anchor $T_i$, but it turns out that the effect of other sets in 
     $\supp(\Delta)$ will only cost us some extra $n^d\lambda$ factors in the analysis (corresponding to the $n^d\lambda$ factors in properties (ii) and (iii) of $\Gamma_0,\dots,\Gamma_r$ as described below,
where $\smash{n^d\ge {n\choose d}}$ just serves as a  bound for the number of size-$d$ subsets) which turn  out to be manageable.  

In this hypothetical scenario the only nonzero values of $\Gamma(b)$ that would enter the picture would be the $v_i$ values at locations $\kappa_i$, $i\in [q]$, 
which are the heights of the bars in Figure~\ref{fig:step-sequences}.
The desired $m_\alpha$ could then be identified as follows:  
\begin{flushleft}\begin{itemize}
\item We proceed in an inductive fashion. For each $p\in [0:r]$, we show that there is a choice of 
  $\alpha\in [0:p]$ such that, by setting $f(x)=h(x-m_\alpha)$, the value of $|f(m_\alpha)\cdot \Gamma(m_\alpha)|$
  outweighs $|f(b)\cdot \Gamma(b)|$ for every other $b\in  [m_p:m]$.
The choice of $\alpha$ at the end of the induction when $p$ reaches $r$ gives us the desired location
$m_\alpha$ for the translation of $h$
  to define $f$.

\item The base case when $p=0$ is trivial by setting $\alpha=0$ and $f(x)=h(x-m_0)$. 
Here we have that $|f(m_0)\cdot \Gamma(m_0)|$ outweighs $|f(b)\cdot \Gamma(b)|$ 
  for all $b>m_0$ because $|\Gamma(m_0)|=\tau_0\ge |\Gamma(b)|$ by the definition of
  our step-sequences and the fact that $f(m_0)=1$ is ``much larger'' than
  $|f(b)|$ for $b>m_0$.

\item 
Next we move to $p=1$, and now we need to take $\Gamma(b)$, $b\in [m_1:m_0-1]$, into consideration.
To this end we compare $\tau_0/\tau_1$ with $\exp(\sqrt{m_0-m_1})$ and consider the following two cases.
\begin{itemize}
\item If $\tau_0/\tau_1$ is larger then we can keep $\alpha=0$ and $f(x)=h(x-m_0)$ because
  $|f(m_0)\cdot \Gamma(m_0)|$ outweighs $|f(m_1)\cdot \Gamma(m_1)|$
  (since $\Gamma(m_1)=\tau_1$ and $f(m_1)$ is roughly\footnote{
  	This is not entirely precise because 
    in (2) of Lemma \ref{uuuuu} there is indeed an extra factor of $\log m$ in the exponent on the left side of $0$; overcoming this factor of $\log m$ is the reason why we end up with the exponent as in  (\ref{eq:fdef}).} 
$\exp(\sqrt{m_0-m_1})$)
  as well as $|f(b)\cdot \Gamma(b)|$ for all $b\in [m_1+1:m_0-1]$ 
  (since $|f(m_1)|>|f(b)|$ and by the definition of our step-sequences, $|\Gamma(m_1)|\ge |\Gamma(b)|$).
By the inductive hypothesis we also know that $|f(m_0)\cdot \Gamma(m_0)|$ 
   outweighs $|f(b)\cdot \Gamma(b)|$ for all $b>m_0$.\vspace{0.15cm}

\item Otherwise (if $\tau_1$ is larger than $\tau_0/\exp(\sqrt{m_0-m_1})$)
  we show that setting $\alpha=1$ and $f(x)=h(x-m_1)$ works.
On the one hand, $|f(m_1)\cdot \Gamma(m_1)|$ outweighs $|f(b)\cdot \Gamma(b)|$
  for $b\in [m_1+1:m_0-1]$ since $|\Gamma(b)|\le |\Gamma(m_1)|$ by the definition of
  our step-sequences and the fact that $f(m_1)=1$ is ``much larger'' than $|f(b)|$ (similar to the base case).
On the other hand, $|f(m_1)\cdot \Gamma(m_1)|=\tau_1$ outweighs $|f(m_0)\cdot \Gamma(m_0)|=
 |f(m_0)|\cdot \tau_0$ (since
 $|f(m_0)|$ is, roughly speaking, $\exp(-\sqrt{m_0-m_1})$) as well as $|f(b)\cdot \Gamma(b)|$
 for $b>m_0$ (since $|f(m_0)|>|f(b)|$ and $|\Gamma(m_0)|\ge |\Gamma(b)|$ so the contribution from $b$
 is smaller than that from $m_0$).
\end{itemize}
\item Continuing in this fashion, we show that, if $\alpha$ is the choice for some $p\in [0:r-1]$,
  then for $p+1$ we can either keep the same choice of $\alpha$ or move $\alpha$ to $p+1$,
  depending on the result of a similar comparison between $\tau_\alpha/\tau_{p+1}$ and $\exp(\sqrt{m_\alpha-m_{p+1}})$.
This finishes the induction.
\end{itemize}\end{flushleft}
The above reasoning is formalized in the statement and proof of the (crucial) Lemma~\ref{starlem}, which additionally has to deal with the complication that it must address the real scenario rather than the hypothetical simplification considered in the informal description above.

\vspace{0.1cm} 

\medskip

 Now we turn to the details.
For each $b\in [0:m]$, we let
$$
\calG_{\ge b}=\bigcup_{i\in [q]:\hspace{0.06cm} \kappa_i\ge b} \calG_i.
$$
For each $p\in [0:r]$ let $\Gamma_p$  denote the following function on $[0:m]$:
$$
\Gamma_p(b)=\sum_{T\in \calG_{\ge m_p}:\hspace{0.05cm}w(T)=b} \Delta(T).
$$
In words, the value of $\Gamma_p$ evaluated at a location value $b$ is obtained as follows:  for each group $\calG_i$  for which the location $\kappa_i$ that the anchor set $V_i$ is projected to is at least $m_p$, 
we sum the value of $\Delta(T)$ over all $T \in \calG_i$ which are mapped by the projection function $w$ to the location $b$.
\begin{example}
\label{ex}In Figure \ref{fig:step-sequences}, only $\calG_i$'s that correspond to shaded diamonds are considered in $\Gamma_1$. 
  \end{example}

We have the following properties from our choices of $\tau_i$'s and $m_i$'s:
\begin{flushleft}

\begin{enumerate}
\item[(i)] $\Gamma=\Gamma_r$. {This is because every location $\kappa_i$ is at least $m_r$.}\vspace{-0.06cm}
\item[(ii)] $\Gamma_0$ is such that $\Gamma_0(b)=0$ for all $b<m_0$,
  $|\Gamma_0(m_0)|=\tau_0$, and $$|\Gamma_0(b)|\le n^d \lambda\tau_0$$ for all $b>m_0$.
  (The last bound holds just because there are at most $n^d$ many size-$d$ subsets {and the maximum value of $|\Delta(T)|$ on any $T$ contributing to the sum $\Gamma_0(b)$ is at most $\lambda\tau_0$.})\vspace{-0.06cm}
\item[(iii)] {Generalizing the previous property,} for each $p\in [r]$,  $\Gamma_p(b)=0$ for $b<m_p$,
  $|\Gamma_p(m_p)|=\tau_p$ and $$ \big| \Gamma_p(b)-\Gamma_{p-1}(b) \big|
  \le n^d\lambda\tau_p$$ {for all $b>m_p$}
  {(since the maximum magnitude of $\Delta(T)$ on any subset $T$ contributing to the sum $\Gamma_p(b)$ but not to $\Gamma_{p-1}(b)$ is at most $\lambda\tau_p$)}.
 
   \end{enumerate}\end{flushleft}

We prove the following crucial lemma, Lemma~\ref{starlem}, concerning $\Gamma_p$ by induction on $p$.
Intuitively, the lemma states that for each $p$ there is a suitable index $\alpha \leq p$  (so $m_\alpha \geq m_p$) such that (a) the magnitude~of $\Gamma_p(m_\alpha)$ is not too small compared to $\tau_0$ (this is given by (\ref{hueq1})); (b) for locations $b > m_\alpha$ the~magnitude of $\Gamma_p(b)$ is not too large compared to the magnitude of $\Gamma_p(m_\alpha)$ (this is given by (\ref{hueq2})); and (c) for locations $b$ between $m_p$ and $m_\alpha$ the magnitude of $\Gamma_p(b)$ is small compared to the magnitude of $\Gamma_p(m_\alpha)$ (this is given by (\ref{hueq3})).
In all three places the meaning of ``small''  or ``large'' is specified~by a second
  parameter $\beta$ which can grow slowly with $p$. }

 \begin{lemma}\label{starlem}
Assume that $d \le \log n$ and $\lambda\le (\log n)^{O(\log n)}$.
Then for each $p\in [0:r]$ there are two parameters $\alpha_p\in [0:p]$ and $\beta_p\in [0:4p+3]$
  (letting~$\alpha$ denote $\alpha_p$ and $\beta$ denote $\beta_p$ below for convenience) such that
\begin{equation}\label{hueq1}
\big|\Gamma_p(m_\alpha)\big|\cdot 2^p\cdot  \exp\big(\sqrt{m}\cdot \log^{\beta} m\big)\ge \tau_0
\end{equation} 
and every index $b\in [m_p:m]$ satisfies 
\begin{enumerate}
\item If $b\ge m_{\alpha }$, then 
\begin{equation}\label{hueq2}
\big|\Gamma_{p}(b)\big|\le 
\big|\Gamma_{p}(m_\alpha)\big|\cdot 2^p\cdot  \exp\big({ \sqrt{b-m_\alpha}\cdot\log^\beta m}\big);
\end{equation} 
\item If $m_p\le b<m_{\alpha}$, then 
\begin{equation}\label{hueq3}
\big|\Gamma_p(b)\big| \cdot \exp\big({\sqrt{ m_\alpha-b}\cdot\log^{\beta+3} m}\big)\le 
 2^p\cdot \big|\Gamma_p(m_\alpha)\big|.
\end{equation}
\end{enumerate}
\end{lemma}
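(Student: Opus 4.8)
The plan is to prove Lemma~\ref{starlem} by induction on $p$, following the roadmap sketched in the ``high-level idea'' discussion. The base case $p=0$ sets $\alpha_0=0$ and $\beta_0=0$ (or some small absolute constant that absorbs the $\log m$ discrepancy noted in the footnote to Lemma~\ref{uuuuu}); then $|\Gamma_0(m_0)|=\tau_0$ by property (ii), so (\ref{hueq1}) holds with room to spare, (\ref{hueq3}) is vacuous since there is no $b$ with $m_0 \le b < m_{\alpha_0}=m_0$, and (\ref{hueq2}) reduces to bounding $|\Gamma_0(b)|\le n^d\lambda\tau_0$ for $b>m_0$, which is dominated by $\tau_0 \cdot 2^0 \cdot \exp(\sqrt{b-m_0}\cdot \log^{\beta_0}m)$ once we check that $n^d\lambda = (\log n)^{O(\log n)} \cdot n^{\log n} \le \exp(\text{const}\cdot \log^2 m)$ is absorbed — this is exactly what the slack in the exponent $\beta$ (allowing it to reach $4p+3$) and the $\log^{\beta}m$ vs $\sqrt{b-m_\alpha}\cdot\log^\beta m$ gap is there to handle. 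I would be slightly careful here: for $b$ very close to $m_0$ (say $b-m_0=1$), the factor $\exp(\sqrt{b-m_\alpha}\cdot\log^\beta m)$ is only $\exp(\log^\beta m)$, so I need $\beta_0\ge 2$ or so for the $n^d\lambda$ term to be swallowed; setting $\beta_0$ to a suitable absolute constant in $[0:3]$ makes this work and keeps $\beta_0\le 4\cdot 0+3$.

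For the inductive step, assume the statement for $p$ with parameters $\alpha=\alpha_p$, $\beta=\beta_p$, and move to $p+1$. The new data is the function $\Gamma_{p+1}$, which by property (iii) agrees with $\Gamma_p$ except that at each location $b>m_{p+1}$ its value may shift by at most $n^d\lambda\tau_{p+1}$, and additionally $\Gamma_{p+1}$ is now nonzero (with magnitude $\tau_{p+1}$) at $m_{p+1}$ and possibly on $[m_{p+1}:m_p-1]$. The decision is the comparison between $\tau_\alpha/\tau_{p+1}$ (note $\tau_\alpha = |\Gamma_p(m_\alpha)|$ up to the inductive slack; actually $\tau_\alpha$ is the step-sequence value and $|\Gamma_p(m_\alpha)|$ is close to it) and $\exp(\sqrt{m_\alpha - m_{p+1}}\cdot \log^{\beta+?}m)$, the ``height'' that a translate of $h$ centered at $m_\alpha$ would decay to by the time it reaches $m_{p+1}$ (invoking item (2) of Lemma~\ref{uuuuu} on the negative side, with its $\log m$ factor — this is where $\beta$ must increase by up to $4$ per step).

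\emph{Case A:} if $\tau_\alpha$ is large relative to that decayed height, keep $\alpha_{p+1}=\alpha$ and increase $\beta$ by a bounded amount to $\beta_{p+1}\le 4(p+1)+3$. Then (\ref{hueq1}) is inherited (using $|\Gamma_{p+1}(m_\alpha)|\ge |\Gamma_p(m_\alpha)| - n^d\lambda\tau_{p+1}$ and the case hypothesis to show the subtracted term is lower-order, costing one factor of $2$ which is why the $2^p$ grows to $2^{p+1}$); (\ref{hueq2}) for $b\ge m_\alpha$ is inherited from the inductive hypothesis plus the $n^d\lambda\tau_{p+1}$ perturbation, again absorbed into the bumped $\beta$; and (\ref{hueq3}) must now be checked on the enlarged range $m_{p+1}\le b<m_\alpha$ — for $b\in[m_p:m_\alpha)$ it is inherited with perturbation, and for $b\in[m_{p+1}:m_p-1]$ we use property (iii) ($|\Gamma_{p+1}(b)|\le$ sum of magnitudes, each $\le \lambda\tau_{p+1}$ or inherited from $\Gamma_p$ which is $0$ there) together with the fact that $|\Gamma_{p+1}(m_\alpha)|$ times $\exp(\sqrt{m_\alpha-b}\cdot\log^{\beta+3}m)$ overwhelms it, precisely because we are in Case A. \emph{Case B:} if instead $\tau_{p+1}$ is comparatively large, set $\alpha_{p+1}=p+1$, $\beta_{p+1}$ suitably bumped. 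Now $|\Gamma_{p+1}(m_{p+1})|=\tau_{p+1}$ gives (\ref{hueq1}) (need $\tau_{p+1}\cdot 2^{p+1}\cdot\exp(\sqrt m\log^{\beta_{p+1}}m)\ge\tau_0$ — follows from combining the Case~B hypothesis with the inductive (\ref{hueq1}) for $\alpha$); (\ref{hueq3}) over $m_{p+1}\le b<m_{p+1}$ is vacuous; and (\ref{hueq2}) over $b\ge m_{p+1}$ splits into $b\in[m_{p+1}:m_p-1]$ (directly bound $|\Gamma_{p+1}(b)|$ via property (iii), which is at most $\binom n d\lambda\tau_{p+1}\le n^d\lambda\cdot|\Gamma_{p+1}(m_{p+1})|$, absorbed), $b\in[m_p:m_\alpha)$ where we use the old (\ref{hueq3}) to say $|\Gamma_p(b)|$ is tiny relative to $|\Gamma_p(m_\alpha)|$ hence relative to $\tau_\alpha$ hence — by Case~B — not too big relative to $\tau_{p+1}$, and $b\ge m_\alpha$ where the old (\ref{hueq2}) bounds $|\Gamma_p(b)|$ by $|\Gamma_p(m_\alpha)|\cdot(\dots)$ which by Case~B is at most (roughly) $\tau_{p+1}\cdot\exp(\sqrt{m_\alpha-m_{p+1}}\log^{?}m)\cdot(\dots)\le |\Gamma_{p+1}(m_{p+1})|\cdot\exp(\sqrt{b-m_{p+1}}\log^{\beta_{p+1}}m)$ after using $\sqrt{m_\alpha - m_{p+1}}+\sqrt{b-m_\alpha}\le \sqrt2\sqrt{b-m_{p+1}}$ and again folding constants into $\beta$.

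The main obstacle I expect is bookkeeping the exponent parameter $\beta$ so that it never exceeds $4p+3$ while simultaneously absorbing (i) the $n^d\lambda = \exp((\log n)^{O(\log n)})$ multiplicative errors from property (iii) — which are genuinely of size $\exp(\mathrm{polylog}\, m)$, not $\exp(\sqrt m\cdot\mathrm{polylog}\,m)$, so they are lower-order but only because the target scale is $\exp(\sqrt m \cdot \log^\beta m)$ — (ii) the extra $\log m$ factor in the negative-side bound $|h(-b)|\le e^{6\sqrt b\log m}$ of Lemma~\ref{uuuuu}(2), which forces each inductive step to spend additive room in the exponent of $\log m$, and (iii) the loss from the inequality $\sqrt{x}+\sqrt{y}\le\sqrt{2(x+y)}$ when chaining a bound ``from $m_\alpha$ out to $b$'' through the new center. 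The clean way to manage this is to carry the precise inequalities $\beta_{p+1}\le\beta_p+4$ and $\beta_0\le 3$, verify at each use that the one-step increase of $4$ in the $\log$-exponent dominates the $n^d\lambda$ factor and the $\log m$ slack from $h$, and keep the $2^p\to 2^{p+1}$ accounting to cover the at-most-doubling from the additive perturbations. Since $r\le q-1\le\log n - 1$, the final parameters satisfy $\alpha_r\le r$ and $\beta_r\le 4r+3 = O(\log n)$, which (as used right after this lemma in the proof of Lemma~\ref{technicallemma1}) keeps $\exp(\sqrt m\cdot\log^{\beta_r}m) = \exp(\sqrt m\cdot(\log m)^{O(q)})$, matching the bounds claimed there.
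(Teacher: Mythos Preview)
Your plan is essentially the paper's proof: induction on $p$, a two-case split depending on whether $\tau_{p+1}$ is small or large compared to an $\exp(\sqrt{m_\alpha-m_{p+1}}\cdot\log^{\beta+3}m)$-scaled version of the current pivot value, keeping $\alpha$ in the first case and moving it to $p+1$ in the second, with the square-root subadditivity inequality used to chain the bounds in Case~B. Two small refinements the paper makes that you should adopt: (i) the case split compares against $|\Gamma_p(m_\alpha)|$ directly (not $\tau_\alpha$), which spares you from tracking how far these have drifted apart; and (ii) in Case~A the paper keeps $\beta_{p+1}=\beta$ unchanged --- the $2^p\to 2^{p+1}$ doubling alone absorbs the $n^d\lambda\tau_{p+1}$ perturbation there --- so $\beta$ only jumps (by exactly $4$) in Case~B, which is what keeps $\beta_p\le 4p+3$ tight given $\beta_0=3$.
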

\begin{proof}
The base case when $p=0$ follows from properties of $\Gamma_0$ by setting $\alpha_0=0$ and $\beta_0=3$ since $d\le  \log n$ and $\lambda\le (\log n)^{O(\log n)}$ imply that $n^d\lambda\ll \exp(\log^3 m)$.
  
For the induction step, we assume that the claim holds for some $p\in [0:r-1]$ with
  parameters $\alpha=\alpha_p\in [0:p]$ and $\beta=\beta_p\in [0:4p+3]$, 
  and now we prove it for $p+1$.
We consider two cases:
\begin{align}\label{IH2}
\tau_{p+1}\cdot 2n^d\lambda\cdot \exp\big({\sqrt{m_\alpha-m_{p+1}}\cdot\log^{\beta+3} m}\big)&\le 
\big|\Gamma_p (m_\alpha)\big| \quad\text{or}\\[0.7ex] \label{IH}
\tau_{p+1}\cdot  2n^d\lambda \cdot \exp\big({\sqrt{m_\alpha-m_{p+1}}\cdot\log^{\beta+3} m}\big)&> 
\big|\Gamma_p (m_\alpha)\big|.
\end{align}

For the case of (\ref{IH2}),  
  we verify that the claim  holds for $p+1$ by setting $\alpha_{p+1}=\alpha$ and $\beta_{p+1}=\beta$.
First, {by property (iii) and (\ref{IH2}) we have}
\begin{equation}\label{heheuse1}
\big|\Gamma_{p+1}(m_\alpha)\big|\ge \big|\Gamma_p(m_\alpha)\big|
  -n^d\lambda\tau_{p+1}\ge (1-o_n(1))\cdot \big|\Gamma_p(m_\alpha)\big|.
\end{equation}
As a result, (\ref{hueq1}) holds for $p+1$ since
\begin{align*}
\big|\Gamma_{p+1}(m_\alpha)\big|&\cdot 2^{p+1} \cdot \exp\big(\sqrt{m}\cdot \log^{\beta}m\big)
\\[0.5ex] &\ge (1-o_n(1))\cdot \big |\Gamma_p(m_\alpha)\big|\cdot 2^{p+1}\cdot 
\exp\big(\sqrt{m}\cdot \log^{\beta}m\big)>\tau_0,
\end{align*}
by the inductive hypothesis of (\ref{hueq1}) for $p$.
For each $b\ge m_\alpha$, compared with the inductive hypothesis (\ref{hueq2}) for $p$,
  the LHS of (\ref{hueq2}) for $p+1$ goes up (additively) by at most $n^d\lambda\tau_{p+1}$ (by
  property (iii)),~but it follows from~(\ref{heheuse1}) that the RHS of (\ref{hueq2}) for $p+1$
  goes up multiplicatively by a factor of $2(1-o_n(1))$.
Since {by (\ref{IH2}) we have} $|\Gamma_p(m_\alpha)|\gg n^d\lambda\tau_{p+1}$, (\ref{hueq2}) also holds for $p+1$.
Similarly for each $m_p\le b<m_\alpha$, compared to the inductive hypothesis (\ref{hueq3}) for $p$,
 {again by property (iii)}
  the LHS of (\ref{hueq3}) for $p+1$ goes up (additively) by at most
\begin{align*}
\tau_{p+1}\cdot n^d\lambda\cdot \exp\big(\sqrt{m_\alpha-b}\cdot \log^{\beta+3} m\big)
&<\tau_{p+1}\cdot n^d\lambda\cdot \exp\big(\sqrt{m_\alpha-m_{p+1}}\cdot \log^{\beta+3} m\big) 
 \\[0.8ex] &\le \big|\Gamma_p(m_\alpha)\big|\big/2, \tag{by (\ref{IH2})}
\end{align*}
while the RHS goes up (additively) by at least
$$\big(2^{p+1}(1-o_n(1))-2^p\big)\cdot \big|\Gamma_p(m_\alpha)\big|\ge \big|\Gamma_p(m_\alpha)\big|\big/2. 
$$
So (\ref{hueq3}) still holds for $p+1$.
Finally, for each $m_{p+1}\le b<m_p$, we have {by property (iii) that}
\begin{align*}
\big|\Gamma_{p+1}(b)\big| \cdot \exp\big({\sqrt{ m_\alpha-b}\cdot\log^{\beta+3} m}\big)
&\le  \tau_{p+1}\cdot n^d\lambda\cdot \exp\big({\sqrt{ m_\alpha-m_{p+1}}\cdot\log^{\beta+3} m}\big)\\[0.5ex]
&\le  \big|\Gamma_p(m_\alpha)\big|\big/2  \tag{by {(\ref{IH2})}}\\[0.5ex]
&\le 2^{p+1}\cdot \big|\Gamma_{p+1}(m_\alpha)\big|. 
\tag{by {(\ref{heheuse1})}}
\end{align*} 
So (\ref{hueq3}) also holds for such ${b}$'s.
This finishes the induction step assuming (\ref{IH2}).

We now consider (\ref{IH}) and show
  that the claim holds for $p+1$ by setting $\alpha_{p+1}=p+1$ and~$\beta_{p+1}$ $=\beta+4\le 4p+7$.
  First we prove that (\ref{hueq1}) holds for $p+1$.
Combining (\ref{IH}), the inductive hypothesis (\ref{hueq1}) for $p$, and that $|\Gamma_{p+1}(m_{p+1})|=\tau_{p+1}$, we have 
\begin{align*}
\tau_0&\le \big|\Gamma_p(m_\alpha)\big|\cdot 2^{ p} \cdot\exp\big(\sqrt{m}\cdot \log^{\beta }m\big)\\[0.5ex]
&<\tau_{p+1}\cdot 2^{p+1}n^{d }\lambda\cdot \exp\big(\sqrt{m}\cdot \log^{\beta }m+\sqrt{m}\cdot\log^{\beta+3} m\big)\\[0.5ex]
&<\big|\Gamma_{p+1}(m_{p+1})\big|\cdot 2^{p+1} \cdot \exp\big(\sqrt{m}\cdot \log^{\beta+4} m\big),
\end{align*}
where the last inequality used the assumptions on $d$ and $\lambda$ so that $n^d\lambda\ll \exp(\log^3 m)$.
Next, we consider each $b\ge m_{p+1}$ and show that (\ref{hueq2}) holds for $p+1$ {(note that the conditions for (\ref{hueq3}) never occur in this case since~we set $\alpha_{p+1}=p+1$ and hence $m_{\alpha_{p+1}}=m_{p+1}$)}:
\begin{align*}
\big|\Gamma_{p+1}(b)|&\le \big|\Gamma_{p+1}(m_{p+1})\big|\cdot 2^{p+1}\cdot 
  \exp\big(\sqrt{b-m_{p+1}}\cdot \log^{\beta+4} m\big)\\[0.5ex]
&=\tau_{p+1}\cdot 2^{ p+1 } \cdot 
  \exp\big(\sqrt{b-m_{p+1}}\cdot \log^{\beta+4} m\big).
\end{align*}
This is trivial for $m_{p+1}\le b<m_p$ as $ |\Gamma_{p+1}(b) |\le n^d\lambda \tau_{p+1}$ {by property (iii)} 
  and $n^d\lambda\ll \exp(\log^3 m)$. 
For each $b:m_p\le b<m_{\alpha}$, we have by the inductive hypothesis (\ref{hueq3}) that
\begin{equation}\label{IH1}
\big|\Gamma_{p}(b)\big|\cdot \exp\big(\sqrt{m_\alpha-b}\cdot \log^{\beta+3} m\big)\le  
2^p\cdot \big|\Gamma_p(m_\alpha)\big|. 
\end{equation}
Combining (\ref{IH}) with (\ref{IH1}), we have 
\begin{align*}
\big|\Gamma_{p+1}(b)\big|&\cdot  \exp\big(\sqrt{m_\alpha-b}\cdot \log^{\beta+3} m\big)\\[0.5ex]
&\le \left(\big|\Gamma_{p}({b})\big| +n^d\lambda\tau_{p+1}\right) \cdot  \exp\big(\sqrt{m_\alpha-b}\cdot \log^{\beta+3} m\big) \tag{{by property (iii)}}\\[0.5ex]
&\le  2^p\cdot \big|\Gamma_p(m_\alpha)\big|
+n^d\lambda\tau_{p+1}\cdot   \exp\big(\sqrt{m_\alpha-m_{p+1}}\cdot \log^{\beta+3} m\big)
\tag{{by (\ref{IH1})}}\\[0.8ex]
&\le \tau_{p+1}\cdot   2^{p+2}n^d\lambda \cdot \exp\big(\sqrt{m_\alpha-m_{p+1}}\cdot \log^{\beta+3} m\big). \tag{{by (\ref{IH})}}
\end{align*}
Using $\sqrt{m_\alpha-m_{p+1}}-\sqrt{m_\alpha-b}\le \sqrt{b-m_{p+1}}$, we have 
\begin{align*}
\big|\Gamma_{p+1}(b)\big| &\le\tau_{p+1}\cdot 2^{p+2}n^d\lambda \cdot  \exp\big(\sqrt{b-m_{p+1}}\cdot \log^{\beta+3}m\big)\\[0.6ex]
&<\tau_{p+1}\cdot 2^{ p+1 } \cdot  \exp\big(\sqrt{b-m_{p+1}}\cdot \log^{\beta+4}m\big),
\end{align*}
using $n^d\lambda\ll \exp(\log^3 m)$.
The last case for (\ref{hueq2}) is that $b \geq m_\alpha$. By the inductive hypothesis,
\begin{equation}\label{fff}
\big|\Gamma_{p}(b)\big| \le \big|\Gamma_p(m_\alpha)\big| \cdot 2^{ p }\cdot
 \exp\big(\sqrt{b-m_\alpha}\cdot \log^\beta m\big).
\end{equation}
Combining (\ref{IH}) with (\ref{fff}), we have
\begin{align*}
\big|\Gamma_{p+1}(b)\big|&\le 
\big|\Gamma_p(b)\big|+n^d\lambda\tau_{p+1} \tag{{by property (iii)}}\\[0.2ex]
&\le
\tau_{p+1}\cdot 2^{p+1}n^{d}\lambda\cdot \exp\big(\sqrt{b-m_\alpha}\cdot \log^\beta m+\sqrt{m_\alpha-m_{p+1}}\cdot \log^{\beta+3} m\big)+n^d\lambda\tau_{p+1}.
\end{align*}
Using $\sqrt{b-m_\alpha}+\sqrt{m_\alpha-m_{p+1}}\le 2\sqrt{b-m_{p+1}}$ and $n^d\lambda\ll\exp(\log^3 m)$, 
 we have
$$
\big|\Gamma_{p+1}(b)\big|
<\tau_{p+1}\cdot 2^{ p+1 } \cdot \exp\big(\sqrt{b-m_{p+1}}\cdot \log^{\beta+4} m\big).
$$
This finishes the proof of the {lemma}.
\end{proof}

Finally we combine Lemma \ref{uuuuu} and Lemma \ref{starlem} to prove Lemma \ref{technicallemma1}.

\begin{proof}[Proof of Lemma \ref{technicallemma1}]
Recall that $d,q\le \log n $ and $\lambda,L\le (\log n)^{O(\log n)}$.

Let $\alpha\in [0:r]$ and $\beta\in [0:4r+3]$ be the final parameters 
  that satisfy Lemma \ref{starlem}~for $\Gamma_r=\Gamma$.
We define the polynomial $f$ using $h$ from Lemma \ref{uuuuu} as follows
\begin{equation}
f(x)=\Big(h(x-m_{\alpha})\Big)^{\big\lceil 3\log^{\beta+1}m\big\rceil}. \label{eq:fdef}
\end{equation}
It follows from Lemma \ref{uuuuu} that
  $f$ has degree $$\text{deg}(f)=
  O\big(\sqrt{m}\cdot \log^{\beta+2} m \big)=
  O\big(\sqrt{m}\cdot \log^{4q+1} m \big).$$
 using $r<q$ and $\beta\le 4r+3$. 
Moreover, we have
\begin{align*}
\|f\|_1\le \exp\big(O(&\sqrt{m}\cdot  \log^{4q+1} m )\big)\cdot (m+1)^{\text{deg}(f)} =
   \exp\big(O(\sqrt{m}\cdot \log^{4q+2} m )\big). 
\end{align*}
It follows from the definition of $\phi$ in (\ref{form}) and 
  $w_1,\ldots,w_d\in [L^2]$ that the same degree upper bound holds for
  $\phi$ and  \begin{align*}
\|\phi\|_1\le \exp\big(O(\sqrt{m}&\cdot \log^{4q+2} m )\big)\cdot (dL^2)^{\deg(f)}
\le \exp\big(O (\sqrt{m}\cdot \log^{4q+3} m )\big).
\end{align*}
  
To analyze $\sum_b f(b)\cdot \Gamma(b)$, we show that 
  $| f(b)\cdot \Gamma(b)|\le |\Gamma(m_\alpha)|/(2m)$ for all $b\ne m_\alpha$ and thus,
$$
\left|\hspace{0.03cm}\sum_{b\in [0:m ]} f(b)\cdot \Gamma(b)\hspace{0.03cm}\right|
\ge \frac{|\Gamma(m_{\alpha})|}{2} \stackrel{(\ref{hueq1})}{\ge} \frac{\tau_0}{2^q\cdot \exp(\sqrt{m}\cdot \log^\beta m)}
\ge \frac{\|\Delta\|_\infty}{\exp(O(\sqrt{m}\cdot \log^{4q-1}m))} 
$$
using $\tau_0\ge \|\Delta\|_\infty/\lambda$.
For each $b>m_{\alpha}$, by Lemma \ref{starlem} and Lemma \ref{uuuuu} (and $q\le  \log n$) 
\begin{align*}
\big| f(b)\cdot \Gamma(b)\big| \le \big|\Gamma(m_\alpha)\big|\cdot 2^{q}\cdot  \exp\big(\sqrt{b-m_\alpha}\cdot \log^\beta m\big)
\cdot \frac{1}{2^{\sqrt{b-m_\alpha}\hspace{0.03cm}\cdot\hspace{0.03cm} 3\log^{\beta+1}m}} 
 \le \frac{|\Gamma(m_\alpha)|}{2m}.
\end{align*}
For each $b<m_{\alpha}$ we have from Lemma \ref{starlem} and Lemma \ref{uuuuu} that
$$
\big| f(b)\cdot \Gamma(b)\big|\le \frac{ 2^q\cdot |\Gamma(m_\alpha)|}{\exp(\sqrt{m_\alpha-b}\cdot \log^{\beta+3} m)}
\cdot \exp\big(O(\sqrt{m_{\alpha}-b}\cdot \log^{\beta+2} m \big)\le \frac{|\Gamma(m_\alpha)|}{2m}.
$$
This finishes the proof of Lemma~\ref{technicallemma1}.
\end{proof} 

\subsection{Proof of Lemma \ref{technicallemma2}}\label{prooftechnical2}

Let $\bX$ and $\bY$ be two distributions each supported over strings from $\zo^n$.

Given $0\le j_1<\cdots<j_d\le k-1$, we use $g_{j_1,\ldots,j_d}$ to denote the 
  following $d$-variate polynomial, \begin{equation} \label{eq:gbasis}
g_{j_1,\dots,j_d}(t_1,\dots,t_d) := {t_1 \choose j_1} \cdot {t_2-t_1-1 \choose j_2-j_1-1} \cdots 
 {t_d - t_{d-1} - 1 \choose j_d - j_{d-1} - 1} \cdot
  {n-t_d-1 \choose k-j_d-1}.
\end{equation}
To see the relevance of this polynomial to the $k$-deck, we note that given any $0\le t_1<\cdots<t_d<n$  the quantity
$g_{j_1,\dots,j_d}(t_1,\dots,t_d)$ is the number of ways to 
  pick $k$ indices from $[0:n-1]$ such that each $t_i$ is the $(j_i+1)$th smallest index picked.  

We first show that the following sum
\begin{equation}\label{sum}
\sum_{0\le t_1<\cdots<t_d<n} g_{j_1,\ldots,j_d}(t_1,\ldots,t_d)\cdot \proj\big(\bX,\{t_1,\ldots,t_d\},c\big)
\end{equation}
can be written as a low-weight linear combination of entries of $\deck_k(\bX)$.

\begin{lemma}\label{firsteasylemma}
For any $0\le j_1<\cdots<j_d\le k-1$ and any $c\in \{0,1\}^d$,
  the sum (\ref{sum}) can be written as a linear combination of entries of $\deck_k(\bX)$
  in which each coefficient is either $0$ or ${n\choose k}$.
\end{lemma}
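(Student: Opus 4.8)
The plan is to expand $\proj(\bX,\{t_1,\dots,t_d\},c)$ in terms of the $k$-deck entries of $\bX$ by a double-counting argument, and then to swap the order of summation so that the $g_{j_1,\dots,j_d}$ factor collapses. First I would recall the definition of the projection: $\proj(\bX,\{t_1,\dots,t_d\},c)$ is (up to the normalization we will track carefully) the probability, for $\bx\sim\bX$, that $\bx$ agrees with $c$ on coordinates $t_1,\dots,t_d$; equivalently it is $\Ex_{\bx\sim\bX}[\mathbbm{1}\{x_{t_1}\cdots x_{t_d}=c\}]$. The key combinatorial identity, which is exactly the content of the sentence immediately before the lemma, is that for a fixed string $x$ and a fixed $v\in\{0,1\}^k$,
\begin{equation*}
\#(v,x)=\sum_{\substack{0\le j_1<\cdots<j_d\le k-1\\ v_{j_i}=c_i}}\ \sum_{\substack{0\le t_1<\cdots<t_d<n\\ x_{t_i}=c_i}} \Big(\text{number of length-}k\text{ subsequences of }x\text{ equal to }v\text{ with }t_i\text{ the }(j_i{+}1)\text{th index}\Big),
\end{equation*}
and more to the point, summing $g_{j_1,\dots,j_d}(t_1,\dots,t_d)$ over those $(t_1,\dots,t_d)$ with $x_{t_i}=c_i$ counts, for each way of choosing the "slot pattern" $j_1<\cdots<j_d$, the number of length-$k$ index sets $I\subseteq[0:n-1]$ whose restriction to the $(j_i{+}1)$th positions reads $c$.

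The main step is therefore: fix $j_1<\cdots<j_d$ and $c$, and observe that
\begin{align*}
\sum_{0\le t_1<\cdots<t_d<n} g_{j_1,\dots,j_d}(t_1,\dots,t_d)\cdot\proj(\bX,\{t_1,\dots,t_d\},c)
&=\Ex_{\bx\sim\bX}\Bigg[\sum_{\substack{0\le t_1<\cdots<t_d<n\\ \bx_{t_i}=c_i\ \forall i}} g_{j_1,\dots,j_d}(t_1,\dots,t_d)\Bigg].
\end{align*}
Now, by the "number of ways" interpretation of $g_{j_1,\dots,j_d}$ stated just before the lemma, for a fixed string $x$ the inner sum equals the number of size-$k$ subsets $I=\{i_0<\cdots<i_{k-1}\}$ of $[0:n-1]$ such that $x_{i_{j_1}}=c_1,\dots,x_{i_{j_d}}=c_d$ (the $t$'s are recovered as $t_r=i_{j_r}$, the condition $x_{t_r}=c_r$ is exactly $x_{i_{j_r}}=c_r$, and $g$ counts the ways to fill the remaining $k-d$ positions of $I$). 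Grouping these size-$k$ subsets according to the string $v=x_{i_0}\cdots x_{i_{k-1}}\in\{0,1\}^k$ that $x$ induces on $I$, the inner sum becomes $\sum_{v}\#(v,x)\cdot\mathbbm{1}\{v_{j_1}=c_1,\dots,v_{j_d}=c_d\}$. Taking the expectation over $\bx\sim\bX$ and using $\#(v,\bX)=\binom{n}{k}(\deck_k(\bX))_v$ gives that (\ref{sum}) equals
\begin{equation*}
\sum_{v\in\{0,1\}^k}\mathbbm{1}\{v_{j_1}=c_1,\dots,v_{j_d}=c_d\}\cdot\#(v,\bX)=\binom{n}{k}\sum_{\substack{v\in\{0,1\}^k:\\ v_{j_i}=c_i\ \forall i}}\big(\deck_k(\bX)\big)_v,
\end{equation*}
which is manifestly a linear combination of entries of $\deck_k(\bX)$ whose coefficients are each either $0$ or $\binom{n}{k}$, as claimed.

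The step I expect to be the main obstacle is pinning down the exact combinatorial bijection underlying the phrase "the number of ways to pick $k$ indices from $[0:n-1]$ such that each $t_i$ is the $(j_i{+}1)$th smallest index picked" and making sure the factorization of $g_{j_1,\dots,j_d}$ into binomial coefficients in (\ref{eq:gbasis}) really does count those configurations — i.e. that the gaps $t_1$ indices before $t_1$ (choose $j_1$ of them for the slots $0,\dots,j_1-1$), then $t_2-t_1-1$ indices strictly between $t_1$ and $t_2$ (choose $j_2-j_1-1$), and so on, with $n-t_d-1$ indices after $t_d$ (choose $k-j_d-1$), multiply to give exactly the claimed count. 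Once that bijection is stated cleanly, the swap of summation order and the regrouping by $v$ are routine, and the bound on the coefficients is immediate. One should also note, as the paper remarks, that this argument is insensitive to the support size of $\bX$, so no assumption beyond $k\ge d$ and $0\le j_1<\cdots<j_d\le k-1$ is used.
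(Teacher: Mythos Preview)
Your argument is correct and is essentially the same as the paper's proof: both identify the sum (\ref{sum}), divided by $\binom{n}{k}$, with the probability that a uniformly random size-$k$ subsequence of $\bx\sim\bX$ reads $c$ in positions $j_1,\dots,j_d$, and then rewrite that probability as $\sum_{v:\,v_{j_i}=c_i}(\deck_k(\bX))_v$. The paper states this probabilistic interpretation in one line, whereas you unfold it as an explicit double count (fix $x$, sum $g$ over admissible $t$'s, regroup by $v=x_I$); the content is identical.
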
 
\begin{proof}
{Recalling the combinatorial interpretation of $g_{j_1,\dots,j_d}(t_1,\dots,t_d)$ given  after (\ref{eq:gbasis}),} we see that if we divide the sum in (\ref{sum}) by
  ${n\choose k}$, the result is precisely  
  the probability that $(\bz_{j_1},\ldots,\bz_{j_d})=c$ when
  we draw $\bx\sim\bX$, {draw} a size-$k$ subset $\bT$ of $[0:n-1]$ uniformly at random, and then set $\bz=\bx_\bT$.
The latter probability can also be expressed using entries of $\deck_k(\bX)$ as
$$
\sum_{\substack{z\in \{0,1\}^k \\ (z_{j_1},\ldots,z_{j_d})=c}} \big(\deck_k(\bX)\big)_z,
$$
as $(\deck_k(\bX))_z$ is the probability of $\bx_\bT=z$ with $\bx$ and $\bT$ drawn as above.
This finishes the proof.
\end{proof}

Next we show that, for every monomial $t_1^{r_1}\cdots t_d^{r_d}$ of degree $r_1+\cdots+r_d\le k-d$,
  there exists~a low-weight 
  linear combination of polynomials $g_{j_1,\ldots,j_d}$ that agrees with 
  $t_1^{r_1}\cdots t_d^{r_d}$ over $t_1,\ldots,t_d$ that satisfy
  $0\le t_1<\cdots<t_d<n$.
  
\begin{lemma}\label{mainhardlemma}
For any nonnegative integers $r_1,\ldots,r_d$ with $r_1 + \cdots + r_d \leq k-d$, we have that
\[
t_1^{r_1}  \cdots t_d^{r_d} = \sum_{0\le j_1 <  \cdots < j_d<k} w_{j_1,\dots,j_d} \cdot g_{j_1,\dots,j_d}(t_1,\dots,t_d),\quad\text{for all $0\le t_1<\cdots<t_d<n$,}
\]
where the coefficients $w_{j_1,\dots,,j_d}$ satisfy
	$\sum |w_{j_1,\dots,j_d}| \leq k^{O(k )}$.
\end{lemma}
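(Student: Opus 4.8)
The plan is to first establish the result for a single variable and then build up to the $d$-variate case by a tensor/telescoping argument. In one variable the statement to prove is that, for each $r \le k-1$, the monomial $t^r$ can be written as $\sum_{j=0}^{k-1} w_j \binom{t}{j}$ (restricted to $t \in [0:n-1]$, though this identity is actually exact as polynomials since both sides have degree $\le k-1$), with $\sum_j |w_j| \le k^{O(k)}$. This is just the change of basis between the monomial basis and the binomial-coefficient basis $\{\binom{t}{j}\}$: the coefficients are (up to sign) Stirling numbers of the second kind times factorials, i.e. $t^r = \sum_j j!\, S(r,j)\binom{t}{j}$, and one has the crude bound $j!\,S(r,j) \le j^r \le k^k$, so the total weight is at most $k \cdot k^k = k^{O(k)}$. (Alternatively, one can invert the lower-triangular matrix $[\binom{t}{j}]$ evaluated at $t = 0,1,\dots,k-1$ and bound the entries, but the Stirling-number route is cleanest.)

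Next I would handle the ``shifted'' univariate building blocks that actually appear in $g_{j_1,\dots,j_d}$: I need to express $s^r$ as a bounded-weight combination of $\binom{s}{e}$ for $e = 0,\dots,k-1$, and also to absorb the trailing factor $\binom{n-t_d-1}{k-j_d-1}$. For this, note that $g_{j_1,\dots,j_d}(t_1,\dots,t_d)$ is literally a product of univariate binomial coefficients in the ``gap'' variables $u_1 := t_1$, $u_2 := t_2 - t_1 - 1$, \dots, $u_d := t_d - t_{d-1} - 1$, $u_{d+1} := n - t_d - 1$, namely $\binom{u_1}{j_1}\binom{u_2}{j_2-j_1-1}\cdots\binom{u_d}{j_d-j_{d-1}-1}\binom{u_{d+1}}{k-j_d-1}$, and these gap variables satisfy $u_1 + \cdots + u_{d+1} = n - d$. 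So if I set $e_1 = j_1$, $e_2 = j_2 - j_1 - 1$, \dots, $e_{d+1} = k - j_d - 1$, then $e_1 + \cdots + e_{d+1} = k - d$, and ranging over all valid $(j_1,\dots,j_d)$ is the same as ranging over all $(e_1,\dots,e_{d+1})$ of nonnegative integers summing to $k-d$. Thus $\{g_{j_1,\dots,j_d}\} = \{\prod_{i=1}^{d+1}\binom{u_i}{e_i} : \sum e_i = k-d\}$.

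With this reparametrization the lemma becomes purely a statement about the gap variables: I must write $t_1^{r_1}\cdots t_d^{r_d}$ — which, after substituting $t_i = u_1 + \cdots + u_i + (i-1)$, becomes some polynomial $P(u_1,\dots,u_d)$ of total degree $r_1 + \cdots + r_d \le k - d$ with $\|P\|_1 \le (\text{const})^{k}\cdot\text{(number of monomials)} \le k^{O(k)}$ in the $u$-variables — as a bounded-weight combination of the products $\prod_{i=1}^{d+1}\binom{u_i}{e_i}$ with $\sum_{i=1}^{d+1} e_i = k-d$. Since $P$ has total degree $\le k-d$ in $u_1,\dots,u_d$, every monomial $u_1^{a_1}\cdots u_d^{a_d}$ appearing in $P$ has $a_1 + \cdots + a_d \le k-d$; applying the univariate change of basis in each of $u_1,\dots,u_d$ separately writes $u_i^{a_i} = \sum_{e_i \le a_i} c^{(i)}_{a_i,e_i}\binom{u_i}{e_i}$ with $\sum_{e_i}|c^{(i)}_{a_i,e_i}| \le k^{O(k)}$, so $u_1^{a_1}\cdots u_d^{a_d}$ expands into $\prod_{i=1}^d\binom{u_i}{e_i}$ with $\sum_{i=1}^d e_i \le \sum a_i \le k-d$ and total weight $k^{O(k)}$; finally multiply by $1 = \binom{u_{d+1}}{0}$ and, since we are short of the target sum $k-d$ by some slack $\sigma := (k-d) - \sum_{i=1}^d e_i \ge 0$, use the Vandermonde/Chu identity $\binom{u_{d+1}}{0} = \sum_{?}$ — more simply, distribute the remaining $\sigma$ among the $d+1$ coordinates using $\binom{u_i}{e_i} = \sum_{f}(\dots)\binom{u_i}{f}$, or just observe we only ever needed the products with $\sum e_i = k-d$ by a single extra application of $x^0 = \binom{x}{0}$ combined with the identity expressing $\binom{u_{d+1}}{0}$ in the constraint set; in any case the slack is absorbed at cost $(d+1)^\sigma \le k^{O(k)}$. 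Multiplying the three $k^{O(k)}$ factors and summing over the $\le k^{O(d)} \le k^{O(k)}$ monomials of $P$ keeps the total weight at $k^{O(k)}$.

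The main obstacle is the bookkeeping in the last step: cleanly arguing that the monomial-to-binomial conversion in the gap variables, combined with the affine substitution $t_i \mapsto u_1 + \cdots + u_i + (i-1)$, exactly lands in the span of the $\{\prod_{i=1}^{d+1}\binom{u_i}{e_i} : \sum e_i = k-d\}$ (as opposed to $\sum e_i \le k-d$) — i.e. handling the ``slack'' $\sigma$ so that the degree constraint is met with equality — while keeping the coefficient weight under control. The cleanest fix is to first prove the easier statement with $\sum e_i \le k-d$ and then observe, via the single identity $1 = \binom{x}{0}$ together with the fact that the extra factor $\binom{n - t_d - 1}{k - j_d - 1}$ can have its lower index take any value in $[0:k-d]$, that one can always pad up to equality; I would state this padding step explicitly as a short sub-claim. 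Everything else is the standard Stirling-number bound and the triangle inequality over a $k^{O(k)}$-size index set.
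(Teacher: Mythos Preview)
Your three-step plan is exactly the paper's: (i) pass to the gap variables $u_1=t_1,\ u_i=t_i-t_{i-1}-1,\ u_{d+1}=n-t_d-1$ (the paper calls the resulting monomials ``quasimonomials''), (ii) convert each $u$-monomial into a product of binomial coefficients via the Stirling-number change of basis (the paper's ``PBCs''), and (iii) express each PBC in terms of the $g_{j_1,\dots,j_d}$. Your bounds for (i) and (ii) match the paper's.

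The genuine gap is step (iii), the ``padding'' from $\sum_{i\le d}e_i\le k-d$ to $\sum_{i\le d+1}e_i=k-d$. Your suggestions there do not work: $\binom{u_{d+1}}{0}=1$ carries no Vandermonde content, and you cannot simply set $e_{d+1}=\sigma$ since $\binom{u_{d+1}}{\sigma}\ne 1$; the claimed $(d+1)^\sigma$ cost is unjustified. What is actually needed is the identity
\[
\prod_{i=1}^{d}\binom{u_i}{\beta_i}\cdot\binom{n-d-\textstyle\sum\beta_i}{\,k-d-\sum\beta_i\,}
\;=\;\sum_{\substack{e_1,\dots,e_{d+1}\ge 0\\ \sum e_i=k-d}}\;\prod_{i=1}^{d+1}\binom{u_i}{e_i}\cdot\prod_{i=1}^{d}\binom{e_i}{\beta_i},
\]
valid on the affine slice $\sum u_i=n-d$. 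The paper proves this by a short double-counting (mark $k$ of $n$ balls, then color a subset red); equivalently, it is Vandermonde on $\sum_i(u_i-\beta_i)=n-d-\sum\beta_i$ combined with the trinomial revision $\binom{u}{\beta}\binom{u-\beta}{f}=\binom{u}{\beta+f}\binom{\beta+f}{\beta}$. Dividing through by the constant $\binom{n-d-\sum\beta_i}{k-d-\sum\beta_i}\ge 1$ gives the PBC as a combination of the $g$'s with total weight at most $\sum_{e}\prod_i\binom{e_i}{\beta_i}\le \binom{k}{d}\cdot k^{k-d}\le k^{k}$. Once you state and prove this sub-claim, your argument is complete and coincides with the paper's.
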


Before proving Lemma \ref{mainhardlemma}, we use Lemma \ref{firsteasylemma}
  and Lemma \ref{mainhardlemma} to prove Lemma \ref{technicallemma2}.

\begin{proof}[Proof of Lemma \ref{technicallemma2}]
Combining Lemma \ref{firsteasylemma}
  and Lemma \ref{mainhardlemma}, we have that 
\begin{align*}
\sum_{0\le t_1<\cdots<t_d<n}  t_1^{r_1}&\cdots t_d^{r_d} \cdot \proj\big(\bX,\{t_1,\ldots,t_d\},c\big)
\\[-0.8ex] &\hspace{-0.2cm}=\sum_{0\le j_1<\cdots<j_d<k} \hspace{0.06cm}
w_{j_1,\ldots,j_d}\hspace{0.06cm} \sum_{0\le t_1<\cdots<t_d<n} g_{j_1,\ldots,j_d}(t_1,\ldots,t_d)\cdot \proj\big(
\bX,\{t_1,\ldots,t_d\},c\big) 
\end{align*}
can be written as a linear combination of entries of $\deck_k(\bX)$ in which each coefficient
  has magnitude at most $\smash{{k^{O(k)} \cdot {n\choose k}=n^{O(k)}}}$.
As a result, we have
$$
\left| \sum_{0\le t_1<\cdots<t_d<n} t_1^{r_1}\cdots t_d^{r_d} \cdot \Delta_{\bX,\bY,c}\big(\{t_1,\ldots,t_d\}\big) \right|
\le n^{O(k)}\cdot \|\deck_k(\bX)-\deck_k(\bY)\|_\infty.
$$
This finishes the proof of the lemma.
\end{proof}

Finally we prove Lemma \ref{mainhardlemma}.
We follow a three-step approach. We say that a \emph{quasimonomial} is a polynomial of the form $$t_1^{\alpha_1}\cdot  (t_2 - t_1-1)^{\alpha_2} \cdot (t_3 - t_2 - 1)^{\alpha_3}  \cdots (t_d - t_{d-1} - 1)^{\alpha_d} $$ for some nonnegative integers $\alpha_1,\ldots,\alpha_d$; the degree of this quasimonomial is $\alpha_1 + \cdots + \alpha_d$.  And we say that a \emph{PBC \emph{(}Product of Binomial Coefficients\emph{)}} is a polynomial of the form $${t_1 \choose \beta_1} {t_2 - t_1 -1 \choose \beta_2} \cdots {t_d - t_{d-1} - 1 \choose \beta_d}$$ for some nonnegative integers $\beta_1,\ldots,\beta_d$; the degree of this PBC is $\beta_1 + \cdots + \beta_d$. We observe that, compared to PBCs, the polynomials
  $g_{j_1,\ldots,j_d}$ have an extra binomial coefficient that involves $t_d$ at the end. The three steps of our approach are as follows:

\begin{flushleft}
\begin{itemize}
\item {\bf First step:}  Express each $d$-variable monomial $t_1^{r_1} \cdots t_d^{r_d}$ with
  $r_1+\cdots+r_d\le k-d$
  as a low-weight linear combination of quasimonomials of degree at most $k-d$.

\item {\bf Second step:}  Express each quasimonomial of degree at most $k-d$
 as a low-weight linear combination of PBCs of degree at most $k-d$.

\item {\bf Third step:}  Finally, express each PBC of degree at most $k-d$ as a 
low-weight linear combination of polynomials $g_{j_1,\ldots,j_d}$.
\end{itemize}
\end{flushleft}
We elaborate on each of these three steps below.
For each step, we bound the sum of magnitudes of coefficients in the linear combination.
\medskip

\noindent\textbf{First step.}
Consider the change of variables: $s_1 = t_1, s_2 = t_2 - t_1 - 1, \ldots, s_d = t_d - t_{d-1} - 1$. Then $$t_1^{r_1} t_2^{r_2} \cdots t_d^{r_d} = s_1^{r_1} (s_2 + s_1 + 1)^{r_2}  \cdots (s_d + s_{d-1} + \cdots + s_1 + d-1)^{r_d}.$$
Each monomial of $s_1,\ldots,s_d$ on the RHS corresponds to a quasimonomial of 
  degree at most $r_1+\cdots$ $+r_d\le k-d$ so this gives us an expression of $t_1^{r_1}\cdots t_d^{r_d}$
  as a linear combination of quasimonomials of degree at most $k-d$.
Moreover, the sum of magnitudes of the coefficients is bounded by
\begin{equation} \label{bound1} 
	3^{r_2} \cdot 5^{r_3} \cdots (2d - 1)^{r_d} \leq (2d - 1)^{\sum_{i = 2}^{d} r_i} \leq k^{O(k)}.
\end{equation}

\noindent\textbf{Second step.}
We start with a one-dimensional version of the second step.

\begin{claim} \label{claim:pascal}  
For each $r \geq 0$ and $t\ge 0$, we have \begin{equation} \label{eq:pascal}
t^r = \sum_{\beta=0}^r 
\left(
\sum_{i=0}^\beta (-1)^{\beta-i} \cdot {\beta \choose i} \cdot i^r 
\right)
{t \choose \beta}.
\end{equation}
\end{claim}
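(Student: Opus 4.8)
The plan is to prove Claim~\ref{claim:pascal} via the elementary calculus of finite differences; the identity is essentially the statement that the $\beta$-th forward difference of $t^r$ at the origin equals the coefficient of $\binom{t}{\beta}$ when $t^r$ is expanded in the ``binomial-coefficient basis.'' First I would record the basis fact: since $\binom{t}{j}$ is a polynomial in $t$ of degree exactly $j$ (with leading coefficient $1/j!\neq 0$), the polynomials $\binom{t}{0},\binom{t}{1},\dots,\binom{t}{r}$ are linearly independent and hence span the space of univariate real polynomials of degree at most $r$. Consequently there are unique real numbers $a_0,\dots,a_r$ with $t^r=\sum_{j=0}^r a_j\binom{t}{j}$ as an identity of polynomials in $t$, which in particular holds for every integer $t\ge 0$. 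It therefore suffices to show that $a_\beta=\sum_{i=0}^\beta(-1)^{\beta-i}\binom{\beta}{i}i^r$ for every $\beta\in[0:r]$, since this is exactly the inner coefficient appearing in~\eqref{eq:pascal}.

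Next I would bring in the forward-difference operator $\Delta$, defined by $(\Delta g)(t)=g(t+1)-g(t)$, and the shift operator $E$, defined by $(Eg)(t)=g(t+1)$, so that $\Delta=E-I$. The key computation is an application of Pascal's rule: $\Delta\binom{t}{j}=\binom{t+1}{j}-\binom{t}{j}=\binom{t}{j-1}$ for $j\ge 1$, while $\Delta\binom{t}{0}=0$. Iterating, for any $\beta\ge 0$ the polynomial $\Delta^\beta\binom{t}{j}$ equals $\binom{t}{j-\beta}$ when $\beta\le j$ and is identically zero when $\beta>j$; in either case, evaluating at $t=0$ gives $\Delta^\beta\binom{t}{j}\big|_{t=0}=1$ if $j=\beta$ and $0$ otherwise (using the conventions $\binom{0}{0}=1$ and $\binom{0}{m}=0$ for $m\ge 1$).

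Finally I would apply $\Delta^\beta$ to both sides of $t^r=\sum_{j=0}^r a_j\binom{t}{j}$ and evaluate at $t=0$. On the right-hand side this picks out precisely $a_\beta$, by the computation just described. On the left-hand side, expanding $\Delta^\beta=(E-I)^\beta=\sum_{i=0}^\beta\binom{\beta}{i}(-1)^{\beta-i}E^i$ via the binomial theorem (legitimate since $E$ and $I$ commute) and applying the result to $t^r$ yields $\Delta^\beta(t^r)\big|_{t=0}=\sum_{i=0}^\beta(-1)^{\beta-i}\binom{\beta}{i}i^r$. Equating the two sides gives the stated formula for $a_\beta$ and hence the claim. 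An alternative route, matching the name of the claim, is a direct induction on $r$: multiply the degree-$r$ identity by $t$, use $t\binom{t}{\beta}=(\beta+1)\binom{t}{\beta+1}+\beta\binom{t}{\beta}$, and verify that the resulting recurrence on the coefficients is satisfied by $\sum_{i=0}^\beta(-1)^{\beta-i}\binom{\beta}{i}i^{r+1}$ (this is the Stirling-number recurrence in disguise). I do not anticipate any real obstacle here; the only points requiring minor care are the boundary conventions ($0^0=1$, and the values of $\binom{0}{j}$) and the degenerate behavior of $\Delta^\beta\binom{t}{j}$ when $\beta>j$.
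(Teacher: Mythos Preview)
Your proof is correct. Both you and the paper start by writing $t^r$ in the binomial-coefficient basis, $t^r=\sum_{\beta=0}^r a_\beta\binom{t}{\beta}$, and then identify the coefficients $a_\beta$; the difference is in how the identification is carried out. The paper observes that evaluating the identity at $t=0,1,\dots,r$ gives a linear system $u=Pv$ with $P$ the lower-triangular Pascal matrix $P_{i,j}=\binom{i}{j}$, and then invokes a cited result (Theorem~2 of \cite{BP92}) for the explicit inverse $P^{-1}$ to read off $a_\beta$. Your argument instead applies the forward-difference operator $\Delta^\beta$ at $t=0$ to extract $a_\beta$ directly, using Pascal's rule $\Delta\binom{t}{j}=\binom{t}{j-1}$ and the binomial expansion $\Delta^\beta=(E-I)^\beta$. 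The two routes are really the same computation in disguise---the entries of $P^{-1}$ are precisely the signed binomials $(-1)^{\beta-i}\binom{\beta}{i}$ that your difference operator produces---but your version is fully self-contained and avoids the external citation, which is a modest advantage.
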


\begin{proof}
	We can write $\smash{t^r = \sum_{\beta = 0}^r v_{\beta} {t \choose \beta}}$ with $v \in \R^{r + 1}$ by changing bases in the space of polynomials in $t$.
	Let $P$ be the $\smash{(r+1) \times (r+1)}$ Pascal matrix with $\smash{P_{i, j} = {i \choose j}}$,
	and define $u \in \R^{r + 1}$ by $u_i = i^r$. Then $\smash{u = Pv}$ so $\smash{v = P^{-1} u}$. By Theorem 2 of \cite{BP92}, we have $$v_{\beta} = \sum_{i = 0}^{\beta} (-1)^{\beta - i} \cdot {\beta \choose i}\cdot i^r$$ as desired.
\end{proof}

We use Claim~\ref{claim:pascal} $d$ times to re-express each of $t_1^{\alpha_1}$, $(t_2-t_1-1)^{\alpha_2},\dots,(t_d-t_{d-1}-1)^{\alpha_d}$ as a linear combination of binomial coefficients. As a result, when $0\le t_1<\cdots<t_d<n$ we have

\begin{align*}
& t_1^{\alpha_1} \cdot
(t_2-t_1-1)^{\alpha_2}
\cdots
(t_d - t_{d-1} - 1)^{\alpha_d}\\[0.6ex]
&= 
\left(
\sum_{\beta_1=0}^{\alpha_1} 
\left(
\sum_{i_1=0}^{\beta_1} (-1)^{\beta_1-i_1} {\beta_1 \choose i_1} i_1^{\alpha_1}
\right)
{t_1 \choose \beta_1}
\right)
\cdot
\left(
\sum_{\beta_2=0}^{\alpha_2} 
\left(
\sum_{i_2=0}^{\beta_2} (-1)^{\beta_2-i_2} {\beta_2 \choose i_2} i_2^{\alpha_2}
\right)
{t_2 - t_1 - 1 \choose \beta_2}
\right)\\[0.5ex]
&\mbox{~~~~~~~~~~~}  \cdots
\left(
\sum_{\beta_d=0}^{\alpha_d} 
\left(
\sum_{i_d=0}^{\beta_d} (-1)^{\beta_d-i_d}  {\beta_d \choose i_d} i_d^{\alpha_d}
\right)
{t_d - t_{d-1} - 1 \choose \beta_d}
\right)\\[0.8ex]
&=\sum_{\beta_1,\dots,\beta_d} c_{\beta_1,\dots,\beta_d}\cdot  {t_1 \choose \beta_1} {t_2 - t_1 - 1 \choose \beta_2} \cdots {t_d - t_{d-1} - 1 \choose \beta_d}
\end{align*}
for coefficients $c_{\beta_1,\dots,\beta_d}$ that we will proceed to bound.
Note that the final sum is over $0\le \beta_i\le 
\alpha_i$, so this is a linear combination
  of PBCs of degree at most $\alpha_1+\cdots+\alpha_d\le k-d$.

Now we bound the sum of magnitudes of coefficients.
For $0 \leq \beta \leq \alpha$ we have $$ \left| \hspace{0.05cm}\sum_{i=0}^\beta (-1)^{\beta-i} \cdot {\beta \choose i}\cdot i^\alpha \hspace{0.05cm}\right| \leq \sum_{i=0}^\beta \beta^i i^\alpha \leq \sum_{i=0}^\beta (\beta i)^\alpha \le \beta^{O(\alpha) },$$ which implies (using $\alpha_1+\ldots+\alpha_d\le k-d\le k$)\begin{equation} \label{bound2} \sum_{\beta_1, \ldots, \beta_d} \left|  c_{\beta_1,\dots,\beta_d} \right| \leq \sum_{\beta_1, \ldots, \beta_d} \beta_1^{O(\alpha_1)} \cdots \beta_d^{O(\alpha_d)} \leq
\sum_{\beta_1,\ldots,\beta_d} k^{O(k)}= k^{O(k)}
\end{equation}


\noindent\textbf{Third step:} The next claim gives an  expression of a PBC as
 a linear combination
  of $g_{j_1,\ldots,j_d}$'s.

\begin{claim} [Third step:  $d$-variable combinatorial identity] \label{claim:IDd}
Given any $0\le t_1<\cdots<t_d<n$ and any nonnegative integers $\beta_1,\ldots,\beta_d$
  with $\beta_1+\cdots+\beta_d\le k-d$, we have 
\begin{align*}
\sum_{0 \leq j_1 <  \cdots < j_d<k} 
&g_{j_1,\dots,j_d}(t_1,\dots,t_d) \cdot 
{j_1 \choose \beta_1}
{j_2 - j_1 - 1 \choose \beta_2}
\cdots
{j_d - j_{d-1} - 1 \choose \beta_d}\\
&\hspace{-0.5cm}=
{t_1 \choose \beta_1} {t_2 - t_1 - 1 \choose \beta_2} \cdots 
{t_d - t_{d-1} - 1 \choose \beta_d} \cdot
{n - \beta_1 - \cdots - \beta_d - d \choose k - \beta_1 - \cdots - \beta_d - d}.
\end{align*}
\end{claim}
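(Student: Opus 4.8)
The plan is to prove Claim~\ref{claim:IDd} by a direct double‑counting argument: I will describe a family of pairs of subsets of $[0:n-1]$ whose size is computed by the left‑hand side when counted one way and by the right‑hand side when counted the other way. Fix $0\le t_1<\cdots<t_d<n$ throughout. Recall the combinatorial meaning of $g_{j_1,\dots,j_d}$ noted right after (\ref{eq:gbasis}): $g_{j_1,\dots,j_d}(t_1,\dots,t_d)$ is the number of size‑$k$ subsets $S\subseteq[0:n-1]$ in which, for every $i$, the element $t_i$ has $0$‑indexed rank $j_i$ (it is the $(j_i+1)$‑st smallest element of $S$). Consequently, for any function $f$ of $d$ indices,
$$
\sum_{0\le j_1<\cdots<j_d<k} g_{j_1,\dots,j_d}(t_1,\dots,t_d)\cdot f(j_1,\dots,j_d)=\sum_{S} f\big(\mathrm{rk}_S(t_1),\dots,\mathrm{rk}_S(t_d)\big),
$$
where $S$ ranges over all size‑$k$ subsets of $[0:n-1]$ that contain $\{t_1,\dots,t_d\}$ and $\mathrm{rk}_S(\cdot)$ denotes $0$‑indexed rank in $S$. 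I will apply this with $f(j_1,\dots,j_d)={j_1\choose\beta_1}{j_2-j_1-1\choose\beta_2}\cdots{j_d-j_{d-1}-1\choose\beta_d}$, the coefficient multiplying $g_{j_1,\dots,j_d}$ on the left‑hand side of Claim~\ref{claim:IDd}.

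Next I would unpack the right‑hand sum. Write $G_1:=\{0,\dots,t_1-1\}$ and $G_i:=\{t_{i-1}+1,\dots,t_i-1\}$ for $2\le i\le d$; these are pairwise disjoint, disjoint from $\{t_1,\dots,t_d\}$, and have sizes $|G_1|=t_1$, $|G_i|=t_i-t_{i-1}-1$. For a size‑$k$ subset $S\ni t_1,\dots,t_d$ one checks that $\mathrm{rk}_S(t_1)=|S\cap G_1|$ and $\mathrm{rk}_S(t_i)-\mathrm{rk}_S(t_{i-1})-1=|S\cap G_i|$, so $f\big(\mathrm{rk}_S(t_1),\dots,\mathrm{rk}_S(t_d)\big)=\prod_{i=1}^d{|S\cap G_i|\choose\beta_i}$ is exactly the number of subsets $R\subseteq S$ with $|R\cap G_i|=\beta_i$ for every $i$; any such $R$ automatically lies inside $G_1\cup\cdots\cup G_d$, hence is disjoint from $\{t_1,\dots,t_d\}$ and contains no element of $S$ exceeding $t_d$. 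Summing over $S$, the left‑hand side of Claim~\ref{claim:IDd} equals the number $N$ of pairs $(S,R)$ with $\{t_1,\dots,t_d\}\subseteq S\subseteq[0:n-1]$, $|S|=k$, $R\subseteq S$, $|R\cap G_i|=\beta_i$ for all $i$, and $R\cap\{t_d+1,\dots,n-1\}=\emptyset$.

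Now I would recount $N$ by choosing $R$ first. A valid $R$ is precisely a subset of $G_1\cup\cdots\cup G_d$ with $|R\cap G_i|=\beta_i$ for each $i$, so there are ${t_1\choose\beta_1}{t_2-t_1-1\choose\beta_2}\cdots{t_d-t_{d-1}-1\choose\beta_d}$ of them, and each has $|R|=b:=\beta_1+\cdots+\beta_d$ and is disjoint from $\{t_1,\dots,t_d\}$. Given such an $R$, the admissible $S$ are exactly the size‑$k$ supersets of the $(b+d)$‑element set $R\cup\{t_1,\dots,t_d\}$ inside $[0:n-1]$, obtained by adjoining any $k-b-d$ of the remaining $n-b-d$ elements; there are ${n-b-d\choose k-b-d}$ of them, a number depending only on $b$ (here the hypothesis $\beta_1+\cdots+\beta_d\le k-d$ enters, guaranteeing $k-b-d\ge 0$). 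Multiplying, $N={t_1\choose\beta_1}{t_2-t_1-1\choose\beta_2}\cdots{t_d-t_{d-1}-1\choose\beta_d}\cdot{n-b-d\choose k-b-d}$, which is exactly the right‑hand side of Claim~\ref{claim:IDd}.

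I do not expect a genuine obstacle here, since the argument is essentially bookkeeping; the one thing to check carefully is the rank/gap identities above together with the degenerate cases (when some $t_i-t_{i-1}-1<\beta_i$, both sides vanish under the usual convention ${a\choose b}=0$ for $a<b$), so that the two counts are genuinely of the same pair family. An alternative would be to induct on $d$, peeling off the last variable via a one‑dimensional Vandermonde‑type identity in the spirit of Claim~\ref{claim:pascal}; but the direct double count is cleaner, and—unlike the other steps in the proof of Lemma~\ref{mainhardlemma}—it needs no bound on coefficient magnitudes, as Claim~\ref{claim:IDd} is an exact identity.
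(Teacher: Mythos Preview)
Your double-counting argument is correct and is essentially the paper's own proof: your pair $(S,R)$ corresponds to the paper's pair (marked balls, red balls other than $t_1,\dots,t_d$), and your two counts---first by $S$ then $R$, and first by $R$ then $S$---are exactly the paper's two ways of counting its combinatorial experiment. One small wording slip to fix: the conditions ``$R\subseteq S$, $|R\cap G_i|=\beta_i$ for all $i$, and $R\cap\{t_d+1,\dots,n-1\}=\emptyset$'' do not by themselves force $R\subseteq G_1\cup\cdots\cup G_d$ (nothing yet forbids $t_i\in R$), so you should add $R\cap\{t_1,\dots,t_d\}=\emptyset$ to your formal definition of the pair family; both of your counts are in fact already counting this corrected family, so the identity is unaffected.
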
 
%
\begin{proof}
Assume that $0\le t_1<\cdots<t_d<n$. We first consider the following combinatorial experiment with $n$ balls numbered $[0:n-1]$:  (1) \emph{Mark} $k$ of the $n$ balls, including balls  $t_1,\dots,t_d$;  (2) \emph{{Color red}} $\beta_1 + \cdots +\beta_d + d$ of the $k$ marked balls, including balls $t_1,\dots,t_d$, in such a way that for each $i \in [d]$, the $(\beta_1 + \cdots +\beta_i + i)$-th red one is $t_i$ (i.e., there are $\beta_1$ red balls before $t_1$, $\beta_2$
  red balls between $t_1$ and $t_2$, ..., and $\beta_d$ red balls between $t_{d-1}$ and $t_d$).
Below we count the total number of outcomes of this experiment (including which balls 
  are marked and which balls are colored red) in two different ways to obtain the desired identity.

In the first way, we
  note that at the end of this experiment there are $\beta_1$ balls that are \emph{{marked and red}} within the $t_1$ balls $\{0,\dots,t_1-1\}$ (and also $t_1$ is \emph{{marked and red}}), and for each $i \in [2:d]$ there are $\beta_i$ balls that are \emph{{marked and red}} within the $t_i - t_{i-1}-1$ balls $\{t_{i-1}+1,\dots,t_i-1\}$ (and also $t_i$ is \emph{{marked and red}}); and there are $k-\beta_1- \cdots - \beta_d - d$ other balls that are marked within the $n-\beta_1- \cdots - \beta_d - d$ other balls.  So the total number of outcomes of the experiment is precisely

\[
{t_1 \choose \beta_1} \cdot {t_2 - t_1 - 1 \choose \beta_2} \cdots 
{t_d - t_{d-1} - 1 \choose \beta_d} \cdot
{n - \beta_1 - \cdots - \beta_d - d \choose k - \beta_1 - \cdots - \beta_d - d}.
\]

We can also count the number of outcomes of the experiment in a different way, by viewing the experiment as being carried out as follows:
(a) For some numbers $0 \leq j_1 < \cdots < j_d <k$, \emph{mark} $k$ balls such that for each $i \in [d]$, the $(j_i+1)$-th   marked ball is $t_i$; note that {as mentioned earlier},
$g_{j_1,\dots,j_d}(t_1,\dots,t_d)$ is precisely the number of ways to 
do this.
(b) {Color red} $\beta_1$ of the $j_1$ marked balls before ball $t_1$ (and also color red ball $t_1$; there are ${j_1 \choose \beta_1}$ ways to do this), and for each $i \in [2:d]$, {color red} $\beta_i$ of the $j_i - j_{i-1}-1$ marked balls that lie in $\{t_{i-1}+1,\dots,t_i-1\}$ (and also color red  $t_i$; there are $\smash{j_i - j_{i-1} - 1 \choose \beta_i}$ ways to do this).
From this perspective, the total number of outcomes is 
\[
\sum_{0 \leq j_1 <  \cdots < j_d <k} 
g_{j_1,\dots,j_d}(t_1,\dots,t_d) \cdot 
{j_1 \choose \beta_1}
{j_2 - j_1 - 1 \choose \beta_2}
\cdots
{j_d - j_{d-1} - 1 \choose \beta_d},
\]
and we have established the identity
as desired.
\end{proof}

Observe that when $\beta_1+\cdots+\beta_d\le k-d$, we have
\begin{align*}\left| {j_1 \choose \beta_1}
{j_2 - j_1 - 1 \choose \beta_2}
\cdots
{j_d - j_{d-1} - 1 \choose \beta_d} \right| &\leq k^{\beta_1+\cdots+\beta_d}\le k^{k-d},
\end{align*}
 which implies that the sum of magnitudes of coefficients in the linear combination is\begin{align} \label{bound3} 
	 \frac{\sum_{0 \leq j_1 <  \cdots < j_d<k} \left| {j_1 \choose \beta_1}
{j_2 - j_1 - 1 \choose \beta_2}
\cdots
	 {j_d - j_{d-1} - 1 \choose \beta_d} \right|}{ {n - \beta_1 - \cdots - \beta_d - d \choose k - \beta_1 - \cdots - \beta_d - d} } &\leq \sum_{0 \leq j_1 <  \cdots < j_d<k} k^{k-d} \leq k^k.  \end{align}

We can now combine the three steps to prove Lemma \ref{mainhardlemma}. 
 
\begin{proof}[Proof of Lemma \ref{mainhardlemma}]
We express $t_1^{r_1} \cdots t_d^{r_d}$ as a linear combination of polynomials $g_{j_1,\ldots,j_d}$ 
  via the three steps as described above, with coefficients $w_{j_1,\ldots,j_d}$. 
The sum of magnitudes of coefficients in the linear combination used in the   
  First, Second, and Third steps are bounded from above 
  using inequalities (\ref{bound1}), (\ref{bound2}), and (\ref{bound3}) respectively.
These bounds give us $$\sum_{0 \leq j_1 <  \cdots < j_d <k} \left| w_{j_1, \ldots, j_d} \right| \leq k^{O(k)} \cdot k^{O(k )} \cdot k^k \leq k^{O(k )}$$ as desired. This finishes the proof of the lemma.
\end{proof}

\section{Lower bounds for distributions supported on at most ${2\ell}$ strings}\ignore{mixtures over $\ell$ strings} \label{sec:worst-case-lower}

Our main result in this section is Theorem~\ref{thm:worst-case-lower}, given below, which establishes a lower bound on the sample complexity of population recovery under the deletion channel which is exponential in the population size for a wide range of population sizes:
\ignore{
 Informally, this theorem says that any deletion-channel population recovery algorithm which has non-trivial performance for populations of size $\ell$ must use at least {$\Omega(n^{\ell})$} many samples.  
}
\begin{theorem} \label{thm:worst-case-lower} Fix any constant deletion probability $\delta \in (0, 1)$.
	Suppose that $A$ is an algorithm which, when run on i.i.d.~samples drawn from a distribution $\Del_\delta(\bX)$ with $|\supp(\bX)| \leq 2 {\ell}$, outputs a hypothesis $\tilde{\bX}$ which satisfies $\dtv(\bX,\tilde{\bX}) \leq 0.49$ with probability at least $0.51.$  Then
$A$ must use \[
{\frac {\Omega \left(n/{\ell^2} \right)^{\frac {\ell+1} 2}} {\ell^{\frac 3 2}}}
\]
 many samples.
\end{theorem}
{{If the population size upper bound $2 \ell$} is a constant this gives a lower bound of $\Omega(n^{(\ell+1)/2})$ samples, and for any $\ell<n^{0.499}$ this gives a lower bound of $n^{\Omega(\ell)}.$
}

For the rest of this section fix $\delta \in (0, 1)$ and let $\rho$ denote $1 - \delta$.  
The high-level idea of the proof is as follows:  We show that there exist two distributions $\bX,\bY$ over $\zo^n$ which have disjoint supports, each of size at most ${2 \ell}$, but satisfy
\begin{equation} \label{eq:wcl-key}
\dtv(\Del_{\delta}(\bX),\Del_{\delta}(\bY)) = 
	O \left( {\frac {\ell^2} n} \right)^{\frac {\ell+1} 2} \cdot \ell^{\frac 3 2} \cdot (1 - \delta)
\end{equation}
which clearly implies Theorem~\ref{thm:worst-case-lower}.

For simplicity throughout this section we assume that $n$ is odd, and we write $m$ to denote $(n-1)/2.$  The following notation will be useful:  For $0 \leq i \leq 2 \ell$ we write $e_{m+i}$ to denote the string $0^{m+i}10^{m-i}.$  The two distributions $\bX$ and $\bY$ that we consider will be supported on disjoint subsets of $\{e_{m+i}\}_{i\in[0 : 2 \ell]}$ (and hence each distribution has support size at most $2 \ell + 1$, but in our proofs neither will have full support so their support size will be at most $2 \ell$).

\medskip

\ignore{
\noindent {\bf Intuition.}  For intuition it is helpful to first consider an ``$n=+\infty$'' and $\delta = 1/2$ version of the above scenario; we begin by considering the distribution $\Del_{1/2}(\tilde{e}_{m+i})$ where $m$ is some fixed value and $\tilde{e}_{m+i}$ is an infinite string with a single $1$ in position $m+i$ and all other coordinates $0$.  Half of the outcomes of $\Del_{1/2}(\tilde{e}_{m+i})$ are the infinite all-$0$ string, which conveys no information.  The other half of the outcomes each have precisely one 1, occurring in position $\ba$ where $\ba$ is distributed according to the binomial distribution $\Bin(m+i,1/2)$.  In this infinite-$n$ setting, two distributions $\pi_S,\pi_T$ with disjoint supports $S,T \subset \{e_{m+i}\}_{i \in [0,2 \ell]}$ correspond to two mixtures of distinct binomial distributions (all with second parameter $1/2$, but with a set of first parameters in the first mixture that is disjoint from the set of first parameters in the second mixture).  The animating idea behind our construction and analysis is that it is possible for two distinct mixtures of binomial distributions like this to be very close to each other in total variation distance.

Our actual scenario is more complicated because $n$ is a finite value rather than $+\infty$ and $\delta$ is a constant in $(0,1)$ rather than $1/2.$  This means that a received trace $0^{\ba} 1 0^{\bb}$ which contains a 1 and came from $\Del_{\delta}(e_{m+i})$ provides a pair of values $(\ba,\bb)$ where $\ba$ is distributed according to $\Bin(m+i,{\rho})$ and $\bb$ is independently distributed according to $\Bin(m-i,{\rho})$ where $\rho=1-\delta$ is the retention probability. This second value $\bb$ provides additional information which is not present in the $n=+\infty$ version of the problem, and this makes it more challenging (and more technically involved) to prove a lower bound.

Our proof proceeds as follows:  First, we show that there are two disjoint sets $S,T 
\subset [0,2 \ell]$ and distributions $\pi_S,\pi_T$, supported on $\{e_{m+i}\}_{i \in S},
\{e_{m+i}\}_{i \in T}$ respectively, which are such that certain associated mixtures of binomial distributions have a certain ``moment-matching'' property.  Then in the second step, we show that any two distributions $\pi_S,\pi_T$ that are both supported on $\{e_{m+i}\}_{i \in [0,2 \ell]}$ with this moment-matching property must satisfy (\ref{eq:wcl-key}).
}
\medskip

\noindent {\bf Notation and setup.} For notational convenience, let $\B(r)$ denote the binomial distribution $\Bin(r, {\rho})$.

Let $S$ be a set of indices, $\pi_S$ be a distribution over $S$, and $\{ \bV_i \}_{i \in S}$ be a set of random variables indexed by $S$. We write $\Mix(\pi_S ; \{ \bV_i \}_{i \in S})$ to denote the mixture over $\{ \bV_i \}_{i \in S}$ with each $\bV_i$ weighted by $\pi_S (i)$.

For conciseness we write $\bZ_n$ to denote a random variable which is distributed according to the binomial distribution $\B(n).$  We recall the following convenient expression for the falling moments of the binomial distribution:  for any $t = 0,1,\ldots$, we have
\begin{equation} \label{eq:falling-moments}
\E[\bZ_n (\bZ_n - 1) \cdots (\bZ_n - t)] = P_t(n), 
	\quad \quad \text{where~} P_t(n) =  {n(n-1)\cdots(n-t)} {{\rho}^{t+1}}.
\end{equation}

For completeness we include the derivation below: \begin{align*}
	\E[\bZ_n (\bZ_n - 1) \cdots (\bZ_n - t)] &= \sum_{i = 0}^n i(i-1)\cdots(i-t) \cdot {n \choose i} {\rho}^i (1-{\rho})^{n-i} \\
		&= \sum_{i = t+1}^n {\frac {n!} {(n-i)! (i - t - 1)!}} \cdot {\rho}^i \cdot (1 - {\rho})^{n - i} \\
		&= n(n-1)\cdots (n- t) {\rho}^{t+1} \sum_{j = 0}^{n - t - 1} {n - t - 1 \choose j} {\rho}^j (1 - {\rho})^{n - t - 1 - j} \\
		&= P_{t}(n).
\end{align*}

\medskip 


\noindent {\bf The key lemmas.}  The first main lemma makes precise the moment-matching property of $\pi_S$ and $\pi_T$ that we require:

\begin{lemma} [Matching moments of mixtures of disjointly supported binomial distributions] \label{lem:can-match-moments}
{Let $\ell \leq O(\sqrt{n})$.}\footnote{{Note that if $\ell = \omega(\sqrt{n})$ then Theorem~\ref{thm:worst-case-lower} holds trivially, so this assumption is without loss of generality.}}
There are two disjoint subsets $S,T \subset [0 : 2 \ell]$ and two distributions $\pi_S,\pi_T$ supported on $\{e_{m+i}\}_{i \in S}$ and $\{e_{m+j}\}_{j \in T}$ respectively with the following property (which we call the ``matching moment property''):

	 Let $\tilde{\bD}_S$ be a random variable whose distribution is the mixture of $\{\bZ_{m+i}\}_{i \in S}$ in which distribution $\bZ_{m+i}$ has mixing weight $\pi_S(e_{m+i})$, and likewise $\tilde{\bD}_T$ be a random variable whose distribution is the mixture of $\{\bZ_{m+j}\}_{j \in T}$ in which distribution $\bZ_{m+j}$ has mixing weight $\pi_T(e_{m+j})$. Then the first ${\ell}$ moments of $\tilde{\bD}_S$ and $\tilde{\bD}_T$ match each other, i.e. for all $t \in [\ell]$, we have
\begin{equation} \label{eq:matching-moments}
\E[(\tilde{\bD}_S)^t] = \E[(\tilde{\bD}_T)^t].
\end{equation}
\end{lemma}

The second main lemma (statement given in Lemma~\ref{lem:matching-moments-gives-lb} below) gives the desired upper bound on total variation distance.
To prove Theorem~\ref{thm:worst-case-lower} it suffices to prove Lemmas~\ref{lem:can-match-moments} and~\ref{lem:matching-moments-gives-lb}.

\subsection{Proof of Lemma~\ref{lem:can-match-moments} }

\begin{proof}
\ignore{In this proof, $k$ is fixed.}
{The proof is by a linear algebraic argument.}
Let $r = m + \ell$. Consider the mixtures 
\[
	\tilde{\bD}_S = \Mix(\{ a_{|i|} \}_{i \in [-\ell : \ell]} ; \{ \bZ_{r + i} \}_{i \in [-\ell : \ell]})
\] and \[
	\tilde{\bD}_T = \Mix(\{ b_{|j|} \}_{j \in [-\ell : \ell]} ; \{ \bZ_{r + j} \}_{j \in [-\ell : \ell]})
\] where all $a_i, b_i \in [0, 1]$ and {$\sum_{i=-\ell}^{\ell} a_{{|i|}} = 
	\sum_{j = -\ell}^{\ell} b_{{|j|}} = 1.$} Let $c_i = a_i - b_i$ for $0 \leq i \leq \ell$.

	We will prove the existence of a non-trivial solution $a_i, b_i$ (i.e., such that $a_i \neq b_i$ for some $i$) such that the following system holds:
\begin{equation} \label{SYSTEMA}
\begin{aligned}
\E[\tilde{\bD}_S] &= \E[\tilde{\bD}_T]\\
\E[\tilde{\bD}_S(\tilde{\bD}_S-1)] &= \E[\tilde{\bD}_T(\tilde{\bD}_T-1)] \\
& \cdots\\
\E[\tilde{\bD}_S(\tilde{\bD}_S-1)\cdots(\tilde{\bD}_S- \ell +1)] &= \E[\tilde{\bD}_T(\tilde{\bD}_T-1)\cdots(\tilde{\bD}_T- \ell +1)].
\end{aligned}
\end{equation} Observe that this is the same as requiring that $\E[\tilde{\bD}_S^t] = \E[\tilde{\bD}_T^t]$ for $t \leq \ell$. In (\ref{SYSTEMA}), we will be viewing $\E[Q(\tilde{\bD}_S)]$ and $\E[Q(\tilde{\bD}_T)]$ (for polynomials $Q$) as polynomials in $n$. We want the equations in~(\ref{SYSTEMA}) to hold as polynomial equalities.

Note that for $t \geq 0$, by (\ref{eq:falling-moments})
	we can rewrite the condition $\E[\tilde{\bD}_S(\bD_S-1)\cdots(\tilde{\bD}_S-t)] = \E[\tilde{\bD}_T(\tilde{\bD}_T-1)\cdots(\tilde{\bD}_T-t)] $ as the condition \begin{equation} \label{eq:valatn} c_0 P_t (r) + \sum_{i = 1}^{\ell} c_i \left( P_t (r +i) + P_t (r -i) \right) = 0, \end{equation} {viewing both sides as formal polynomials in $r$. Since $P_t(x)$ has degree $t+1$ in $x$, for $0 \leq \ell \leq t+1$ the coefficient of $r^\ell$ in the polynomial on the LHS of (\ref{eq:valatn}) is zero,}\ignore{By setting the coefficient of ${r}^\ell$ equal to $0$,\snote{Rephrase to imply that we consider the logical consequence of this equation involving only coefficients of $r^\ell$. Reviewer 1 interpreted this as saying we force the coefficient to be $0$.} for $0 \leq \ell \leq t+1$ (because $P_t (x)$ has degree $t+1$ in $x$),} and consequently we get a system of $t + 2$ homogeneous linear equations in $c_0, c_1, \ldots, c_{\ell}$.
		
		Naively, it seems that (\ref{SYSTEMA}) gives us $2 + 3 + \cdots + \ell + 1 = {{\ell +2} \choose 2} - 1$ many homogeneous linear equations, which is far more than the $\ell +1$ variables $c_0,\dots,c_{\ell}$  that are in play. At this point it is unclear that (\ref{SYSTEMA}) necessarily has a nonzero solution in the $c_i$'s. We will show that (\ref{SYSTEMA}) is actually comprised of at most $\ell$ equations and hence it must have a nonzero solution.

Thus, to prove the existence of a non-trivial solution to (\ref{SYSTEMA}) phrased in terms of $\tilde{\bD}_S$ and $\tilde{\bD}_T$, it suffices to prove the existence of a non-trivial solution to (\ref{SYSTEMA}) phrased in terms of polynomial equalities.

	To do this, we observe that equation (\ref{eq:valatn}) is also true when we replace $r$ by $r+1$ and get the condition \begin{equation} \label{eq:valatnp1} c_0 P_t (r +1) + \sum_{i = 1}^{\ell} c_i \left( P_t (r +1+i) + P_t (r +1-i) \right) = 0 \end{equation} as a polynomial in $r$. (Note that the initial assumption $\ell \leq \Omega(n)$ still holds if we increase $n$.)
	
Observe that
$$P_t (r +1) = P_t (r) + \rho \cdot (t + 1) P_{t-1} (r),$$
	so if we subtract (\ref{eq:valatn}) from (\ref{eq:valatnp1}) and divide by ${\rho}(t+1)$, then we get the condition \begin{equation*} c_0 P_{t-1} (r) + \sum_{i = 1}^{\ell} c_i \left( P_{t-1} (r +i) + P_{t-1} (r -i) \right) = 0 \end{equation*} as a polynomial in $r$. Since this is true for all $t$, then we have derived the condition $\E[\tilde{\bD}_S(\tilde{\bD}_S-1)\cdots(\tilde{\bD}_S-t+1)] = 0$. It follows by induction that all of (\ref{SYSTEMA}) follows from the condition $\E[\tilde{\bD}_S(\tilde{\bD}_S-1)\cdots(\tilde{\bD}_S- \ell +1)] = 0$.

		Thus we have reduced (\ref{SYSTEMA}) to a system of $\ell +1$ homogeneous linear equations over $\ell +1$ variables, but the first equation (which comes from {observing that the coefficient of $r^{\ell}$ in the LHS of (\ref{eq:valatn}) is 0}) will be \begin{equation} \label{eq:top}
			2c_{\ell} + 2c_{\ell-1} + \cdots + 2c_1 + c_0 = 0
\end{equation}
	and a second equation (which comes from {observing that the coefficient of $r^{\ell -1}$ in the LHS of (\ref{eq:valatn}) is 0}) will be $$ -\ell(\ell -1)c_{\ell} - \ell(\ell -1)c_{\ell -1} - \cdots - \ell(\ell -1) c_1 - (\ell /2)(\ell -1)c_0 = 0$$ because the coefficient of $r^{\ell -1}$ in $P_{\ell} (r)$ is $-(\ell /2)(\ell -1)$. This is just equation (\ref{eq:top}) times $-(\ell /2)(\ell -1)$. So there are actually at most $\ell$ distinct equations in $\ell +1$ variables, and hence there is (at least) a line of non-trivial solutions in the $c_i$'s.

	Given a satisfying assignment to the $c_i$'s, for each $i$ with $c_i = 0$ we set $a_i = b_i = 0$. If $c_i > 0$, then we set $a_i = c_i$ and $b_i = 0$. If $c_i < 0$, then we set $a_i = 0$ and $b_i = -c_i$. Note that \[
		0 = 2c_{\ell} + 2c_{\ell -1} + \cdots + 2c_1 + c_0 = 2a_{\ell} + 2a_{\ell -1} + \cdots + 2a_1 + a_0 - (2b_{\ell} + 2b_{\ell -1} + \cdots + 2b_1 + b_0) 	
	\] and that by homogeneity, we can scale all the $c_i$'s by any multiplicative constant and still get a valid solution to (\ref{SYSTEMA}). We scale the $c_i$'s so that $2a_{\ell} + 2a_{\ell -1} + \cdots + 2a_1 + a_0 = 1$. The above equation implies that $2b_{\ell} + 2b_{\ell -1} + \cdots + 2b_1 + b_0 = 1$ as well.
	
	This results in the coefficients $a_i$ and $b_i$ satisfying (\ref{SYSTEMA}) and $\tilde{\bD}_S$ and $\tilde{\bD}_T$ being valid distributions that are disjointly supported. Since the $c_i$'s were non-trivial, then at least one coefficient $c_i$ is non-zero and by equation (\ref{eq:top}), there exist coefficients $c_j$ and $c_k$ of opposite sign. Thus, both $\tilde{\bD}_S$ and $\tilde{\bD}_T$ have support sizes at most $2 \ell$.  

	We take $\pi_S (e_{m+t}) = a_{| t - \ell |}$ and $\pi_T (e_{m+t}) = b_{| t - \ell |}$ to conclude the proof.
\end{proof}

We will use the following corollary of Lemma~\ref{lem:can-match-moments}:

\begin{corollary} \label{cor:matching-moments}
Let $S,T,\pi_S,\pi_T$ be as in Lemma~\ref{lem:can-match-moments}.  Then for any polynomial $p$ of degree at most $\ell$, we have 
	
\begin{equation} \label{eq:polyeq}
\sum_{i \in \N} \pi_S(e_{m+i}) p(m+i)
= 
\sum_{j \in \N} \pi_S(e_{m+j}) p(m+j).
\end{equation}
\end{corollary}
\begin{proof}
Equation~(\ref{eq:matching-moments}) can be rewritten as
\[
\sum_{i \in \N} \pi_S(e_{m+i}) \E[(\bZ_{m+i})^t]=
\sum_{j \in \N} \pi_S(e_{m+j}) \E[(\bZ_{m+j})^t],
\]
which holds for all $t \leq \ell$. This is equivalent to having equal falling moments, i.e. for all $t \in [\ell],$
\[
	\sum_{i \in \N} \pi_S(e_{m+i}) \E[P_{t - 1} (\bZ_{m+i})]=
	\sum_{j \in \N} \pi_S(e_{m+j}) \E[P_{t - 1} (\bZ_{m+j})].
\] Indeed, for a random variable $\bZ$, $\E[P_{t - 1} (\bZ)]$ can be written as a linear combination of $1, \E[\bZ], \E[\bZ^2],$ $\ldots, $ $\E[\bZ^t]$ and since $1, P_{0} (\bZ), P_{1} (\bZ), \ldots, P_{\ell - 1} (\bZ)$ form a set of $\ell$ polynomials in $\bZ$ with degrees $0, 1, 2, \ldots, \ell$, then they form a basis for polynomials in $\bZ$ with degree at most $\ell$. 

By (\ref{eq:falling-moments}), this is in turn equivalent to having, for all $t \in [\ell],$
\[
	\sum_{i \in \N} \pi_S(e_{m+i}) \cdot P_{t - 1}(m+i) =
	\sum_{j \in \N} \pi_S(e_{m+j}) \cdot P_{t - 1}(m+j),
\] 
which is in turn equivalent to (\ref{eq:polyeq}) by the reasoning in the above paragraph.
\end{proof}

\subsection{Total Variation Distance Upper Bound} \label{sec:tvdub}

We state Lemma~\ref{lem:matching-moments-gives-lb} below. Informally, it says that if $\pi_S,\pi_T$ have the matching moment property, then the variation distance between two corresponding mixtures of two-dimensional vector-valued random variables is small.  (In the following, the notation $(\B(a),\B(b))$ stands for a vector-valued random variable in which the two coordinates are independently drawn from $\B(a)$ and $\B(b)$ respectively.)

\begin{lemma} \label{lem:matching-moments-gives-lb}
Let \ignore{$\pi_S,\pi_T$} {$\bX,\bY$} be two distributions with disjoint supports $\{e_{m+i}\}_{i \in S}$ and $\{e_{m+j}\}_{j \in T}$ respectively, where $S \cup T \subset [0 : 2 \ell]$, with the matching moment property from Lemma~\ref{lem:can-match-moments} above.  Then
\begin{equation} \label{eq:non-weak}
\ignore{
\dtv(\Mix(\pi_S; ((\B(m+i),\B(m-i)))_{i \in S}), \Mix(\pi_T;((\B(m+j),\B(m-j)))_{j \in T}))}
\dtv(\Del_{\delta}(\bX),\Del_{\delta}(\bY)) \leq O \left( {\frac {\ell^2} n} \right)^{\frac {\ell+1} 2} \cdot \ell^{\frac 3 2} {\cdot (1-\delta)}.
\end{equation}

\end{lemma}

\noindent {\bf Setup and useful results.}
Our proof of Lemma~\ref{lem:matching-moments-gives-lb} is based on ``moment-matching'' results for Poisson binomial distributions which were proved by Roos \cite{Roo00} and subsequently used by Daskalakis and Papadimitriou \cite{DP15}. Our approach is similar to the approach used in \cite{DP15}.
To state these results, recall that a \emph{Poisson binomial distribution} (PBD) is a sum $\bU = \bA_1 + \cdots + \bA_n$ of independent Bernoulli random variables (so each $\bA_i$ is a random variable taking value 1 with some probability $p_i \in [0,1]$ and taking value 0 with probability $1-p_i$). 

In \cite{DP15}, it is shown that if two PBDs satisfy some mild technical condition and have matching first $\ell$ moments, then they have total variation distance at most $2^{-\Omega(\ell)}$. We show that two mixtures of pairs of {suitable} binomially distributed variables that have matching first $\ell$ moments will have total variation distance at most  $n^{-\Omega(\ell)}$.

We recall Theorem~1 of \cite{DP15}, which gives a ``Krawtchouk expansion'' for any Poisson binomial distribution.  This provides an expression for the exact probability that the Poisson binomial distribution puts on any outcome in its support.  (We state the theorem for PBDs which are a sum of $n'$ many random variables, as when we apply it later it will be for such PBDs where $n' = m+\ell = (n-1)/2 + \ell.$)

\begin{theorem} [Theorem~1 of \cite{Roo00}, see also Theorem~7 of \cite{DP15}] \label{thm:roos}
Let $\bU = \bA_1 + \cdots + \bA_{n'}$ be a Poisson binomial distribution in which each $\bA_i$ takes value 1 with probability $p_i \in [0,1]$. Then for all $r \in [n']$ and all $p \in [0,1]$, we have
\begin{equation}
\Pr[\bU=r] =  \sum_{t=0}^{n'} \alpha_t(p_1,\dots,p_{n'}; p) \cdot \Delta^t B_{n' - t,p}(r),
\label{eq:roos}
\end{equation}
where
\begin{itemize}

\item $\alpha_0(p_1,\dots,p_{n'}; p) = 1$ and for $t \in [0 : n']$,
\[
\alpha_t(p_1,\dots,p_{n'}; p) := 
\sum_{1 \leq u(1) < \cdots < u(t) \leq n'} \prod_{r=1}^t (p_{u(r)}-p),
\]

\item and for all $t \in [0 : n']$,
\[
\Delta^t B_{n' - t,p}(r) := {\frac {(n'-t)!}{n'!}} \cdot {\frac {d^t}{dp^t}} B_{n',p}(r),
\]
where in the last expression $B_{n',p}(r)$ denotes the value ${n' \choose r} p^r (1-p)^{n'-r}$, the probability that the binomial distribution $\Bin(n',p)$ puts on the outcome $r$, viewed as a function of $p$.

\end{itemize}
\end{theorem}
\ignore{
The next corollary follows immediately from Theorem~\ref{thm:roos}.

\begin{corollary} 
\label{cor:roos}
Let $\bU = \bA_1 + \cdots + \bA_{n'}$ be a Poisson binomial distribution in which each $\bA_i$ takes value 1 with probability $p_i \in [0,1]$, and let 
$\bV = \bB_1 + \cdots + \bB_{n'}$ be an independent Poisson binomial distribution in which each $\bB_i$ takes value 1 with probability $q_i \in [0,1]$. Then for all $r,s \in [0 : n]$ and all $p \in [0,1]$, we have
\begin{equation}
\Pr[\bU=r \ \& \ \bV=s] =  
\left(
\sum_{t=0}^{n'} \alpha_t(p_1,\dots,p_{n'}; p) \cdot \Delta^t B_{n' - t,p}(r)
\right) 
\cdot
\left(
\sum_{t'=0}^{n'} \alpha_{t'}(q_1,\dots,q_{n'}; p) \cdot \Delta^{t'} B_{n' - t',p}(s)
\right),
\label{eq:roos2}
\end{equation}
where $\alpha_t(\cdot,\cdot)$ and $\Delta^t B_{n',p}(\cdot)$ are as in the statement of Theorem~\ref{thm:roos}.\end{corollary}}

We highlight the fact that $\Delta^t B_{n',p}(r)$ has no dependence on the parameters $p_1,\dots,p_{n'}$; this will be important for us later.

The following result, deduced from \cite{Roo00}, is very useful in analyzing (\ref{eq:roos}). It bounds each of the $n'+1$ summands in (\ref{eq:roos}) which add up to $\Pr[\bU=r]$.

\begin{theorem} 
\label{thm:roos-ub}
	Let $(p_1,\dots,p_{n'}) \in [0,1]^{n'}$, $p \in [0,1]$, and $\alpha_t(\cdot,\cdot)$ be as in the statement of Theorem~\ref{thm:roos}. Define \begin{equation} \label{eq:theta}
\theta(p_1,\dots,p_{n'}; p) :=
{\frac {2 \sum_{i=1}^{n'} (p_i - p)^2 + (\sum_{i=1}^{n'} (p_i - p))^2}
{2n'p^2 (1-p)^2}}.
\end{equation}

For $t \in [n']$, \begin{equation} \label{eq:roos-ub}
		|\alpha_t(p_1,\dots,p_{n'};p)| \cdot \|\Delta^t B_{n' - t,p}(\cdot)\|_1 \leq \sqrt{e} \cdot \theta(p_1,\dots,p_{n'};p)^{\frac t 2} t^{\frac 1 4}
\end{equation} where $\|\Delta^t B_{n' - t,p}(\cdot)\|_1$ denotes the 1-norm of $\Delta^t B_{n' - t,p}(\cdot)$ viewed as an $(n'+1)$-dimensional vector, i.e.
$\|\Delta^t B_{n' - t,p}(\cdot)\|_1 := \sum_{r=0}^{n'} \left| \Delta^t B_{n' - t,p}(r) \right|$.
\end{theorem}

\begin{proof}
	Inequality~(30) in \cite{Roo00} gives $$|\alpha_t(p_1,\dots,p_{n'};p)| \leq p^{\frac t 2} (1 - p)^{\frac t 2} \theta(p_1,\dots,p_{n'};p)^{\frac t 2} \left( {\frac {n'} {n' - t}} \right)^{\frac {n' - t} {2}}$$ for $t \in [n']$.

	Inequality~(38) in \cite{Roo00} gives $$\|\Delta^t B_{n' - t,p}(\cdot)\|_1 \leq \sqrt{e} \cdot t^{\frac 1 4} \left({\frac {n' - t} {n'}} \right)^{\frac {n' - t} {2}} \left( {\frac {t} {n' p (1-p)}} \right)^{\frac t 2}$$ for $t \in [n']$.

	By multiplying the above two inequalities together we get the desired result because $t \leq n'$.
\end{proof}

For conciseness we now let $\bD_S$ denote
$\Mix(\pi_S; ((\Bin(m+i,\rho),\Bin(m-i,\rho)))_{i \in S})$ where in each component two-dimensional distribution the two distributions $\Bin(m+i,\rho)$ and $\Bin(m-i,\rho)$ are independent, and similarly we let
$\bD_T$ denote $\Mix(\pi_T;((\Bin(m+j,\rho),\Bin(m-j,\rho)))_{j \in T})$. 
{In the rest of the proof we will argue that
\begin{equation}
\label{eq:goal}
\dtv(\bD_S,\bD_T) \leq O \left( {\frac {\ell^2} n} \right)^{\frac {\ell +1} 2} \cdot \ell^{\frac 3 2}
\end{equation}
This establishes the claimed upper bound on $\dtv(\Del_\delta(\bX),\Del_\delta(\bY))$ given in (\ref{eq:non-weak}).  To see this, observe that for any outcome in $\supp(\bX)$ or $\supp(\bY)$, with probability $\delta$ the one 1-coordinate is deleted under $\Del_\delta$ (in which case the distributions  resulting from $\Del_\delta(\bX)$ and $\Del_\delta(\bY)$ are identical), and that with the remaining $1-\delta$ probability (when the one 1-coordinate is not deleted) there is an exact correspondence between $\Del_\delta(\bX)$ and $\bD_S$ and between $\Del_\delta(\bY)$ and $\bD_T$.
}

For an index $c \leq n'$, let $v^{(c)}$ denote the $n'$-dimensional real vector whose first $c$ values are {$\rho$} and whose remaining values are $0$. 

For $t, t' \in [0:n']$ we define
\begin{align*}
C_{t, t'}(p) &= \sum_{i \in \N} \pi_S(e_{m+i}) \cdot \alpha_t (v^{(m+i)};p) \cdot \alpha_{t'}(v^{(m-i)};p),\\
D_{t, t'}(p) &= \sum_{j \in \N} \pi_T(e_{m+j}) \cdot \alpha_t (v^{(m+j)};p) \cdot \alpha_{t'}(v^{(m-j)};p).
\end{align*}

\noindent The following lemma is crucial for us.  Recall that $n'=m + \ell.$

\begin{lemma} \label{lem:DP6}
Let $\pi_S,\pi_T$ be as in the statement of Lemma~\ref{lem:matching-moments-gives-lb}. Then for any $p \in [0,1]$, the values $C_{t, t'}(p)$ and $D_{t, t'}(p)$ are identical for $t, t' \geq 0$ and $t + t' \leq \ell$.
\end{lemma}
\begin{proof}
Let $p$ be any value in $[0,1]$.  If $t + t' = 0$, then $t = t' = 0$. Recalling that $\alpha_0(\cdot, \cdot) \equiv 1$ we have that
\[
	C_{0, 0}(p) = \sum_{i \in \N} \pi_S(e_{m+i}) = 1 = \sum_{j \in \N} \pi_T(e_{m+j}) = D_{0,0}(p)
\]
as desired.  

	For $t + t' \geq 1,$ we observe that $\alpha_t(v^{(m+i)};p) \cdot \alpha_{t'}(v^{(m-i)};p)$ is composed of summands of the form $({\rho} - p)^{c + c'} (-p)^{t + t' - c - c'}$ for $c \in [0, t], c' \in [0, t']$. 
	
	In particular, we have \[
	C_{t, t'}(p) = \sum_{i \in \N} \pi_S(e_{m+i}) \cdot \sum_{c = 0}^t \sum_{c' = 0}^{t'} {m + i \choose c} {n' - m - i \choose t - c} {m - i \choose c'} {n' - m + i \choose t' - c'} \left({\rho} - p \right)^{c + c'} \left( -p \right)^{t + t' - c - c'},
\] in which each $\pi_S(e_{m+i})$ is multiplied by a polynomial in $m$ of degree at most $t + t' \leq \ell$. 
	
	Similarly, we have
\[
	D_{t, t'}(p) = \sum_{j \in \N} \pi_T(e_{m+j}) \cdot \sum_{c = 0}^t \sum_{c' = 0}^{t'} {m + j \choose c} {n' - m - j \choose t - c} {m - j \choose c'} {n' - m + j \choose t' - c'} \left({\rho} - p \right)^{c + c'} \left( -p \right)^{t + t' - c - c'}
\] and by Corollary~\ref{cor:matching-moments}, we see that $C_{t, t'}(p)=D_{t, t'}(p)$.
\end{proof}

Now we proceed to prove Lemma~\ref{lem:matching-moments-gives-lb}. Our argument is similar to the proof of Theorem~3 in \cite{DP15}.  

Let $p \in [0,1]$ and $r, s \in [0:n']$. We have
\begin{align*}
	\Pr[\bD_S =(r,s)]-\Pr[\bD_T =(r,s)] &= \sum_{t,t'=0}^{n'} \Delta^t B_{n',p} (r) \cdot \Delta^{t'} B_{n', p} (s) \left( C_{t, t'} (p) - D_{t, t'} (p) \right) \\
	&= \sum_{t + t' > \ell}^{n'} \Delta^t B_{n',p} (r) \cdot \Delta^{t'} B_{n', p} (s) \left( C_{t, t'} (p) - D_{t, t'} (p) \right)
\end{align*}
where the two lines are by Theorem~\ref{thm:roos} and Lemma~\ref{lem:DP6} respectively.
 As a result, for any $p \in [0,1]$ we have \begin{align*}
\dtv(\bD_S,\bD_T)  &= {\frac 1 2} \sum_{r, s=0}^{n'} |\Pr[\bD_S =(r,s)]-\Pr[\bD_T =(r,s)]| \nonumber\\
	&\leq {\frac 1 2} \sum_{t + t' > \ell}^{n'} |C_{t, t'}(p) - D_{t, t'}(p)| \cdot \|\Delta^t B_{n',p}(\cdot)\|_1 \cdot \|\Delta^{t'} B_{n',p}(\cdot)\|_1 \nonumber \\
	&\leq {\frac 1 2} \sum_{t + t' > \ell}^{n'} (|C_{t, t'}(p)| + |D_{t, t'}(p)|) \cdot \|\Delta^t B_{n',p}(\cdot)\|_1 \cdot \|\Delta^{t'} B_{n',p}(\cdot)\|_1 \nonumber
 \end{align*} and expanding out definitions gives \begin{align*}
	\dtv(\bD_S, \bD_T) &\leq {\frac 1 2} \sum_{t + t' > \ell}^{n'} \left( \left| \sum_{i \in S} \pi_S(e_{m+i}) \cdot \alpha_t(v^{(m+i)};p) \cdot \alpha_{t'}(v^{(m-i)};p) \right| \right. \nonumber\\
	&\indent \indent \indent + \left. \left|\sum_{j \in T} \pi_T(e_{m+j}) \cdot \alpha_t(v^{(m+j)};p) \cdot \alpha_{t'}(v^{(m-j)};p) \right| \right) \cdot \|\Delta^t B_{n',p}(\cdot)\|_1 \cdot \|\Delta^{t'} B_{n',p}(\cdot)\|_1 \nonumber\\ 
	&\leq {\frac 1 2} \sum_{i \in S} \pi_S(e_{m+i}) \sum_{t + t' > \ell}^{n'} \left|\alpha_t(v^{(m+i)};p) \right| \left|\alpha_{t'}(v^{(m-i)};p) \right| \cdot \|\Delta^t B_{n',p}(\cdot)\|_1 \cdot \|\Delta^{t'} B_{n',p}(\cdot)\|_1 \nonumber\\
	 &\indent \indent \indent + {\frac 1 2} \sum_{j \in T} \pi_T(e_{m+j}) \sum_{t + t' > \ell}^{n'} \left|\alpha_t(v^{(m+j)};p) \right| \left|\alpha_{t'}(v^{(m-j)};p) \right| \cdot \|\Delta^t B_{n',p}(\cdot)\|_1 \|\Delta^{t'} B_{n',p}(\cdot)\|_1.
\end{align*}

By Theorem~\ref{thm:roos-ub}, \begin{align}
	\dtv(\bD_S,\bD_T) &\leq {\frac e 2} \sum_{i \in S} \pi_S(e_{m+i}) \sum_{t + t' > \ell}^{n'} \theta(v^{(m+i)};\delta)^{\frac t 2} \cdot \theta(v^{(m-i)}; \delta)^{\frac {t'} 2} \cdot t^{\frac 1 4} {t'}^{\frac 1 4} \nonumber\\
	&\indent \indent \indent + {\frac e 2} \sum_{j \in T} \pi_T(e_{m+j}) \sum_{t + t' > \ell}^{n'} \theta(v^{(m+j)}; \delta)^{\frac t 2} \cdot \theta(v^{(m-j)}; \delta)^{\frac {t'} 2} \cdot t^{\frac 1 4} {t'}^{\frac 1 4} \nonumber\\ 
	&\leq {\frac e {2\sqrt{2}}} \sum_{i \in S} \pi_S(e_{m+i}) \sum_{t + t' > \ell}^{n'} {\theta(v^{(m+i)}; \delta)}^{\frac t 2} {\theta(v^{(m-i)}; \delta)}^{\frac {t'} 2} \sqrt{t + t'} \nonumber\\
	&\indent \indent \indent + {\frac e {2\sqrt{2}}} \sum_{j \in T} \pi_T(e_{m+j}) \sum_{t + t' > \ell}^{n'} {\theta(v^{(m+j)}; \delta)}^{\frac t 2} {\theta(v^{(m-j)}; \delta)}^{\frac {t'} 2} \sqrt{t + t'} \nonumber
\end{align} where the second inequality can be deduced from the AM-GM inequality.

Fix any $i \in S, j \in T$.  Let $p={\rho}$ in Theorem~\ref{thm:roos-ub}. Since $\rho$ is a constant in $(0,1)$ we get that 
\[
	\theta(v^{(m+i)};{\rho})= {\frac {2(\ell -i) {\rho}^2 + (\ell -i)^2 \cdot {\rho}^4} {2(m+ \ell) {\rho}^2 (1 - {\rho})^2}}
\leq O \left( {\frac {\ell^2} n} \right).
\] and similarly \[
	\theta(v^{(m-i)};{\rho}), \theta(v^{(m \pm j)};{\rho}) \leq O \left( \ell^2 / n \right)
	\] because $\ell \leq O(\sqrt{n})$ ({and we may assume that $\ell \leq O(\sqrt{n})$ since otherwise the total variation distance bound claimed in the lemma is trivial}).

By choosing sufficiently large $n$ and appropriate constants, we can upper bound the RHS by some $\theta < 1/2$. This gives

\begin{align*}
	\dtv(\bD_S,\bD_T) &\leq O \left( \sum_{t + t' > \ell}^{n'} {\theta}^{\frac {t + t'} 2} \sqrt{t + t'} \right) 
		\leq O \left( \sum_{i > \ell}^{n'} {\theta}^{\frac i 2} i^{\frac 3 2} \right) 
		\leq O(\ell +1)^{\frac {-1} 2} \sum_{i > \ell} {\theta}^{\frac i 2} i^2 
\end{align*} where the second inequality comes from the fact that there are $i + 1$ pairs of non-negative integers $t, t'$ that sum to $i$, and the third inequality comes from the fact that $i^{\frac 3 2} \leq (\ell +1)^{\frac {-1} 2} i^2$ when $i > \ell$.

Observe that \begin{align*}
	\sum_{i > \ell} x^i i^2 &= x \cdot {\frac d {dx}} (x \cdot {\frac d {dx}} {\sum_{i > \ell} x^i}) 
		= x \cdot {\frac d {dx}} (x \cdot {\frac d {dx}} {\frac {x^{\ell +1}} {1 - x}})
\end{align*} for $0 < x < 1$, so $$\sum_{i > \ell} x^i i^2 = {\frac {x^{\ell + 1}} {(1 - x)^3}} \cdot (\ell^2 (1 - x)^2 + 2 \ell (1 - x) + 1 + x) \leq O(\ell + 1)^2 \cdot {\frac {x^{\ell + 1}} {(1 - x)^3}} \leq O(\ell + 1)^2 \cdot x^{\ell + 1}$$ for $0 < x < 1/2$. This means \begin{align*}
	\dtv(\bD_S,\bD_T) &\leq O(\ell +1)^{\frac 3 2} {\theta}^{\frac {\ell +1} 2} 
		\leq O \left( {\frac {\ell^2} n} \right)^{\frac {\ell +1} 2} \cdot \ell^{\frac 3 2}
\end{align*}
giving (\ref{eq:goal}) as desired and concluding the proof of Lemma~\ref{lem:matching-moments-gives-lb}.

\begin{flushleft}
\bibliography{prdc}{}
\bibliographystyle{alpha}
\end{flushleft}

\appendix


\section{Proof of Lemma~\ref{uuuuu}} \label{ap:polynomial-h}

\subsection{Chebyshev polynomials} \label{sec:chebyshev}

Let $T_r(x)$ denote the $r${th} Chebyshev polynomial of the first kind. Then $T_r$ has degree $r$
  and satisfies the following  property:

\begin{property}\label{lalu1}
	$T_r(1)=1$ and $|T_r(x)|\le 1$ for all $|x|\le 1$.
	 If $x>1$ then $T_r(x)>1$.
\end{property}

We will need an upper bound for $T_r(x)$ over $x\in [1,2]$.
For this purpose we recall the following explicit form of $T_r(x)$ for $|x|\ge 1$:
\begin{equation}\label{huuu}
T_r(x)=\frac{(x-\sqrt{x^2-1})^r+(x+\sqrt{x^2-1})^r}{2}.
\end{equation}

\begin{property}\label{hehehe}
	For $a\in [0,1]$, we have $T_r(1+a)\le e^{3r\sqrt{a}}$.
\end{property}
\begin{proof}
	Using (\ref{huuu}) we have
	$$ T_r(1+a)\le \left(1+a +\sqrt{2a+a^2}\right)^r
	\le (1+3\sqrt{a})^r\le e^{3r\sqrt{a} },$$
	where we used $a^2\le a\le \sqrt{a}$ when $a\in [0,1]$ and $1+x\le e^x$. 
\end{proof}

  The next property follows from the recurrence relation $$T_{r+1}(x)=2x\cdot T_r(x)-T_{r-1}(x)$$ {with initial conditions $T_0(x)=0$ and $T_1(x)=1.$ 

\begin{property}\label{boundcoefficients}
	 {For all $r\ge 0$, we have $\|T_r\|_1 \leq 3^r.$}\ignore{The sum of absolute values of coefficients of $T_r(x)$ is at most $3^r$.}
\end{property}


Following \cite{BEK99}, we write $g_r$ to denote the following polynomial of degree $r$:
$$
g_r(x)=\frac{1}{r+0.5}\cdot \left(\frac{T_0(x)}{2}+T_1(x)+\cdots+T_r(x)\right). 
$$

We need the following properties of the polynomial $g_r$.
Items 1, 2 and 3 of Property \ref{BEKPO} follow directly from Properties \ref{lalu1}, \ref{hehehe} and \ref{boundcoefficients}, respectively.
For item 4 we have
$$
g_r(\cos y)=\frac{1}{r+0.5}\cdot \big(0.5+\cos y+\cos 2y+\cdots+\cos ry\big)
=\frac{1}{r+0.5}\cdot \frac{\sin (r+0.5)y}{\sqrt{2(1-\cos y)}},
$$ 
for all $0<y\le \pi$. This implies that for all $x\in [-1,1)$, we have
$$
|g_r(x)|\le \frac{1}{r+0.5}\cdot \frac{1}{\sqrt{2(1-x)}}\le \frac{1}{r\sqrt{2(1-x)}}.
$$
\begin{property}\label{BEKPO}The polynomial $g_r$ satisfies the following properties.
	\begin{enumerate}
	\item $g_r(1)=1$ and $|g_r(x)|\le 1$ for all $|x|\le 1$;
	
	\item
	$1 \le g_r(1+a)\le e^{3r\sqrt{a}}$ for all $a\in [0,1]$;\vspace{-0.01cm} 
	\item $\|g_r\|_1 \leq 3^r$; and\vspace{-0.03cm}
	\item $
	|g_r(x)| 
	\le \frac{1}{r\sqrt{2(1-x)}}$ for all $x \in [-1,1)$. 
	\end{enumerate}
\end{property}

\subsection{Proof of Lemma~\ref{uuuuu}}

Recall the statement of Lemma~\ref{uuuuu}:

\medskip

\noindent {\bf Lemma~\ref{uuuuu}, restated.}  
\emph{
There is a univariate polynomial $h$ with the following properties:
\begin{enumerate}
\item $h$ has degree $O(\sqrt{m}\log m)$.
\item $h(0)=1$ and for each $b\in [m]$, 
$$|h(b)|\le \frac{1}{2^{\sqrt{b}}}\quad \text{and}\quad |h(-b)|\le e^{6\sqrt{b}\log m}.$$
\item  {$h$ satisfies $\|h\|_1 \leq \exp(O(\sqrt{m}\log m))$}.
\end{enumerate}
}

\medskip

\begin{proof}
Recall the polynomial $g_r$ in Section \ref{sec:chebyshev}.	
We use it to define a degree-$r$ polynomial $\psi_r$:
	$$
	\psi_r(x)=g_r\left(1-\frac{x}{m}\right).
	$$  
	Properties of $g_r$ directly imply the following properties of $\psi_r$:
	
\begin{enumerate}
\item $\psi_r(0)=1$;
\item For each $b\in [m]$, we have 
		$$
		|\psi_r(b)|\le \min\left(1, \frac{1}{r}\sqrt{\frac{m}{2b}}\right) \hspace{0.1cm}\quad\text{and}\quad
1\le \psi_r(-b)\le e^{3r\sqrt{b/m}}\hspace{0.05cm};$$ 
\item Finally, $\psi_r$ satisfies 
			$$
			\|\psi_r\|_1\le 3^r\cdot \left(1+\frac{1}{m}\right)^r.
			$$\end{enumerate} 

\def\tm{\tilde{m}}

Let $\tilde{m}=4^\beta$ be the smallest power of $4$ with $\tilde{m}\ge m$.
We use $\psi_r$ to define our  $h$ as follows:
$$
h(x)=\prod_{i\in [\beta]} \left(\psi_{\sqrt{\tm/4^{i-2}}}(x)\right)^{\sqrt{4^{i }}}.
$$  
First we have $h(0)=1$ and the degree of $h$ is at most
$$
\sum_{i\in [\beta]}  \sqrt{\frac{\tm}{4^{i-2}}}\cdot \sqrt{4^{i}}=O(\sqrt{\tm}\log_4 \tm)
=O(\sqrt{m}\log m).
$$
Next, given $b\in [m]$, let $i\in [\beta]$ be an integer such that 
   $4^{i-1}\le b\le 4^i$.
Then (using $\tm\ge m$)
$$
\left|\psi_{ \sqrt{\tm/4^{i-2}}}(b)\right|\le  \sqrt{\frac{4^{i-2}}{\tm}}\cdot \sqrt{\frac{m}{2b}}
\le \sqrt{\frac{4^{i-2}}{m}}\cdot \sqrt{\frac{m}{2\cdot 4^{i-1}}}<\frac{1}{2}.
$$
Using $|\psi_r(b)|\le 1$ for all $r$, we have that 
$$
|h(b)|\le  \frac{1}{2^{\sqrt{4^{i }}}}\le \frac{1}{ 2^{\sqrt{b}}}.
$$
On the other hand, we have for each $b\in [m]$ (using $\tm\le 4m$
  and that $m$ is asymptotically large),
$$
 h(-b)\le \exp\left(
3\sqrt{b/m}\sum_{i\in [\beta]} \sqrt{\tm/4^{i-2}}\cdot \sqrt{4^{i }}
\right)=\exp\left({24\sqrt{b}\hspace{0.06cm} \log_4 \tm }\right)
\le \exp\left(24\sqrt{b}\log m\right).
$$
Finally, the sum of magnitudes of coefficients of $h$ is at most
$$
\prod_{i\in [\beta]} \left(3^{\sqrt{\tm/4^{i-2}}}\cdot 2\right)^{\sqrt{4^{i}}}
=\exp\big(O(\sqrt{m}\log m)\big).
$$
This finishes the proof of the lemma.
\end{proof}

\end{document}